\DeclareMathAlphabet{\mathcalligra}{T1}{calligra}{m}{n}
\DeclareFontShape{T1}{calligra}{m}{n}{<->s*[2.2]callig15}{}
\newcommand{\scriptr}{\mathcalligra{r}\,}
\newtheorem{conjecture}{Conjecture}
\def \be  {\begin{equation}}
\def \ee  {\end{equation}}
\def \bea {\begin{equation}\begin{aligned}}
\def \eea {\end{aligned}\end{equation}}
\def \ba  {\begin{eqnarray}}
\def \ea  {\end{eqnarray}}
\def \bb  {}
\def \lab #1 {\label{#1}}
\newcommand\cD{\mathcal{D}}
\newcommand\cG{\mathcal{G}}
\newcommand\cJ{\mathcal{J}}
\newcommand\cM{\mathcal{M}}
\newcommand\cN{\mathcal{N}}
\newcommand\cU{\mathcal{U}}
\newcommand\cV{\mathcal{V}}
\newcommand\gf{\mathfrak{g}}
\newcommand\hf{\mathfrak{h}}
\newcommand\glf{\mathfrak{gl}}
\newcommand\slf{\mathfrak{sl}}
\newcommand\pslf{\mathfrak{psl}}
\newcommand{\ph}[1]{\phantom{-}#1}
\newcommand{\hhat}{\widehat{\mathfrak h}}
\newcommand{\wt}[1]{\widetilde{#1}}
\newcommand\tr{\mathrm{Tr}}
\newcommand{\cf}{\emph{cf}.}
\newcommand{\ie}{\emph{i.e.}}
\definecolor{cardinal}{rgb}{0.6,0,0}
\definecolor{darkgreen}{rgb}{0,0.5,0}
\definecolor{golden}{rgb}{0.92, 0.7, 0}
\definecolor{midnight}{rgb}{0, 0, 0.5}
\definecolor{darkblue}{rgb}{0.2, 0, 0.8}
\newtheorem{theorem}{Theorem}
\theoremstyle{definition}
\newtheorem{definition}{Definition}[section]
\newif\ifniklas\niklastrue
\newwrite\bibinl@out
\newtheorem{lemma}{Lemma}
\title{Free field realisation of boundary vertex algebras\\ for Abelian gauge theories in three dimensions}
\author[1]{Christopher Beem}
\affiliation[1]{Mathematical Institute, University of Oxford\\
Radcliffe Observatory Quarter, Woodstock Road, Oxford OX2 6GG, United Kingdom}
\emailAdd{christopher.beem@maths.ox.ac.uk}
\author[2]{and Andrea E. V. Ferrari}
\affiliation[2]{Department of Mathematical Sciences, Durham University\\
Upper Mountjoy, Stockton Road, Durham DH1 3LE, United Kingdom}
\emailAdd{andrea.ferrari@durham.ac.uk}
\abstract{We study the boundary vertex algebras of $A$-twisted $\cN=4$ Abelian gauge theories in three dimensions. These are identified with the BRST quotient (semi-infinite cohomology) of collections of symplectic bosons and free fermions that reflect the matter content of the corresponding gauge theory. We develop various free field realisations for these vertex algebras which we propose to interpret in terms of their localisation on their associated varieties. We derive the free field realisations by bosonising the elementary symplectic bosons and free fermions and then calculating the relevant semi-infinite cohomology, which can be done systematically. An interesting feature of our construction is that for certain preferred free field realisations, the outer automorphism symmetry of the vertex algebras in question (which are identified with the symmetries of the Coulomb branch in the infrared) are made manifest.}
\begin{document}
\today
\maketitle

\newpage

\section{\label{sec:intro}Introduction}

Topologically twisted $\cN=4$ gauge theories in three dimensions admit holomorphic boundary conditions that support boundary vertex operator algebras (VOAs). Constructions of these VOAs were announced in~\cite{Gaiotto:2016wcv,Gaiotto:2017euk} and subsequently elaborated upon in~\cite{Costello:2018fnz,Costello:2018swh} (see also \cite{Creutzig:2021ext,Ballin:2022rto,Garner:2022vds,Garner:2023pmt} for related developments related to these vertex algebras, and in particular the recent work \cite{Yoshida:2023wyt} which studies some of the same examples that are discussed here).

Three-dimensional $\cN=4$ theories admit two topological twists, the $A$- (also H-) and the $B$- (also C-)twists. In the $A$-twist, standard (perturbative) boundary conditions that are compatible with the VOA construction are deformations of $(0,4)$ Neumann boundary conditions. Alternatively in the $B$-twist there are standard perturbative boundary conditions given by deformations of Dirichlet boundary conditions. The two choices of twist/holomorphic boundary condition lead to very different descriptions for the (distinct) associated VOAs: in the $A$-twist they are BRST reductions of symplectic bosons and boundary fermions, while in the $B$-twist they are presented as (essentially) WZW models based on Lie superalgebras. We will refer to the former as ``$A$-twisted VOAs'' and to the latter (though we rarely mention them) as ``$B$-twisted VOAs''. (For brevity, we sometimes use VOA also in the case of vertex operator superalgebras.)

The main aim of the present paper is to elucidate the $A$-twisted VOAs for a large class of Abelian gauge theories by providing them with free field realisations. Our initial approach was inspired by the constructions of free field realisations for VOAs associated to four-dimensional $\cN=2$ SCFTs in, \emph{e.g.}, \cite{Beem:2019tfp,Beem:2021jnm,Beem:2022vfz}. Those free field realisations have a distinctly geometric flavour---they exploit the geometry of the Higgs branch of the theories in question, namely the existence of certain (Zariski) open subsets that can be identified (as holomorphic symplectic varieties) with simple cotangent bundles of the form $(T^\ast \mathbb{C}^\times)^m\times (T^\ast \mathbb{C}^\times)^n$ for some non-negative integers $m$ and $n$. A motivation for those constructions is the (speculative) expectation that the vertex algebras $\cV$ associated to four-dimensional SCFTs should be realisable as the algebras of global sections of some kind of sheaf of vertex algebras on the \emph{associated variety} $X_\cV$. (See, \emph{e.g.}, \cite{Arakawa:2011vpo} for rigorous work in a similar direction but in a somewhat distinct setting; we do not pursue any such careful analysis here.)

We recall (briefly) the notion of the associated variety of a VOA $\mathcal{V}$ \cite{Arakawa:2010ni}. Let $C_2(\cV)$ denote the vector subspace of $\cV$ spanned by all states that can be expressed as a normally ordered product of operators at least one of which has been differentiated. This is an ideal with respect to the normally ordered product, and the quotient,
\begin{equation}
    R_\cV \coloneqq \mathcal{V}/C_2(\mathcal{V})~,
\end{equation}
inherits the structure of a (super)commutative $\mathbb{C}$-algebra dubbed \emph{Zhu's $C_2$ algebra} \cite{zhu1990vertex}. (In fact, it further inherits from the simple pole in the OPE for $\cV$ a Poisson structure). Discarding nilpotent elements of this algebra, one defines the associated variety,
\begin{equation}
    X_\cV \coloneqq {\rm Spec}\,\left(R_\cV\right)_{\rm red}~,
\end{equation}
which is in general a complex, affine Poisson variety.

In the context of VOAs arising in four dimensions, the \emph{Higgs branch conjecture} of \cite{Beem:2017ooy} postulates the identification of the associated variety with the Higgs branch of vacua of the corresponding SCFT. This in particular would imply that the vertex algebras arising from four-dimensional SCFTs always have associated varieties that are not only Poisson varieties, but symplectic in the sense of having only a finite number of symplectic leaves---such vertex algebras have been designated \emph{quasi-lisse} \cite{Arakawa:2016hkg}.

Though it is surely expected by many experts (the expectation is briefly articulated in~\cite{Costello:2018swh}), to our knowledge the generalisation of this conjecture to three dimensions has not been well advertised. Indeed, in three dimensions there is potentially more variation possible as there is in principle a choice of (deformable) boundary condition involved in the specification of a boundary vertex algebra. Conservatively, we expect that at least in the general class of theories under consideration here an analogous result will hold:\footnote{The analogous statement for $B$-twisted vertex algebras and the Coulomb branch also seems very plausible.}
\begin{conjecture}[$A$-twisted Higgs Branch Conjecture]\label{conj:ass-var}
    For the boundary vertex algebras of $A$-twisted gauge theories with purely free fermion boundary degrees of freedom, the associated variety and the Higgs branch of the corresponding gauge theory are isomorphic as Poisson varieties.
\end{conjecture}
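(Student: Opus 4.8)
The plan is to prove the conjecture by computing Zhu's $C_2$ algebra $R_\cV$ directly from the BRST presentation of $\cV$ and identifying its reduced spectrum with the holomorphic symplectic (Hamiltonian) reduction that defines the Higgs branch. For an Abelian theory with gauge group $G$ a torus and matter valued in a symplectic vector space $V$, the Higgs branch is $\mu^{-1}(0)/\!/G$, where $\mu$ is the hypermultiplet complex moment map. The organising principle is that the $C_2$ functor should intertwine semi-infinite cohomology with classical Hamiltonian reduction, so that
\[
X_\cV \;\cong\; \mu^{-1}(0)\,/\!/\,G\,.
\]

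First I would fix the $C_2$ algebras of the elementary building blocks. The symplectic bosons contribute the polynomial ring $\C[V]$ with its canonical Poisson bracket; the boundary free fermions contribute an exterior algebra whose (even) generators are nilpotent and hence disappear upon passing to $(\,\cdot\,)_{\mathrm{red}}$, so that they affect $R_\cV$ only through the Poisson structure and the gauge current; and the ghost $bc$-systems supply the Chevalley--Eilenberg and Koszul directions. Next I would equip the BRST complex with the conformal-weight (Li) filtration. On the associated graded the vertex-algebra differential should degenerate to the classical BRST differential $d_{\mathrm{K}}+d_{\mathrm{CE}}$ acting on $\C[V]\otimes\Lambda^\bullet\gf^\ast\otimes\Lambda^\bullet\gf$, in which the Koszul part $d_{\mathrm{K}}$ enforces $\mu=0$ and the Chevalley--Eilenberg part $d_{\mathrm{CE}}$ extracts $G$-invariants. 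Computing the cohomology of this classical complex then yields $\C[\mu^{-1}(0)]^{G}=\C[\mu^{-1}(0)/\!/G]$ concentrated in ghost number zero, that is, the coordinate ring of the Higgs branch. Finally, running the spectral sequence of the filtration and arguing its degeneration identifies $\mathrm{gr}\,R_\cV$ with this classical answer, whence the claimed equality of associated variety and Higgs branch.

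The hard part will be controlling the interaction between $C_2$ and cohomology: one must show that the spectral sequence degenerates (so that $R$ of the cohomology equals the cohomology of the classical complex, with no higher-ghost-number surprises) and that the moment map $\mu$ is flat, or at least a complete intersection, so that the Koszul complex is a genuine resolution of $\C[\mu^{-1}(0)]$ and $\mu^{-1}(0)$ carries no excess components. The restriction to \emph{purely free fermion} boundary matter is precisely what should render this tractable: it fixes the level of the gauge current algebra so that the semi-infinite cohomology is well defined, and it prevents the boundary degrees of freedom from contributing additional bosonic directions that would enlarge or obstruct the reduction. A secondary subtlety is matching reduced structures---ruling out nilpotents in $R_\cV$ beyond those seen classically---which should follow once flatness of $\mu$ is in hand. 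Independent corroboration is provided by the free-field realisations constructed in the body of the paper: by exhibiting $\cV$ in terms of fields living on an open cover of $\mu^{-1}(0)/\!/G$, they pin down a dense open subset of $X_\cV$ and furnish a concrete check on the $C_2$ computation.
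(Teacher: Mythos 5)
The first thing to note is that the paper does not prove this statement: it is stated and labelled as a conjecture, and the only place it is established is in the single example of $T[\mathrm{SU}(2)]$, where the argument is quite different from yours --- there the authors exhibit an explicit null vector coming from the conformal embedding $V_1(\slf_2)\otimes V_{-1}(\slf_2)\hookrightarrow V_1(\pslf(2|2))$ (the Sugawara identity), deduce $X_\cV\subseteq\overline{\mathbb{O}_{\rm min}(\slf_2)}$, and then rule out the zero-dimensional possibility by showing $(L_{11})^n$ is never null. So there is no general proof in the paper to compare against, and your text should be read as a proof \emph{strategy}, not a proof.

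As a strategy it is the natural one, but the two steps you yourself flag as ``the hard part'' are precisely the open content of the conjecture, and as written they would not go through without substantial new input. First, the degeneration of the spectral sequence relating $R$ of the semi-infinite cohomology to the classical BRST cohomology of $R_{\mathcal{X}}$ is not something the Li filtration gives you for free: the semi-infinite cohomology with respect to the level-zero loop algebra $(\wt{\glf}_1)^k$ is not a finite Koszul/Chevalley--Eilenberg complex, the vanishing theorem the paper actually uses (Voronov's) is applied only after bosonisation and only to the lattice factor $V_{\boldsymbol\eta}$, and there is no a priori reason that taking $C_2$ commutes with this cohomology. Second, even granting degeneration, the conjecture concerns the vertex algebra the authors expect to arise physically, which they take (itself conjecturally) to be the \emph{simple} quotient; passing to the simple quotient can only add null vectors and hence shrink $X_\cV$, and controlling this requires exactly the kind of non-nullness argument the paper supplies by hand in its example. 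Finally, flatness of $\mu_{\mathbb{C}}$ (so that the Koszul complex resolves $\mathbb{C}[\mu_{\mathbb{C}}^{-1}(0)]$ and there are no excess components) is not automatic outside the unimodular Abelian setting to which the paper restricts, whereas the conjecture is stated for general $A$-twisted gauge theories. None of this makes your outline wrong --- it is a sensible programme, and the free-field realisations do corroborate it on open patches --- but each of the named steps is a genuine gap rather than a technicality.
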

If this is indeed correct, then these constructions in three dimensions will provide another rich source of quasi-Lisse vertex algebras to complement those originating in four dimensions.

Partly in light of the conjecture above, we develop in this work free field realisations for $A$-twisted boundary vertex algebras for Abelian gauge theories that are motivated by the same philosophy of localisation on the Higgs branch/associated variety. For relatively simple theories (with simple Higgs branches given by, \emph{e.g.}, Kleinian singularities of type $A$), we have found that the procedures employed while studying four-dimensional vertex algebras, though somewhat \emph{ad hoc}, can be successfully applied in the present setting as well. The principle technical difference is now the presence of free fermions in the free field realisations. (These free fermions trace their existence back to the boundary degrees of freedom in the gauge theory description; from the point of view of the free field realisation, they play an analogous role to that of the ``residual degrees of freedom'' on the Higgs branch in the four-dimensional setting).

For general gauge theories such a case-by-case analysis would become burdensome. Fortunately, we find that the realisations in question can alternatively be \emph{derived from first principles} by exploiting the relative simplicity of semi-infinite cohomology for Abelian gauge groups. The trick, so to speak, is to bosonise the elementary symplectic bosons and free fermions and embed the entire calculation into the context of a lattice vertex superalgebra. An appropriate change of basis then allows the action of the gauge symmetry to be sequestered from the remaining degrees of freedom, and the corresponding semi-infinite cohomology computation trivialises.

\emph{A posteriori}, we can understand our bosonisation approach to constructing these vertex algebras as a chiral version of a similar method for studying Higgs branch. For Abelian $\cN=4$ gauge theories in three dimensions, the Higgs branch is a hypertoric variety (\emph{cf.} \cite{bielawski2000geometry,proudfoot2004hyperkahler,proudfoot2008survey}). These are simple examples of hyperk\"ahler quotients (in particular, holomorphic symplectic quotients) and their geometry has been studied extensively and is well understood. Bosonisation of a symplectic boson vertex algebra amounts to the embedding
\begin{equation}
    \cD^{ch}(\mathbb{C})\hookrightarrow \cD^{(ch)}(\mathbb{C}^\times)~,
\end{equation}
where $\cD^{(ch)}(X)$ denotes global sections of the sheaf of chiral differential operators on $X$ \cite{MelnikovCDO}, and $\cD^{ch}(\mathbb{C})$ can be thought of as the chiralisation of the algebra of functions on $T^\ast\mathbb{C}$. In this identification the base $\mathbb{C}$ is the space parameterised by one of two scalars in the hypermultiplet. In terms of Higgs branches, the analogous replacement is therefore
\begin{equation}
    \mathbb{C}[V]\hookrightarrow \mathbb{C}[V^\ast]~,
\end{equation}
where $V\cong T^\ast\mathbb{C}^N$ is the representation space for the hypermultiplet scalars in the theory, and $V^\ast\cong(T^\ast\mathbb{C}^\times)^N$. Performing symplectic reduction on $V^\ast$ in general produces an open subset of the very simple form,
\begin{equation}\label{eq:general_resolved_embedding}
    (T^\ast\mathbb{C}^\times)^{\dim_{\mathbb{H}}\cM_H}\hookrightarrow \cM_{H,\zeta}~,
\end{equation}
where $\cM_{H,\zeta}$ is a \emph{resolution of} the Higgs branch $\cM_H$.

There are a multiplicity of inequivalent ways of identifying $V$ as a cotangent bundle of $\mathbb{C}^N$, and correspondingly there are multiple bosonisations of the free fields which lead to inequivalent free field realisations of the same $A$-twisted vertex algebras. An interesting phenomenon---which we can already observe in our simplest example---is that for certain choices of free field realisation, the outer automorphism symmetry of the $A$-twisted vertex algebra (corresponding to the Coulomb branch flavour symmetry of the same $\cN=4$ theory) lifts to a symmetry of the free field vertex algebra and manifestly acts on the relevant subalgebra. On the other hand, for the general realisation this may not be the case. We show that outer automorphism covariance is a consequence of the free field realisation in question being associated to an embedding \eqref{eq:general_resolved_embedding} that instead targets the singular (unresolved) Higgs branch. This aligns nicely with the physical fact that general Higgs branch resolutions break the Coulomb branch symmetry group to its maximal torus.

The structure of the remainder of the paper is as follows. In Section~\ref{sec:tsu2} we introduce our main example, which is the $A$-twisted vertex algebra for the theory $T[{\rm SU}(2)]$. We develop this example in a narrative fashion, starting out with an \emph{ad hoc} construction of the associated vertex algebra $V_1(\pslf(2|2))$ in the spirit of previous four-dimensional work, before re-deriving the same result in a systematic fashion by directly computing the relevant semi-infinite cohomology in terms of bosonised degrees of freedom. We also study a second free field realisation associated with an alternative localisation of the Higgs branch, but which does not exhibit manifest outer automorphism symmetry. In Section~\ref{sec:general_abelian}, we generalise the methods used in the main example to apply to all Abelian gauge theories with unimodular charge matrices, thus giving a completely systematic procedure for producing (multiple) free field realisations for all of the corresponding $A$-twisted vertex algebras. In Section \ref{sec:fin-geom} we develop the relationship of our various free field realisations to an analogous construction purely in terms of Higgs branches, and identify a connection between the preferred localisations that give rise to outer automorphism-invariant free field realisations and certain good open charts on the \emph{singular} Higgs branch. In Section~\ref{sec:voa_examples}, we illustrate our general procedure in a large class of concrete examples, whose Higgs branches are $A$-type Kleinian singularities and $A$-type minimal nilpotent orbit closures, respectively.

\medskip

\paragraph{Remark.} A closely related set of vertex algebras has been considered previously by Kuwabara in \cite{kuwabara2021vertex}. In that work, starting with the same set of symplectic bosons, the author includes auxiliary, positive-level Heisenberg current algebras rather than (physically motivated) free fermions to cancel the anomaly in the BRST operator of the na\"ive gauge theory degrees of freedom. Amongst other things, this has the affect of changing the resulting vertex algebras so they are not quasi-lisse. Furthermore, in that work the author calculates the relevant semi-infinite cohomology by employing a sophisticated localisation construction involving sheaves of $\hbar$-adic vertex algebras tailored to a particular choice of resolution of the associated hypertoric variety. We expect that a similar approach in the setting involving free fermions would reproduce a subset of our free field realisations, though perhaps not the preferred realisations that manifest the Coulomb branch outer automorphisms.

\section{\label{sec:tsu2}Main example}

To develop our approach, we will focus in the first instance on a simple but instructive example: the self-mirror theory $T[\mathrm{SU}(2)]$. Our treatment will be somewhat idiosyncratic, so as to lend itself to generalisation in Section~\ref{sec:general_abelian}. 

The theory $T[\mathrm{SU}(2)]$ is realised by the rank-one Abelian gauge theory (so with gauge group $T_G=\mathrm{U}(1)$) with matter hypermultiplets transforming in the quaternionic representation $V=T^\ast \mathbb{C}^2$, with $\mathbb{C}^2$ representing to two copies of the standard (charge-one) representation. We will encode this choice of matter representation in a charge vector $Q=(1,1)$. We denote by $T_V\cong \mathrm{U}(1)^2$ the maximal torus of the symplectic automorphism group of $V$, which we take to act with charges $(1,0)$ and $(0,1)$ on the base directions of $V$ and with the opposite charges on the cotangent fibres. The flavour torus is given by $T_H=T_V/T_G\cong \mathrm{U}(1)$. We choose representatives for this coset to act with charges $q=(0,1)$ on the base directions of $V$ (and correspondingly $(0,-1)$ on the fibre directions). 

Collectively, the symmetry data described above can be encoded in a square matrix as follows,
\begin{equation}
    \mathbf{Q} \coloneqq 
    \begin{pmatrix} Q \\ q \end{pmatrix} 
    =
    \begin{pmatrix} 1 & 1 \\ 0 & 1 \end{pmatrix}~.
\end{equation}
Here we have implicitly chosen bases for the weight lattices $\mathfrak{t}^\ast_{G,\mathbb{Z}}$, $\mathfrak{t}^\ast_{V,\mathbb{Z}}$, $\mathfrak{t}^\ast_{H,\mathbb{Z}}$ in the (dual) Lie algebras $\mathfrak{t}^\ast_{G,\mathbb{R}}$, $\mathfrak{t}^\ast_{V,\mathbb{R}}$, $\mathfrak{t}^\ast_{H,\mathbb{R}}$ of $T_G$, $T_V$, $T_H$, as well as a choice of splitting (determined by $q^T$, \ie, the choice of representatives for the action of the flavour torus) of the short exact sequence,
\begin{equation}\label{eq:TSU2-SES}
    0 \rightarrow \mathfrak{t}_{G,\mathbb{Z}} \cong \mathbb{Z} \xrightarrow{Q^T}  \mathfrak{t}_{V,\mathbb{Z}} \cong \mathbb{Z}^2 \xrightarrow{\wt{Q}}  \mathfrak{t}_{H,\mathbb{Z}} \cong \mathbb{Z} \rightarrow 0~.
\end{equation}
The map $\wt{Q} = (1,-1)$ in turn encodes the weights of the $\wt{T}_G\cong \mathrm{U}(1)$ gauge action in a mirror Abelian gauge theory, while the mirror flavour weights are encoded in the projection matrix $\mathfrak{t}_{V,\mathbb{Z}} \xrightarrow{\tilde q} \mathfrak{t}_{G,\mathbb{Z}}$. Combining again, one reads off the symmetry data of the mirror theory as follows,
\begin{equation}\label{eq:mirror-sqed2-def}
    \wt{\mathbf{Q}} \coloneqq 
    \begin{pmatrix}
        \wt{q}\\ \wt{Q}
    \end{pmatrix} 
    = \left(\mathbf{Q}^{-1}\right)^{T} =
    \begin{pmatrix}
        \ph{1} & 0 \\ -1 & 1
    \end{pmatrix}~.
\end{equation}
By a symplectic automorphism of $T^\ast \mathbb{C}^2$ and a change of basis for $\mathfrak{t}^\ast_{G,\mathbb{Z}}$, this is equivalent to the data appearing in \eqref{eq:TSU2-SES} and we observe the self-mirror duality of this theory.

\subsection{\label{subsec:tsu2_higgs_coulomb}\texorpdfstring{$T[\mathrm{SU}(2)]$}{T[\mathrm{SU}(2)]} Higgs and Coulomb branches}

We briefly recall the description of the Higgs and Coulomb branches of the moduli space of vacua of $T[\mathrm{SU}(2)]$. As a consequence of $\mathcal{N}=4$ supersymmetry, these are both hyperk\"ahler and in particular holomorphic symplectic. (The hyperk\"ahler structure will play no significant role in this paper.) As the theory is self-mirror, they are in fact isomorphic as symplectic varieties.

\subsubsection{\label{subsubsec:TSU2_Higgs}\texorpdfstring{$T[\mathrm{SU}(2)]$}{T[\mathrm{SU}(2)]} Higgs branch}

Viewing $V\cong T^\ast \mathbb{C}^2$ as a hyperk\"ahler vector space, the action of $T_G$ on $V$ is tri-Hamiltonian, and the Higgs branch of vacua is precisely the hyperk\"ahler quotient,
\begin{equation}
   \cM_H =  V /\!\!/\!\!/ T_G~.
\end{equation}
Picking the standard complex structure on $T^\ast \mathbb{C}^2$, the hyperk\"ahler moment map splits into a complex part $\mu_{\mathbb{C}}$ and a real part $\mu_{\mathbb{R}}$. Adopting conventional physics notation, we write the basic coordinate functions on $\mathbb{C}^2$ as $X_i$, $i=1,2$, and the coordinate functions on the cotangent directions as $Y^i$. The moment maps are then written in components according to
\begin{equation}
\begin{split}
    \mu_{\mathbb{C}} &= \sum_{a=1}^2 Q_i X_iY^i= X_1Y^1 +X_2Y^2~,\\
    \mu_{\mathbb{R}} &= \sum_{a=1}^2 \frac{Q_a}{2}\left( |X_a|^2-|Y^a|^2\right) = \frac{1}{2}\left( |X_1|^2- |Y^1|^2 + |X_2|^2- |Y^2|^2\right)~.
\end{split}
\end{equation}
The hyperk\"ahler quotient is given by
\begin{equation}\label{eq:hyp-quot}
   \cM_H \cong \mu^{-1}_{\mathbb{C}}(0) \cap \mu_{\mathbb{R}}^{-1}(0)/ T_G~,
\end{equation}
which, as an algebraic symplectic variety, can be realised alternatively as a holomorphic symplectic quotient,
\begin{equation}\label{eq:higgs-quot}
    \cM_H \cong \mu_{\mathbb{C}}^{-1}(0) /\!\!/ T_{G,{\mathbb{C}}}~.
\end{equation}
This is the affine GIT quotient of the zero-locus of the complex moment map by the complexified gauge group $T_{G,\mathbb{C}} =\mathbb{C}^\times$. The Higgs branch chiral ring $R_H$ is then realised by $T_{G,\mathbb{C}}$-invariant holomorphic functions with the complex moment map set to zero,
\begin{equation}\label{eq:TSU2_Higgs_chiral_ring_sympquotient}
    R_H\cong\mathbb{C} \left[ \mu_{\mathbb{C}}^{-1}(0) \right]^{T_{G,\mathbb{C}}}~.
\end{equation}

In the present case, the gauge-invariant algebraic functions on $V$ are generated by the entries in the matrix
\begin{equation}\label{eq:tsu2_higg_branch_matrix}
    M=
    \begin{pmatrix}
        X_1 Y^1 & X_1 Y^2 \\
        X_2 Y^1 & X_2 Y^2 
    \end{pmatrix}~,
\end{equation}
and the vanishing of the complex moment map is equivalent to requiring $\tr\,M=0$. Imposing this tracelessness implies the relation $M^2=0$, and indeed the Higgs branch is the just the closure of the minimal (and in this case, unique) nilpotent co-adjoint orbit of ${\rm SL}(2,\mathbb{C})$, and the traceless matrix \eqref{eq:tsu2_higg_branch_matrix} is the corresponding $\slf_2$ moment map.

For our purposes later, let us introduce alternative notation for the matrix entries of $M$ in terms of a Chevalley basis for $\slf_2$,
\begin{equation}
    M=
    \begin{pmatrix}
        h/2 & e \\
         f & -h/2 
    \end{pmatrix}~.
\end{equation}
The nilpotency condition amounts to the relation $ef = -\frac{1}{4}h^2$, which characterises this nilpotent orbit closure as simplest Kleinian singularity,
\begin{equation}
    \cM_H \cong \overline{\mathbb{O}_{\rm min}(\slf_2)} \cong A_1 \cong \mathbb{C}^2/\mathbb{Z}_2~.
\end{equation}
This Higgs branch admits a natural resolution by introducing a suitable FI parameter $\zeta\in\mathfrak{t}^*_{G}\cong\mathbb{R}$. This is the bottom component of a vector multiplet for the topological symmetry, $T_C$, which we recall in the next subsection. In the presence of such an FI parameter, the Higgs branch is modified according to
\begin{equation}
   \cM_{H,\zeta}\cong\mu^{-1}_{\mathbb{C}}(0)\cap\mu_{\mathbb{R}}^{-1}(\zeta)/T_G~.
\end{equation}
Alternatively, this is the projective GIT quotient,
\begin{equation}
    \cM_{H,\zeta} \cong\mu_{\mathbb{C}}^{-1}(0)/\!\!/_{\zeta}T_{G,{\mathbb{C}}}~.
\end{equation}
This resolution just gives the cotangent bundle of $\mathbb{CP}^1$, 
\begin{equation}
    T^\ast\mathbb{CP}^1\twoheadrightarrow A_1~,
\end{equation}
with the zero section being the exceptional divisor.

\subsubsection{\label{subsubsec:TSU2_coulomb}\texorpdfstring{$T[\mathrm{SU}(2)]$}{T[\mathrm{SU}(2)]} Coulomb branch} 

Though we will primarily focus on the $A$-twist, which foregrounds the Higgs branch from the perspective of the boundary VOA, it will be helpful later to have at hand the description of the Coulomb branch of vacua, $\cM_C$, for this theory as well. As the theory is self mirror, ultimately of course one has $\cM_C\cong\cM_H$. 

The Coulomb branch is parameterised by the expectation value of the complex scalar field $\varphi$ in the vector multiplet as well that of certain monopole operators $v_{A}$ labelled by co-characters of the (mirror) gauge group $\wt{T}_G\cong \mathrm{U}(1)$,
\begin{equation}
    A: \mathrm{U}(1) \rightarrow \wt{T}_G~.
\end{equation}
After fixing conventions, co-characters for $\mathrm{U}(1)$ can be identified with integers---the corresponding monopole operators have the effect of modifying the gauge bundle at a point so that it has $A$ units of flux through a sphere surrounding the point. The Coulomb branch chiral ring will be generated by $\varphi$ and by the ``elementary'' monopole operators with $A=\pm1$. These monopole operators obey the (quantum mechanical) relation
\begin{equation}
    v_{1}v_{-1} = \varphi^2~.
\end{equation}
Up to some changes of normalisation, one can make the identification with generators of the Higgs branch chiral ring,
\begin{equation}
    \begin{pmatrix}
        \,\frac12 h\,\\ e \\ f
    \end{pmatrix}
    ~~\longleftrightarrow~~
    \begin{pmatrix}
        \varphi\\ v_+ \\ v_-
    \end{pmatrix}~,
\end{equation}
and so Higgs and Coulomb branches are seen to coincide as expected. (Here we have only briefly reviewed this identification at the level of algebraic structures; the two varieties further agree as holomorphic symplectic varieties.) Under this mirror map, the flavour torus $T_H$ (with corresponding moment map $h$) is mapped to the (UV) topological symmetry $T_C$ (with moment map $\varphi$), which measures the topological charge of the gauge bundle surrounding an operator.

\subsection{\label{subsec:tsu2_voa_BRST}Boundary VOA as semi-infinite cohomology}

Boundary VOAs in the $A$-twist were identified in~\cite{Costello:2018fnz} as a BRST analogue of the symplectic quotient construction of the Higgs branch, with additional (conventionally free fermion) degrees of freedom arising on the boundary that must be included in order to ensure anomaly cancellation. 

The present example was discussed in \cite{Costello:2018fnz} (and also \cite{Creutzig:2017uxh}), and we review and elaborate upon that discussion here. In this theory, the two hypermultiplets give rise to two pairs of symplectic bosons\footnote{In an abuse of notation, we will use the same letters for symplectic bosons and the corresponding functions on $V$/Higgs branch chiral ring operators.)} ($X_a,Y^a$), $a=1,2$. The boundary degrees of freedom contribute two pairs of complex fermions ($\chi_\alpha,\xi^\alpha$), $\alpha=1,2$. The singular OPEs of these gauge theoretic degrees of freedom are standard,
\begin{equation}
    Y^a(z)X_b(w)\sim \frac{\delta^a_b}{z-w}~,\qquad\chi_\alpha(z)\xi^\beta(w) \sim \frac{\delta_\alpha^\beta}{z-w}~.
\end{equation}
For future convenience, we denote the VOA generated by the $i$'th symplectic boson pair by $\texttt{Sb}_i$, and the one generated by $i$'th complex fermions by $\texttt{Ff}_i$. We denote the total ungauged VOA by $\mathcal{X}$,
\begin{equation}
    \mathcal{X}\coloneqq\texttt{Sb}_1\otimes\texttt{Sb}_2\otimes\texttt{Ff}_1\otimes\texttt{Ff}_2~.
\end{equation}
The $\mathrm{U}(1)$ gauge symmetry in $T[\mathrm{SU}(2)]$ corresponds to a level-zero (commutative) Heisenberg vertex subalgebra of $\mathcal{X}$, generated by the current\footnote{Here and throughout this paper, normal ordering,
\begin{equation}
    ab\cdots yz \equiv (a(b(\cdots( yz)\cdots))))~.
\end{equation}
is implied for composite operators in vertex operators algebras.}
\begin{equation}
    \cJ=X_aY^a+\chi_\alpha\xi^\alpha~.
\end{equation}
The mode algebra for this current is the loop algebra $\wt{\glf}_1\cong\glf_1\otimes\mathbb{C}[t,t^{-1}]$ (without central extension due to the vanishing level). BRST reduction is performed concretely by introducing a $(b,c)$ ghost system (of weight $(1,0)$),
\begin{equation}
    b(z)c(w)\sim\frac{1}{z-w}~,
\end{equation}
and forming the Feigin standard complex $\mathcal{X}^{\bullet} = \mathcal{X} \otimes (b,c)^\bullet$. The cohomological grading is by ghost number (with $c$ assigned ghost number $+1$ and $b$ assigned ghost number $-1$), and the differential is identified with the zero-mode of the BRST current,
\begin{equation}\label{eq:psl22-BRST}
    J^{\text{BRST}}= c\cJ~,\qquad d = J^{\text{BRST}}_0~.
\end{equation}
The VOA for the gauge theory is then identified with the cohomology of this BRST operator acting on the subcomplex of $\mathcal{X}^\bullet$ that is annihilated by the zero modes of the $b$-ghost and the gauge current,
\begin{equation}
    \mathbb{V}_{T[\mathrm{SU}(2)]} = H^\bullet_{d}\left(\psi\in\mathcal{X}^\bullet\,|\,\cJ_0\psi=b_0\psi=0\right)~.
\end{equation}
This is the standard model for the $(\gf,\gf_0)=(\wt{\glf}_1,\glf_1)$ \emph{relative  semi-infinite cohomology} of $\mathcal{X}$ \cite{voronov1994semi} (\cf\ also \cite{Arakawa:2018egx}), $H^{\frac{\infty}{2}+\bullet}(\gf,\gf_0,\mathcal{X})$.

Na\"ive candidates for strong generators of the VOA defined in this manner are the various affine currents realised as quadratic gauge invariants of the elementary bosons and fermions (with no additional derivatives), 
\begin{equation}\label{eq:tsu2_naive_gens}
    X_aY^b~,\qquad Y^a\chi_\alpha~,\qquad X_a\xi^\alpha~,\qquad \chi_\alpha\xi^\beta~.
\end{equation}
In the ungauged theory, these generate an affine current algebra which is (a quotient of) the universal level-one $V^1(\glf(2|2))$.\footnote{We adopt the convention that $V^k(\gf)$ denotes the universal affine current algebra of type $\gf$ at level $k$, and $V_k(\gf)$ denotes the unique simple quotient thereof.} In the gauged theory, the super-trace combination $X_aY^a-\chi_\alpha\xi^\alpha$ is not $d$-closed, so one should restrict to (a quotient of) the level-one subalgebra $V^1(\slf(2|2))$.

By Theorem 5.5 of \cite{Creutzig:2016ehb}, the $\slf(2|2)$ current algebra generated by the super-traceless combinations of \eqref{eq:tsu2_naive_gens} is precisely the Heisenberg coset of $\mathcal X$ by the Heisenberg algebra generated by $\mathcal J$. Furthermore, $\cJ$ itself, which is $d$-exact (it is equal to $d(b)$), is the ordinary trace combination of these currents and this is therefore removed by the cohomological quotient. What is left behind is a level-one affine $\pslf(2|2)$ current algebra, which we expect to be precisely the simple quotient vertex algebra $V_1(\pslf(2|2))$.\footnote{In vertex algebras arising from four-dimensional superconformal field theories as in \cite{Beem:2013sza}, simplicity is a consequence of four-dimensional unitarity. Though we suspect an analogous statement may hold for boundary vertex algebras arising on $\mathcal{N}=(0,4)$ boundaries in three-dimensional unitarity theories, we do not pursue such a general result here. Where it is unproven, it will be a standing conjecture that all vertex algebras constructed in this paper by free field methods are, as in the present example, in fact simple.} 

Concretely, this vertex algebra is generated by:
\begin{itemize}
    \item Affine currents $L_{ab}$, which generate a $V_{-1}(\slf_2)$ subalgebra.
    \item Affine currents $J^{\alpha\beta}$, which generate a $V_{1}(\slf_2)$ subalgebra.
    \item Fermionic currents $\Theta_{Aa}^\alpha$, which transform in the tri-fundamental representation of the (zero modes algebras of the) previous two $\slf(2)$ symmetries as well as an $\rm{SL}(2)_o$ outer automorphism symmetry (the index $A$ is the outer automorphism fundamental index).
\end{itemize}
These various currents correspond to the elements $\ell_{ab}$, $\scriptr^{\alpha\beta}$, and $(\vartheta_{(A)})_a^\alpha$ of the $\pslf(2|2)$ superalgebra, the structure of which (along with the precise OPEs for this current algebra) we provide in Appendix~\ref{app:psl22-conv}.

\medskip

Before moving on to develop a free field realisation, we would like to determine the associated variety for this vertex algebra. It follows from completely general considerations (due to the nilpotency of Grassmann-odd $\Theta$ generators at the level of Zhu's $C_2$ algebra) that we will have
\begin{equation}
    X_{V_1(\pslf(2|2))}\subseteq X_{V_1(\slf_2)}\times X_{V_{-1}(\slf_2)}\cong \{pt.\}\times \slf_2^\ast~.
\end{equation}
To proceed further requires some information about the null states of this VOA that do not lie in the $V_1(\slf_2)$ subalgebra. To find the state we need, it is sufficient to observe the existence of a conformal embedding \cite{Creutzig:2017uxh,Adamovic:2019age},
\begin{equation}\label{eq:conf_embedding}
    V_1(\slf_2)\otimes V_{-1}(\slf_2)\hookrightarrow V_1(\pslf(2|2))~.
\end{equation}
This is equivalent to the identity (in the simple quotient) between various Sugawara stress tensors,
\begin{equation}\label{eq:conf_embedding_null}
    T_{\rm sug}^{\,\pslf(2|2)_1} = T_{\rm sug}^{\,\slf(2)_1} + T_{\rm sug}^{\,\slf(2)_{-1}}~.
\end{equation}
Let us denote the \emph{un-normalised} Segal--Sugawara vectors for the current algebra of type $\hat{\gf}_k$ by
\begin{equation}
    S_{\gf_k} = \kappa_{AB}(J^AJ^B)~,
\end{equation}
where $\kappa_{AB}$ is the Killing form on $\gf$. The Sugawara stress tensor is given by 
\begin{equation}
    T_{\gf_k} = \frac{k\dim\gf}{k+h^\vee_\gf}S_{\gf}~.
\end{equation}
Then the null relation \eqref{eq:conf_embedding_null} becomes (substituting in the expression for the $\pslf(2|2)$ Segal--Sugawara vector) the following relation,
\begin{equation}
    S_{\pslf(2|2)} = -2\left(S_{\slf(2)_1} + \Theta_{Aa}^\alpha\Theta^{Aa}_\alpha - S_{\slf(2)_{-1}}\right) = S_{\slf(2)_1} -3 S_{\slf(2)_{-1}}~,
\end{equation}
which we rewrite as
\begin{equation}
    S_{\slf(2)_{-1}} = \frac{3}{5}S_{\slf(2)_1} + \frac{2}{5}\Theta_{Aa}^\alpha\Theta^{Aa}_\alpha~. 
\end{equation}
The right hand side is nilpotent in the Zhu $C_2$ algebra, so we see that the Segal--Sugawara vector of the $\slf(2)_{-1}$ subalgebra also vanishes at the level of the associated variety. In particular, this implies the inclusion
\begin{equation}
    X_{V_1(\pslf(2|2))}\subseteq \overline{\mathbb{O}_{\rm min}(\slf_2)}\subset \slf_2^\ast~,
\end{equation}
As the associated variety must admit an $\mathrm{SL}(2)$ action descending from the zero modes of the level $-1$ affine currents, it must be either the entirety of the nilpotent orbit closure above or a point.

One can see easily that the associated variety cannot be zero-dimensional. The state $(L_{11})^n$ is non-null for all $n$; moreover is the unique state of conformal dimension $n$ and charge $n$ with respect to $L_{12}$ so cannot participate in a null relation with any other states in the VOA. Thus we have the identification
\begin{equation}
    X_{V_1(\pslf(2|2))}\cong \overline{\mathbb{O}_{\rm min}(\slf_2)} \cong \cM_H~.
\end{equation}
In the remainder of this section we develop by two complementary approaches to the free field realisation of this vertex algebra.

\medskip

\paragraph{Remark.} This vertex algebra is the same as the one appearing in the description of string theory on $AdS_3\times S^3\times \mathbb{T}^4$ with minimal NS flux in the Berkowitz--Vafa--Witten hybrid formalism \cite{Berkovits:1999im}, which has recently been investigated in great detail in connection with the ${\rm AdS}_3/{\rm CFT}_2$ correspondence in \cite{Eberhardt:2018ouy}. In this context, in~\cite{Gaberdiel:2022als} the same vertex algebra was investigated from a similar perspective (in terms of symplectic bosons and free fermions), but without solving the semi-infinite cohomology problem directly and with a slightly different approach to the BRST realisation of the problem. It would be interesting to investigate whether the complete free field realisation presented here has some utility in this string theoretic setting.

\subsection{\label{subsec:free-field-ansatz}Free field realisation of \texorpdfstring{$V_1(\pslf(2|2))$}{V[1](psl)(2,2}}

In our initial treatment, we will develop a free field realisation for following the \emph{ad hoc} methodology employed in analogous situations connected to four-dimensional SCFTs in \cite{Beem:2019tfp,Beem:2021jnm,Beem:2022vfz}. 

The general strategy is that the $V_{-1}(\slf_2)$ subalgebra whose generators parameterise the associated variety/Higgs branch should be localised and realised in terms of a rank-one half-lattice vertex algebra $\cD^{(ch)}(\mathbb{C}^\times)$ along with an auxiliary Virasoro vector that fixes the level to the correct value. Extra degrees of freedom are then introduced that, roughly speaking, correspond to the degrees of freedom remaining after Higgsing the theory. These should be identified with the vertex algebra that remains after performing quantum Drinfel'd--Sokolov reduction. Though we will not carry out such an analysis here, on physical grounds our expectation is that they will again be given by two complex free fermions. These are used in order to construct the remaining strong generators of the original VOA, with the $\slf(2)_{-1}$ highest weight states taking a simple form dictated by conformal weights and charges, and the remaining generators recovered by acting with the $\slf(2)_{-1}$ lowering operator.

\subsubsection{Free field realisation of the bosonic \texorpdfstring{$V_{-1}(\slf_2)$}{V[-1](sl2)}}

Following our outline above, we first consider the ``bosonic'' $V_{-1}(\slf_2)$ vertex subalgebra generated by BRST images of the quadratic invariants built from hypermultiplet scalars. We introduce Chevalley generators $e=L_1^{\phantom{a}2}$, $f=L_2^{\phantom{1}2}$, and $h=L_1^{\phantom{1}1}-L_2^{\phantom{2}2}$, in terms of which the level $k=-1$ current algebra takes the form
\begin{equation}\label{eq:affine_sl2_opes}
\begin{split}
    h(z)h(w)&\sim \frac{-2}{(z-w)^2}~, 		\\
    h(z)e(w)&\sim \frac{\ph{2}e(w)}{z-w}~,  \\
    h(z)f(w)&\sim \frac{-2f(w)}{z-w}~, 		\\
    e(z)f(w)&\sim \frac{-1}{(z-w)^2}+ \frac{h(w)}{z-w}~.\\
\end{split}
\end{equation}
We realise this vertex algebra in the manner of \cite{Adamovic:2004zi} (see \cite{Beem:2019tfp} for the first appearance in the context of VOAs for higher dimensional supersymmetric theories). The realisation is in terms of a lattice extension of a rank-two Heisenberg vertex algebra and an auxiliary Virasoro stress tensor; we recall the details presently.

Introducing chiral bosons $\phi$ and $\delta$ with normalisation $(\phi,\phi) = -(\delta,\delta)=-2$ we set
\begin{equation}
    e \coloneqq e^{\phi + \delta}~,\qquad h \coloneqq-J_\phi\equiv -\partial\phi~.
\end{equation}
The $h \times e$ and $h\times h$ OPEs in \eqref{eq:affine_sl2_opes} are immediately realised. The lowering operator $f$ (with its attendant OPEs) is then realised according to
\begin{equation}
    f \coloneqq \left( -\frac{1}{4}(J_\delta)^2 +  T \right) e^{-\phi-\delta}~.
\end{equation}
where $T$ is the generator of an auxiliary Virasoro algebra with central charge $c_T=1$, so obeying
\begin{equation}
    T(z)T(w)\sim \frac{1/2}{(z-w)^4}+\frac{2T(w)}{(z-w)^2}+\frac{\partial T(w)}{z-w}~.
\end{equation}
The lattice extension (by vertex operators of the form $e^{n(\phi+\delta)}$) of the Heisenberg algebra generated by $J_\phi$ and $J_\delta$ is an example of a half-lattice vertex algebra in the sense of \cite{berman2002representations}, and can also be identified with the algebra of global sections of the sheaf of chiral differential operators (CDOs) \cite{MelnikovCDO} on the rank-one algebraic torus, $\mathcal{D}^{\rm ch}(\mathbb{C}^\times)$. This is the vertex algebra obtained from the symplectic boson $\beta\gamma$ VOA by formally inverting (say) $\gamma$. It was suggested in \cite{Beem:2019tfp} and subsequent work that this sort of a realisation should be thought of as the starting point for writing a localisation of the vertex algebra at hand on the (Zariski) open subset of its associated variety where $e\neq0$.

\subsubsection{Free field realisation of the fermionic \texorpdfstring{$V_{1}(\slf_2)$}{V[1](sl2)}}

We next realise the ``fermionic'' $V_1(\slf_2)$ in terms of additional free fields. This is more straightforward, in the sense that we can exploit a completely textbook realisation of this rational vertex algebra in terms of free fermions; indeed such a realisation is already an ingredient in the \emph{ungauged} vertex algebra $\mathcal{X}$. The motivation here is that we expect that after Higgsing the bulk three-dimensional gauge theory, then the microscopic deformable boundary conditions will continue to carry (the same number of) free fermion excitations, and it should be these residual degrees of freedom that accompany the CDOs of the previous subsection in reconstructing the UV vertex algebra.

To this end, we introduce two pairs of complex free fermions, $\psi_A^\alpha$ where $A = 1,2$, $\alpha = 1,2$, enjoying the singular OPE,\footnote{We use the conventions $\epsilon^{12}=\epsilon_{21}=1$.}
\begin{equation}\label{eq:ferm-OPE}
    \psi_A^\alpha(z) \psi_B^\beta(w) \sim \frac{\epsilon_{AB}\epsilon^{\alpha\beta}}{z-w}~.
\end{equation}
These should thought of as being distinct from the elementary gauge theory fermions $(\chi_\alpha,\xi^\alpha)$, though in due course we will understand their precise relationship. This free fermion VOA includes a $V_1({\mathfrak{so}}_4) \cong V_1({\slf}_2)\otimes V_1({\slf}_2)$ current subalgebra generated by all quadratic combinations of the elementary fermions, with $\alpha$ playing the role of a fundamental index for the first $\slf(2)$, and $A$ doing the same for the second.

For our purposes, we will use (say) the first factor as the required $V_1(\slf(2))$ subalgebra of our VOA, with generators
\begin{equation}
    J^{\alpha\beta} = -\frac{1}{2}\epsilon^{AB} \psi_A^\alpha\psi_B^\beta~.
\end{equation}
The zero modes of the second factor generate an $\mathrm{SL}(2)$ action which we will shortly identify with the outer automorphism group $\mathrm{SL}(2)_o$ of $V_1(\pslf(2|2))$ (see Appendix~\ref{app:psl22-conv}).

Let us define the Cartan generator for that other $V_1(\slf_2)$ current algebra,
\begin{equation}\label{eq:ferm-currents}
    J_\psi=-\frac{1}{2}\epsilon_{\alpha\beta}\psi_1^\alpha \psi_2^\beta~.
\end{equation}
This can be utilised to construct a Virasoro vector which has central charge one and commutes with the $J^{\alpha\beta}$ currents,
\begin{equation}
    T_J=(JJ)~.
\end{equation}
(Note that equivalently this is the Sugawara stress tensor for the second $V_1(\slf_2)$ current algebra, with the equivalence following from the level-two null vectors of that current algebra. Importantly, this means that this Virasoro is non-obviously a singlet of the global $\slf_2$ algebra generated by the zero modes of these currents.)

We can utilise this Virasoro vector as the abstract Virasoro vector appearing in the half-lattice realisation of the bosonic $\slf_2$ above. Altogether at this point we have a realisation
\begin{equation}
    V_{-1}(\slf_2)\otimes V_{1}(\slf_2)\hookrightarrow \mathcal{D}^{ch}(\mathbb{C}^\times)\otimes (\texttt{Ff})^{\otimes 2}\eqqcolon V_{\rm FFR}~.
\end{equation}

\subsubsection{\label{subsubsec:oddcurrents}Free field realisation of odd currents}

Finally we come to the Grassmann-odd generators $\Theta^{\alpha}_{Aa}$. These should have singular OPEs,
\begin{equation}
\begin{split}
\label{eq:oddalg}
    \Theta^{\alpha}_{A1} \Theta^{\beta}_{B1} &~\sim~ \frac{\epsilon_{AB}\epsilon^{\alpha\beta}e(w)}{z-w}~,\\ 
    \Theta^{\alpha}_{A1} \Theta^{\beta}_{B2} &~\sim~ \frac{\epsilon_{AB}\epsilon^{\alpha\beta}}{(z-w)^2}-\frac{\epsilon_{AB}J^{\alpha\beta}(w)+\epsilon_{AB}\epsilon^{\alpha\beta}h(w)/2}{z-w}~,\\ 
    \Theta^{\alpha}_{A2} \Theta^{\beta}_{B2} &~\sim~ -\frac{\epsilon_{AB}\epsilon^{\alpha\beta}f(w)}{z-w}~,\\
\end{split}
\end{equation}
where we have rewritten the relevant OPEs from Appendix~\ref{app:psl22-conv} non-covariantly in terms of the Chevalley basis for the bosonic subalgebra $V_{-1}(\slf_2)$.

Accounting for conformal weights and charges under the Cartan generators of the even current algebra, the highest-weight states with respect to bosonic $\slf(2)$ (the $a=1$ operators) can only take the form (up to scaling),
\begin{equation}\label{eq:tsu2_ffr_highest_weight_ansatz}
    \Theta^{\alpha}_{A1} = \psi_A^\alpha e^{\frac{1}{2}(\phi+\delta)}~,
\end{equation}
and the first OPE in \eqref{eq:oddalg} is reproduced with the given normalisation. Taking the OPE with $f$ and picking out the first order pole we can read off the required expression for the $a=2$ operators,
\begin{equation}\label{eq:tsu2_ffr_lowest_weight_computed}
    \Theta_{A2}^{\alpha} =  \frac{1}{2} \left(\partial \psi_{A}^{\alpha}-\psi_{A}^{\alpha}J_\delta + \frac{1}{3} \epsilon_{\beta\gamma}\epsilon^{BC}\psi_B^\alpha \psi_A^\beta \psi_C^\gamma \right)e^{-\frac{1}{2}(\phi+\delta)}~.
\end{equation}
Direct computation confirms that these operators satisfy the full complement of OPEs amongst themselves and with the even currents, so we have a free field realisation of some quotient of $V^1(\pslf(2|2))$, with the outer automorphism $\mathrm{SL}(2)_o$ generated precisely by the zero modes of the second $\slf(2)$ current algebra made from the free fermions. Whether or not these generate the simple quotient $V_1(\pslf(2|2)$ is not entirely evident at this point. In the remainder of this section we will re-derive the free field realisation so that the simplicity is guaranteed by earlier considerations.

\medskip

\paragraph{Remark.} It is interesting to note the manner in which the reduction to $\pslf(2|2)$ from $\slf(2|2)$ is accomplished in this free field realisation. We chose to identify the abstract Virasoro vector appearing in the realisation of the bosonic $V_{-1}(\slf_2)$ with the fermionic stress tensor $T_{\tilde J}$. Had we not done so, the same Ansatz \eqref{eq:tsu2_ffr_highest_weight_ansatz} would lead to a current algebra where the expression in \eqref{eq:tsu2_ffr_lowest_weight_computed} would have been different---in particular would be missing the third term in the parentheses. In the $\Theta\times\Theta$ OPEs, one would then find the current $J_\psi$ arising, playing the role of the ordinary trace generator in $\slf(2|2)$, and that must be removed to pass to $\pslf(2|2)$. So it is precisely the identification $T=T_J$ that leads to the reduced current algebra.

\subsection{\label{subsec:TSU2-BRST}Semi-infinite cohomology via bosonisation}

In the previous section we developed a free field realisation of $V_1(\pslf(2|2))$ by brute force, similar to what has been done previously for many four-dimensional VOAs. The problem at hand enjoys an advantage over analogous constructions in four dimensions owing to the Abelian nature of the semi-infinite cohomology~\eqref{eq:psl22-BRST}. Indeed, we will see below that by means of bosonisation we can derive the free field realisation of the last section directly from first principles. 

\subsubsection*{Outline of strategy}

Before getting started, we outline here the general strategy we will employ. The elementary symplectic bosons and free fermions (henceforth \emph{gauge theory variables}) can be bosonised so as to realise the ungauged vertex algebra as a subalgebra of a certain lattice vertex (super)algebra $V_L$,
\begin{equation}
    \mathcal{X}\hookrightarrow V_{L}^! \hookrightarrow V_L~.
\end{equation}
Here $V_{L}^!$ is a subalgebra of $V_L$ and is defined by restricting certain lattice momenta to lie in a degenerate sublattice---this is a generalisation of the half-lattice subalgebras of \cite{berman2002representations}. By performing a change of basis for the lattice and a minor refinement, the BRST differential can be restricted to act only within a rank-two sublattice vertex algebra $V_\eta$,
\begin{equation}
    V^!_L \hookrightarrow V^{!}_{\tilde L} \cong V_\eta\otimes \wt{V}_{\rm FFR}~.
\end{equation}

A relatively straightforward application of a vanishing theorem of Voronov for semi-infinite cohomology \cite{voronov1994semi} then shows that the (relative) semi-infinite cohomology of $V_\eta$ is one-dimensional (the trivial vertex algebra), so we have
\begin{equation}
    H^{\frac{\infty}{2}+\bullet}(\gf,\gf_0,V_{\tilde{L}}^!) \cong H^{\frac{\infty}{2}+\bullet}(\gf,\gf_0,V_\eta)\otimes \wt{V}_{\rm FFR} \cong V_{\rm FFR}~.
\end{equation}
where $V_{\rm FFR}$, the restriction of $V_L^!$ to no $\widetilde{\eta}$ momentum, is precisely the free field vertex algebra introduced above in our \emph{ad hoc} free field realisation.

The image of $\mathcal{X}$ in $V_{L}^!$ can be characterised as the intersection of kernels of two screening charges which, after changing basis, act trivially on $V_\eta$, so can be represented as screening charges acting on $V_{\rm FFR}$. We thus arrive at the result that the solution to the gauge theory BRST cohomology problem can be identified with the intersection of the kernel of two screening charges on $V_{\rm FFR}$. Direct computation then confirms that the generators of $V_1(\mathfrak{psl}(2|2))$ proposed above do indeed lie in this intersection of kernels.

\subsubsection{\label{subsubsec:bosonise_matter}Bosonising elementary fields and relating semi-infinite cohomologies}

The elementary symplectic bosons and free fermions appearing in the gauge theory description of this vertex algebra can all be bosonised, and expressed in terms of (subalgebras) of appropriate lattice vertex algebras. The details of the construction are reviewed in Appendix~\ref{app:lattice_bosons}. In the present example, this amounts to introducing Heisenberg vertex algebras $V_{\rho_i,\sigma_i}$ generated by currents $J_{\rho_i}$ and $J_{\sigma_i}$ and the Heisenberg vertex algebra $V_{\gamma_i}$ generated by the current $J_{\gamma_i}$, with
\begin{equation}
    J_{\rho_i}(z)J_{\rho_i}(w)\sim \frac{\ph{1}}{(z-w)^2}~, \qquad
    J_{\sigma_i}(z)J_{\sigma_i}(w)\sim \frac{-1}{(z-w)^2}~,\qquad
    J_{\gamma_i}(z)J_{\gamma_i}(w)\sim \frac{\ph{1}}{(z-w)^2}~.
\end{equation}
and taking lattice extensions
\begin{equation}
    V^!_{L_{\texttt{Sb}},i} = \bigoplus_{n\in\mathbb{Z}}V_{\rho_i,\sigma_i}\cdot e^{n(\rho_i-\sigma_i)}~,\qquad 
    V^!_{L_{\texttt{Ff}},i} = \bigoplus_{n\in\mathbb{Z}}V_{\gamma_i}\cdot e^{n\gamma_i}~.
\end{equation}
We then have the embedding
\begin{equation}\label{eq:TSU2-bos-embed}
    \mathcal{X}\hookrightarrow V_L^!\coloneqq V^!_{L_{\texttt{Sb}},1}\otimes V^!_{L_{\texttt{Sb}},2}\otimes V_{L_{\texttt{Ff}},1}\otimes V_{L_{\texttt{Ff}},2}~,
\end{equation}
In fact, there are several different such embeddings, corresponding to the choices of the $\mathfrak{i}^+$ or $\mathfrak{i}^-$ for each of the two symplectic bosons, as well as a corresponding choice of sign convention for each free fermion (see Appendix~\ref{app:lattice_bosons} for these embedding conventions). 

In the case of a single symplectic boson, the two different embeddings are equivalent up to a redefinition of $(X,Y)\to (Y,-X)$. This is therefore a matter of convention. In the gauge theory context, the different choices of embeddings for the various symplectic bosons can be genuinely inequivalent due to the assignment of gauge charges, which are not invariant under the previous redefinition. (On the other hand, the choice of realisation for the free fermions is always a matter of convention.) 

It will prove convenient for us to correlate the $\pm$ type of the realisation of the $i$'th free fermion with that of the $i$'th symplectic boson. This implies that contribution of the $i$'th symplectic boson/free fermion pair to the gauged $\gf=\mathfrak{gl}_1$ current will be given by
\begin{equation}
    (X_iY^i)+(\chi_{i}\xi^i) = \epsilon_i(J_{\sigma_i}+J_{\gamma_i})~,
\end{equation}
where $\epsilon_i$ is the sign of the $i$'th symplectic boson embedding. (Here there is no Einstein summation convention in force.) 

In the context of the VOA for $T[\mathrm{SU}(2)]$, there are na\"ively four choices of embeddings, which we denote by a pair of signs (corresponding to the choice of embedding type for the first and second symplectic boson, respectively). However, $(++)$ and $(--)$ are essentially equivalent, as are $(+-)$ and $(-+)$, as they are pairwise related upon redefining the gauge group so that the $Y^i$ have charge $+1$.

The next question is how the above bosonisation of $\mathcal{X}$ in terms free lattice fields interacts with taking relative semi-infinite cohomology. As usual, we have a short exact sequence of $\gf$-modules,
\begin{equation}\label{eq:SES}
    0 \rightarrow \mathcal{X} \xhookrightarrow{\mathfrak{i}} V_L^! \twoheadrightarrow \mathcal{Y} \rightarrow 0 ~,
\end{equation}
where $\mathcal{Y}$ is defined to complete the sequence. This induces a long-exact sequence in cohomology
\begin{eqnarray*}
    \cdots \rightarrow
    H^{\frac{\infty}{2}+i-1}\left(\gf,\gf_0,\mathcal{Y} \right)  &\xrightarrow{\delta}&
    H^{\frac{\infty}{2}+i}\left(\gf,\gf_0,\mathcal{X} \right) \rightarrow H^{\frac{\infty}{2}+i}\left(\gf,\gf_0,V_L^! \right)
    \rightarrow H^{\frac{\infty}{2}+i}\left(\gf,\gf_0,\mathcal{Y} \right) \rightarrow \cdots ~.
\end{eqnarray*}
We will see presently that the connecting homomorphism $\delta$ vanishes (it is the zero map), so we in fact have short exact sequences
\begin{equation}\label{eq:SES_coh}
    0 \rightarrow H^{\frac{\infty}{2}+i}\left(\gf,\gf_0 ,\mathcal{X} \right)\xhookrightarrow{\mathfrak{i}} H^{\frac{\infty}{2}+i}\left(\gf,\gf_0 ,V_L^!\right) \twoheadrightarrow H^{\frac{\infty}{2}+i}\left(\gf,\gf_0 ,\mathcal{Y}\right) \rightarrow 0~,
\end{equation}
giving rise the inclusion
\begin{equation}
     H^{\frac{\infty}{2}+i}(\gf,\gf_0,\mathcal{X})
    \hookrightarrow
    H^{\frac{\infty}{2}+i}(\gf,\gf_0,V_L^!)~.
\end{equation}

To see that the connecting homomorphism vanishes, it is useful to express $V^!_L$ in terms of Fock modules in a way that separates out the action of the symplectic boson screening currents from the $\gf$ action,
\begin{equation}
    V_L^! \cong\bigoplus_{a,b\in\mathbb{Z}} \left(\mathcal{F}_{a\rho_1}\otimes\mathcal{F}_{b\rho_2}\right)\otimes\left(\mathcal{F}_{-a\sigma_1-b\sigma_2}\otimes V_{L_{\texttt{Ff}},1}\otimes V_{L_{\texttt{Ff}},2}\right)~.
\end{equation}
For each term in the direct sum, the screening currents $e^{\rho_1}$ and $e^{\rho_2}$ act exclusively in the first parenthetical grouping of Fock modules, while the $\gf$ action generated by a linear combination of the $\sigma_i$ and $\gamma_i$ plays out exclusively on the second parenthetical grouping. Thus, as one can always find a section of a surjective map of vector spaces, there is a direct sum decomposition \emph{as $\gf$ modules},
\begin{equation}
    V^!_L\cong \mathcal{X}\oplus\mathcal{Y}~,
\end{equation}
\emph{i.e.}, the short exact sequence \eqref{eq:SES} splits in $\cU(\gf)$-Mod. This, in particular, implies the vanishing of the connecting homomorphism $\delta$.

The gauge theory cohomology can therefore be constructed within the cohomology of the bosonised free field VOA, and what's more, the gauge theory cohomology will be characterised in the free field cohomology by the action of the same screening charges, now treated at the level of cohomology classes. We will see more explicitly how this works after we determine the semi-infinite cohomology of $V_L^!$ below.

\subsubsection{\label{subsubsec:change_of_basis}A new basis for the free field space}

For the present discussion, we fix the $(+-)$ bosonisation choice; this turns out to be advantageous for various reasons. We return to the inequivalent $(++)$ case in Section~\ref{sec:TSU2-sec-bos}.

For the $(+-)$ embedding, the gauged Heisenberg current is realised in $V_L^!$  by
\begin{equation}
    \cJ = J_{\sigma_1} + J_{\gamma_1} - J_{\sigma_2} - J_{\gamma_2}~.
\end{equation}
To characterise the $(\gf,\gf_0)$ relative semi-infinite cohomology of $V_L^!$ with respect to the action of $\cJ$, we will reorganise our free field space in order to isolate the $\cJ$ action. To this end, let us introduce a new basis for our chiral bosons/Heisenberg algebra which sequesters the generator $\cJ$ away from the remaining bosons. In particular, in terms of the original chiral bosons, $\{\sigma_1 , \sigma_2 , \rho_1 , \rho_2 , \gamma_1 , \gamma_2 \}$, we define\footnote{We remark here that it is not a notational infelicity that we reuse here the letters $\phi$ and $\delta$ here. These bosons will turn out to be essentially the same as the ones introduced in our previous \emph{ad hoc} free field realisation.}
\begin{equation}\label{eq:new-basis-psl22}
    \left(
    \begin{array}{c}
    \eta        \\
    \wt{\eta}   \\
    \phi        \\
    \delta      \\
    \omega_1    \\
    \omega_2    \\
    \end{array}
    \right)=
    \left(
    \begin{array}{cccccc}
     \ph{1} & -1 & \ph{0} & \ph{0} & \ph{1} & -1 \\
     \ph{1} & -1 & -1 & \ph{1} & \ph{0} & \ph{0} \\
     -1 & -1 & \ph{0} & \ph{0} & \ph{0} & \ph{0} \\
     \ph{0} & \ph{0} & \ph{1} & \ph{1} & \ph{0} & \ph{0} \\
     \ph{\frac{1}{2}} & -\frac{1}{2} & -\frac{1}{2} & \ph{\frac{1}{2}} & \ph{1} & \ph{0} \\
     -\frac{1}{2} & \ph{\frac{1}{2}} & \ph{\frac{1}{2}} & -\frac{1}{2} & \ph{0} & \ph{1} \\
    \end{array}
    \right)
    \left(
    \begin{array}{c}
    \sigma_1    \\
    \sigma_2    \\
    \rho_1      \\
    \rho_2      \\
    \gamma_1    \\
    \gamma_2    \\
    \end{array}
    \right)~.
\end{equation}
In this new basis, the bilinear form $B$ encoding the norms of the chiral bosons is transformed to
\begin{equation}\label{eq:modified_bilinear_form}
     P_{(+-)} B P_{(+-)}^T = 
    \left(
    \begin{array}{cccccc}
    \ph{0} & -2 & \ph{0}  & \ph{0} & \ph{0} & \ph{0} \\
   -2 & \ph{0}  & \ph{0}  & \ph{0} & \ph{0} & \ph{0} \\
    \ph{0} & \ph{0}  & -2 & \ph{0} & \ph{0} & \ph{0} \\
    \ph{0} & \ph{0}  & \ph{0}  & \ph{2} & \ph{0} & \ph{0} \\
    \ph{0} & \ph{0}  & \ph{0}  & \ph{0} & \ph{1} & \ph{0} \\
    \ph{0} & \ph{0}  & \ph{0}  & \ph{0} & \ph{0} & \ph{1} \\
    \end{array}
    \right)~,
\end{equation}
where $P_{(+-)}$ is the change-of-basis matrix in \eqref{eq:new-basis-psl22}. Inspecting the inverse change-of-basis matrix,
\begin{equation}\label{eq:COB-inverse}
    P_{(+-)}^{-1}=
    \left(
    \begin{array}{cccccc}
     \ph{\frac{1}{2}} & \ph{\frac{1}{2}} & -\frac{1}{2} & \ph{0} & -\frac{1}{2} & \ph{\frac{1}{2}}\\
     -\frac{1}{2} & -\frac{1}{2} & -\frac{1}{2} & \ph{0} & \ph{\frac{1}{2}} & -\frac{1}{2}\\
     \ph{\frac{1}{2}} & \ph{0} & \ph{0} & \ph{\frac{1}{2}} & -\frac{1}{2} & \ph{\frac{1}{2}}\\
     -\frac{1}{2} & \ph{0} & \ph{0} & \ph{\frac{1}{2}} & \ph{\frac{1}{2}} & -\frac{1}{2}\\
     \ph{0} & -\frac{1}{2} & \ph{0} & \ph{0} & \ph{1} & \ph{0}\\
     \ph{0} & \ph{\frac{1}{2}} & \ph{0} & \ph{0} & \ph{0} & \ph{1}\\
    \end{array}
    \right)~,
\end{equation}
we see that our original (sub)lattice of charges $L^!$ takes the form
\begin{equation}
    L^! = \mathbb{Z}\left(-\tfrac{\wt{\eta}}{2}+\tfrac{\phi+\delta}{2}\right)\oplus
    \mathbb{Z}\left(\tfrac{\wt{\eta}}{2}+\tfrac{\phi+\delta}{2}\right)\oplus
    \mathbb{Z}\left(\tfrac{\wt{\eta}}{2}-\omega_1\right)\oplus
    \mathbb{Z}\left(\tfrac{\wt{\eta}}{2}+\omega_2\right)~.
\end{equation}
It will be useful to introduce the following (integral) refinement of this lattice,
\begin{equation}\label{eq:TSU2-VFRR-lattice}
    L^!\subset\tilde{L}^! = \mathbb{Z}\left(\tfrac{\wt{\eta}}{2}\right)\oplus\mathbb{Z}\left(\tfrac{\phi+\delta}{2}\right)\oplus
    \mathbb{Z}\omega_1\oplus
    \mathbb{Z}\omega_2~,
\end{equation}
in terms of which $L^!$ appears as the sublattice of points whose integer coordinates sum to an even number. 

The $\omega_{1,2}$ directions can be re-interpreted as bosonisations of free fermions $\psi_A^\alpha$,
\begin{alignat}{4}\label{eq:tsu2_fer_bosonisation}
    &\psi_1^1 &&\coloneqq -e^{\omega_2}~,\qquad &&\psi_2^2 &&\coloneqq e^{-\omega_2}~,\\
    &\psi_1^2 &&\coloneqq  e^{-\omega_1}~,\qquad &&\psi_2^1 &&\coloneqq e^{\omega_1}~.
\end{alignat}
The OPEs for these are the same as those from our previous \emph{ad hoc} construction~\eqref{eq:ferm-OPE},
\begin{equation}
    \psi_A^\alpha (z) \psi_B^\beta \sim \frac{\epsilon_{AB}\epsilon^{\alpha \beta}}{z-w}~.
\end{equation}
Thus after the final refinement, we see that we can identify the bosonised VOA according to
\begin{equation}\label{eq:ff_factorisation}
    V_{\tilde L}^! \cong V_\eta \otimes \wt{V}_{\rm FFR}~,
\end{equation}
where $V_\eta$ is the lattice extension of the rank-two Heisenberg algebra generated by $J_{\eta}$ and $J_{\wt{\eta}}$ along the lattice $\mathbb{Z}(\frac{\wt{\eta}}{2})$. As announced in the overview for this subsection, $V_{\rm FFR}$ is precisely the free field vertex algebra introduced above in our \emph{ad hoc} free field realisation.

\medskip

\paragraph{Remark.} Note that $V_\eta$ itself is isomorphic to the algebra of global sections of CDOs on $\mathbb{C}^\times$, \ie, $\cD^{(ch)}(\mathbb{C}^\times)$. In general, for CDOs on an algebraic group $\cG$ there are two commuting ${\rm Lie}(\cG)$ affine current algebras whose levels add up to $-2h_{\cG}^\vee$. In the current identification, we can think of $J_{\eta}$---the level-zero $\wt{\glf}_1$ current---as the generator of the diagonal subalgebra, where in the Abelian case one sets $h^\vee=0$.

\medskip

The details of the choice of basis in \eqref{eq:new-basis-psl22} have been specified in order to simplify the analysis of our BRST problem. In particular, this basis manifests the following important properties.
\begin{itemize}
    \item The basis vector $\eta$ is precisely the (null) vector corresponding to the gauged $\wt{\glf}_1$ current. 
    \item One additional basis $\wt{\eta}$ vector is non-trivially paired with the first.
    \item There are just two basis vectors which pair non-trivially with the original fermionic directions $\gamma_i$. Each is a null ``bosonic'' correction to $\gamma_i$, with the correction chosen to enforce orthogonality to the $\eta_i$.
    \item The remaining basis directions $\phi$ and $\delta$, which are purely bosonic, are mutually orthogonal to all of the above and are of indefinite signature $(1,1)$.
\end{itemize}
Though the choices made here may not be the unique ones (for the fixed bosonisation) that allow for the following analysis, any such choice is tightly constrained.

Looking forward, we observe that the change-of-basis matrix \eqref{eq:new-basis-psl22} can be expressed entirely in terms of the charge data of our theory (and the choice of bosonisation) in a manner that guarantees all of the aforementioned good properties will hold. Denote by $Q_\epsilon$---for a sign vector $\epsilon$ determining the chosen bosonisation and a charge matrix $Q$---the charge matrix obtained from $Q$ by flipping the signs of its columns according to $\epsilon$. This defines an equivalent theory with different charge conventions for the hypermultiplets. Denote by $\wt{Q}_\epsilon$ a mirror charge matrix of the theory defined by $Q_\epsilon$ (and $q_\epsilon$) according to the prescription of Section~\ref{sec:tsu2}. In the present case, this gives (\cf\ \eqref{eq:mirror-sqed2-def}),
\begin{equation}\label{eq:plus_minus_charge_matrices}
\begin{split}
    Q_{(+-)} &= (1,-1)~,\\
    \wt{Q}_{(+-)} &= (-1,-1)~.
\end{split}
\end{equation}
Then, we can write the complete change of basis matrix as follows,
\begin{equation}\label{eq:TSU2-COB}
    P_{(+-)} = \left(
        \begin{array}{c:c:c}
        ~~Q_{(+-)}~~                & ~~0~~                     & ~~Q_{(+-)}~~ \\ \hdashline
        &&\\[\dimexpr-\normalbaselineskip+2pt]
        ~~Q_{(+-)}~~                & ~~-Q_{(+-)}~~             & ~~0~~ \\ \hdashline
        &&\\[\dimexpr-\normalbaselineskip+2pt]
        ~~\wt{Q}_{(+-)}~~    & ~~0~~                     & ~~0~~ \\ \hdashline
        &&\\[\dimexpr-\normalbaselineskip+2pt]
        ~~0~~                       & ~~-\wt{Q}_{(+-)}~~ & ~~0~~ \\ \hdashline
        &&\\[\dimexpr-\normalbaselineskip+2pt]
        ~~F_{(+-)}^T~~              & ~~-F_{(+-)}^T~~           & ~~\mathbf{1}_{2\times 2}~~
        \end{array} 
    \right)~,
\end{equation}
where
\begin{equation}\label{eq:Fdef}
    F_{(+-)} = 
    \begin{pmatrix}
        Q_{(+-)}\\ 
        \wt{Q}_{(+-)}
    \end{pmatrix}^{-1}
    \begin{pmatrix}
        Q_{(+-)} \\ 
        0
    \end{pmatrix} 
    =
    \begin{pmatrix}
        \ph{\frac12} & -\frac12\\
        -\frac12 & \ph{\frac12}
    \end{pmatrix}~.
\end{equation}
We will see in the next section that this expression for the change of basis generalises well to other Abelian gauge theories.

\subsubsection{\label{subsubsec:lattice_semi_infinite}Semi-infinite cohomology of the bosonised algebra}

Now let us address the computation of the relative semi-infinite cohomology of $V_L^!$. We can equally consider the semi-infinite cohomology of $V_{\tilde{L}}^!$ and subsequently restrict to the sublattice $L$. Given the factorisation \eqref{eq:ff_factorisation} and the fact that the $\gf$ action is generated by $\eta$, we have the simplification
\begin{equation}
    H^{\frac{\infty}{2}+\bullet}(\gf,\gf_0, V^!_{\tilde{L}}) = H^{\frac{\infty}{2}+\bullet}(\gf,\gf_0, V_\eta)\otimes \wt{V}_{{\rm FFR}}~.
\end{equation}
Thus our only task is to evaluate the relative semi-infinite cohomology of the half-lattice vertex algebra $V_\eta\cong\mathcal{D}^{ch}(\mathbb{C}^\times)$. To perform this evaluation, we observe the following simple structure of $V_\eta$ as a $\gf$ module,
\begin{equation}\label{eq:CDO_g_decomp}
    \mathcal{D}^{ch}(\mathbb{C}^\times) \cong \bigoplus_{\lambda\in\frac12\mathbb{Z}}\big(\mathcal{U}(\mathfrak{n}_{-})\otimes\mathbb{C}_\lambda\otimes\mathcal{U}(\mathfrak{n}_{+})^\ast\big)~.
\end{equation}
Here we are utilising the semi-infinite structure of $\wt{\mathfrak{gl}}_1$ in the sense of Voronov~\cite{voronov1994semi}---reviewed in Appendix~\ref{app:lattice_bosons}, \cf\ \eqref{eq:semi-infinite}. In particular, $\mathbb{C}_\lambda$---corresponding to the lattice momentum $e^{-\frac{\lambda}{2}\wt{\eta}}$---is the one-dimensional representation of the zero mode of $J_{\eta}$ with charge $\lambda$. The only slightly nontrivial observation incorporated into \eqref{eq:CDO_g_decomp}, though it is a straightforward calculation, is that the $J_{\wt{\eta}}$ Fock space, as an $\mathfrak{n}_+$-module, is isomorphic to the contragredient dual of the free module $\cU(\mathfrak{n}_+)$.

From this, it is immediate to see that we can write
\begin{alignat}{3}
    &\mathcal{D}^{ch}(\mathbb{C}^\times) \cong \mathcal{U}(\mathfrak{n}_+)^\ast\otimes \left(\bigoplus_{\lambda\in\frac12\mathbb{Z}}\mathcal{U}(\mathfrak{n}_{-})\otimes\mathbb{C}_\lambda\right)\qquad &&{\rm in}\qquad &&\mathfrak{n_{+}}{\rm -mod}~,\\
    &\mathcal{D}^{ch}(\mathbb{C}^\times) \cong \mathcal{U}(\mathfrak{g})\otimes_{\mathcal{U}(\mathfrak{b}_+)}\left(\bigoplus_{\lambda\in\frac12\mathbb{Z}}\mathbb{C}_\lambda\otimes\mathcal{U}(\mathfrak{n}_{+})^\ast\right)\qquad &&{\rm in}\qquad &&\mathfrak{g}{\rm -mod}~.
\end{alignat}
The first line expressed $\mathcal{D}^{ch}(\mathbb{C}^\times)$ as a direct sum of infinitely many copies of $\mathcal{U}(\mathfrak{n}_+)^\ast$, which is injective (in fact, co-free) as an $\mathfrak{n}_+$ module. The second line makes it apparent that it is also projective relative to $\mathfrak{b}_+$. The key result is then an unnumbered Lemma of Voronov (specialised here for our purposes):
\begin{theorem}[\label{thm:voronov}Section 3.11.2 of \cite{voronov1994semi}]
Let $\gf$ be a Lie algebra with a semi-infinite structure, and let $M$ be a $\mathfrak{g}$-module that is injective as an $\mathfrak{n_+}$-module and relatively projective over $\mathfrak{b}_+$. Then the relative semi-infinite cohomology of $M$ is given as
\begin{equation}
    H^{\frac{\infty}{2}+i}(\mathfrak{g},\mathfrak{g}_0,M) = \begin{cases}
        M^{\mathfrak{n}_+}_{\mathfrak{b}_-}~,\qquad &i = 0~,\\
        0~,\qquad &i\neq 0~.
    \end{cases}
\end{equation}
Here $M^{\mathfrak{n}}_{\mathfrak{b}}$ denotes the \emph{semivariants} of $M$,
\begin{equation}
    M^{\mathfrak{n}}_{\mathfrak{b}} = \frac{\{m\in M\,|\,\mathfrak{n}m=0\}}{\{m\in M\,|\,\mathfrak{n}m=0~{\rm and}~m=bm'~{\rm for}~b\in\mathfrak{b}\}}~.
\end{equation}
\end{theorem}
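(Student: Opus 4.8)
The plan is to compute the relative semi-infinite cohomology directly from the Feigin standard complex of \cite{voronov1994semi} and to exploit the fact that its differential splits along a natural bigrading. Using the triangular decomposition $\mathfrak{g} = \mathfrak{n}_- \oplus \mathfrak{g}_0 \oplus \mathfrak{n}_+$ underlying the semi-infinite structure, I would organise the relative standard complex as a double complex whose $(p,q)$ component is the $\mathfrak{g}_0$-basic part of $M \otimes \Lambda^p \mathfrak{n}_+^\ast \otimes \Lambda^q \mathfrak{n}_-$, where $p \geq 0$ is the Chevalley--Eilenberg cochain degree along $\mathfrak{n}_+$ and $q \geq 0$ the chain degree along $\mathfrak{n}_-$, so that the semi-infinite degree is the total degree $i = p - q$. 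The differential decomposes as $d = d' + d''$, with $d'$ the Chevalley--Eilenberg differential computing $\mathfrak{n}_+$-cohomology with coefficients and $d''$ the Chevalley--Eilenberg boundary computing $\mathfrak{n}_-$-homology. The hypothesis that $\mathfrak{g}$ carries a semi-infinite structure is exactly what ensures $d^2 = 0$; the cross brackets $[\mathfrak{n}_+,\mathfrak{n}_-] \subseteq \mathfrak{g}_0$ drop out upon passing to the $\mathfrak{g}_0$-basic subcomplex (here $\mathfrak{g}_0$ acts semisimply through the charge grading), leaving $(d')^2 = (d'')^2 = 0$ and $d'd'' + d''d' = 0$. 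Since the energy grading has finite-dimensional graded pieces, both spectral sequences of this double complex converge.

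The heart of the argument is to run the two spectral sequences of this double complex against each other. Taking $d'$-cohomology first, injectivity of $M$ as an $\mathfrak{n}_+$-module --- which is preserved after tensoring with the finite-dimensional spaces $\Lambda^q \mathfrak{n}_-$, by the projection formula for coinduced modules --- forces the first page $H^p(\mathfrak{n}_+; M \otimes \Lambda^q \mathfrak{n}_-)$ to vanish for $p \neq 0$. This spectral sequence therefore collapses onto the column $p = 0$, so the total cohomology can be non-zero only in degrees $i = -q \leq 0$. Dually, relative projectivity over $\mathfrak{b}_+$ exhibits $M$ as a summand of a module induced from $\mathfrak{b}_+$, hence projective (indeed free) over $\mathfrak{n}_- \cong \mathfrak{g}/\mathfrak{b}_+$ by the Poincar\'e--Birkhoff--Witt decomposition $\cU(\mathfrak{g}) \cong \cU(\mathfrak{n}_-) \otimes \cU(\mathfrak{b}_+)$; tensoring with $\Lambda^p \mathfrak{n}_+^\ast$ preserves this, so taking $d''$-homology first makes $H_q(\mathfrak{n}_-; M \otimes \Lambda^p \mathfrak{n}_+^\ast)$ vanish for $q \neq 0$. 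That spectral sequence collapses onto the row $q = 0$, so the total cohomology can be non-zero only in degrees $i = p \geq 0$. As both spectral sequences converge to the same total cohomology, it must be concentrated in degree $i = 0$, which is the asserted vanishing for $i \neq 0$.

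It remains to identify the surviving group in degree zero. From the collapsed $p = 0$ column, $H^{\frac{\infty}{2}+0}$ is the cokernel of $d''\colon (M \otimes \mathfrak{n}_-)^{\mathfrak{n}_+} \to M^{\mathfrak{n}_+}$, namely the $\mathfrak{n}_+$-invariants of $M$ modulo the image of the $\mathfrak{n}_-$-action; imposing the residual $\mathfrak{g}_0$-basic condition enlarges the denominator to the image of all of $\mathfrak{b}_- = \mathfrak{g}_0 \oplus \mathfrak{n}_-$, which is precisely the space of semivariants $M^{\mathfrak{n}_+}_{\mathfrak{b}_-}$. I expect the main obstacle to be careful bookkeeping rather than anything conceptual: one must check that the semi-infinite twist really does yield the stated bicomplex after the $\mathfrak{g}_0$-relative reduction --- so that the cross terms genuinely cancel and each partial differential squares to zero --- and that injectivity and freeness survive the coefficient twists $\Lambda^\bullet \mathfrak{n}_\pm$ together with the convergence hypotheses. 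In the abelian application $\mathfrak{g} = \wt{\glf}_1$ both difficulties evaporate, since $\mathfrak{n}_\pm$ are abelian (so there are no cross terms at all) and the module $\mathcal{D}^{ch}(\mathbb{C}^\times)$ has the explicit normal form $\bigoplus_{\lambda} \cU(\mathfrak{n}_-) \otimes \C_\lambda \otimes \cU(\mathfrak{n}_+)^\ast$ recorded above, on which the standard complex factorises as an acyclic Koszul complex tensored with an acyclic co-Koszul complex, making the concentration in degree zero manifest.
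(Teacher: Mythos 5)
The paper does not actually prove this statement: it is imported verbatim, as an unnumbered lemma of Voronov, so there is no internal proof to compare against. Your argument is in outline a reconstruction of Voronov's own proof --- the two spectral sequences of the relative standard complex, with injectivity over $\mathfrak{n}_+$ collapsing one onto the column $p=0$, relative projectivity over $\mathfrak{b}_+$ (hence $\mathcal{U}(\mathfrak{n}_-)$-freeness via PBW) collapsing the other onto the row $q=0$, concentration in total degree zero by comparing the two, and identification of the survivor with the semivariants. That is the right strategy and it is essentially sound.

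Two points need repair before it works at the stated level of generality. First, $[\mathfrak{n}_+,\mathfrak{n}_-]\subseteq\mathfrak{g}_0$ is false for a general Lie algebra with semi-infinite structure (already for a non-Abelian loop algebra one has $[\mathfrak{g}\,t^2,\mathfrak{g}\,t^{-1}]\subseteq\mathfrak{g}\,t\subseteq\mathfrak{n}_+$), so the relative standard complex is \emph{not} a strict bicomplex with $d=d'+d''$: the differential has components of other bidegrees. The argument survives, but you must replace the bicomplex by the two canonical filtrations by $\mathfrak{n}_+^\ast$- and $\mathfrak{n}_-$-degree, whose associated graded differentials are the Chevalley--Eilenberg differentials you want; convergence then genuinely relies on the locally finite grading you invoke. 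Second, $\Lambda^q\mathfrak{n}_-$ is not finite-dimensional ($\mathfrak{n}_-$ is infinite-dimensional), so ``injectivity is preserved after tensoring'' must be argued in the graded, locally finite category rather than by finite-dimensionality; and the identification of the degree-zero cokernel with $M^{\mathfrak{n}_+}_{\mathfrak{b}_-}$ requires checking that the image of $(M\otimes\mathfrak{n}_-)^{\mathfrak{n}_+}\to M^{\mathfrak{n}_+}$ coincides with $M^{\mathfrak{n}_+}\cap\mathfrak{b}_-M$ --- a priori these differ, and their equality is part of what the hypotheses buy. None of this matters for the application in the paper, where $\mathfrak{g}=\widetilde{\mathfrak{gl}}_1$ is Abelian and the module is the explicit sum of terms $\mathcal{U}(\mathfrak{n}_-)\otimes\mathbb{C}_\lambda\otimes\mathcal{U}(\mathfrak{n}_+)^\ast$, for which your Koszul-tensor-co-Koszul factorisation makes the concentration in degree zero immediate.
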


Applying the above theorem to the present case, we see that the semi-infinite cohomology of $V_\eta$ will just restrict to semivariants in degree zero (and vanish in other degrees). But it is easy to see that in fact,
\begin{equation}\label{eq:no-lambda-mom}
    \left(V_{\eta,\wt{\eta}}\cdot e^{\frac{\lambda}{2}\wt{\eta}}\right)^{\mathfrak{n}_+}_{\mathfrak{b}_-}=\mathbb{C}\delta_{\lambda,0}~,
\end{equation}
so we have the nice result
\begin{equation}\label{eq:TSU2-VFFFR-coh}
    H^{\frac{\infty}{2}+\bullet}(\gf,\gf_0,V_\eta) =\mathbb{C}\qquad\Longrightarrow\qquad     H^{\frac{\infty}{2}+\bullet}(\gf,\gf_0, V^!_{\tilde{L}}) \cong \wt{V}_{{\rm FFR}}~.
\end{equation}
As the cohomology of $V_\eta$ is supported entirely in the sector with no $\wt{\eta}$ momentum, the restriction to the sublattice $L^!$ is immediate, and we have
\begin{equation}
    H^{\frac{\infty}{2}+\bullet}(\gf,\gf_0, V_{L}^!) \cong V_{{\rm FFR}}~,
\end{equation}
where $V_{\rm FFR}$ is the subalgebra of $V_{L}^!$ with lattice momenta restricted to the sublattice,
\begin{equation}
    \mathbb{Z}\left(\phi+\delta\right)\oplus
    \mathbb{Z}\left(\tfrac{\phi+\delta}{2}+\omega_1\right)\oplus
    \mathbb{Z}\left(\tfrac{\phi+\delta}{2}+\omega_2\right)~.
\end{equation}
The \emph{ad hoc} realisation of Subsection~\ref{subsec:free-field-ansatz} indeed lies in this subalgebra.

\subsubsection{\label{subsubsec:embedding_gauge}Embedding of \texorpdfstring{$\mathcal{X}$}{X}}

To identify the semi-infinite cohomology for the original gauge theory construction, it remains to characterise the image of the $\mathcal{X}$ in $V_{\rm FFR}$. Abstractly, this is completely characterised in terms of the screening charges, now retrofitted to act on $V_{\rm FFR}$. Observe that in terms of the alternative basis for our full free field space, the screening operators characterising the image of ${\mathcal X}$ are as follows,
\begin{equation}\label{eq:tsu2_reduced_screenings}
\begin{split}
    \mathfrak{s}_1 &= {\rm Res}\,e^{\rho_1} = {\rm Res}\,e^{\frac12{\eta}+\frac12{\delta}-\frac12{\omega_1}+\frac12{\omega_2}}~,\\
    \mathfrak{s}_2 &= {\rm Res}\,e^{\rho_2} = {\rm Res}\,e^{-\frac12{\eta}+\frac12{\delta}+\frac12{\omega_1}-\frac12{\omega_2}}~.
\end{split}
\end{equation}
The screening charges commute with the BRST differential as they act in completely different Fock spaces, so one finds that elements of $V_{\rm FFR}$ that lie in the injective image of the gauge theory cohomology are precisely those that are in the joint kernel of the two restricted screening charges,
\begin{equation}\label{eq:TSU2-screening}
\begin{split}
    \tilde{\mathfrak{s}}_1 &=  {\rm Res}\,e^{\frac12{\delta}-\frac12{\omega_1}+\frac12{\omega_2}}~,\\
    \tilde{\mathfrak{s}}_2 &=  {\rm Res}\,e^{\frac12{\delta}+\frac12{\omega_1}-\frac12{\omega_2}}~.
\end{split}
\end{equation}
We have not endeavoured to independently characterise this joint kernel. However, one can verify that the generators of $V_{1}(\pslf(2|2))$ determined in our \emph{ad hoc} realisation do lie in the kernel of both of these screening charges. Since we know that these are strong generators for the full $A$-twisted vertex algebra, our free field realisation indeed gives the simple quotient.

For practical purposes, there is a simpler procedure for generating (at least a subalgebra of) the gauge theory vertex algebra leveraging these free field results. We can consider simple gauge/BRST-invariant operators written in terms of the gauge theory variables and rewrite these in terms of the bosonised degrees of freedom. Performing the relevant change of basis, a choice of BRST representative can always be made to precisely set to zero all terms involving the $\eta$ bosons, thus projecting into $V_{\rm FFR}$.

In this theory, the obvious candidate gauge/BRST-invariant operators are exactly the BRST-invariant quadratic invariants built out of the elementary gauge theory fields,
\begin{alignat}{3}
    &X_1Y^2~,\quad&&X_2Y^1~,\quad &&X_1Y^1 - X_2Y^2~,\nonumber\\
    &\chi_1 \xi^2~,\quad&&\chi_2\xi^1~,\quad&&\chi_1 \xi^1 - \chi_2 \xi^2~,\label{eq:tsu2-closed-ops}\\
    &Y^j \chi_\beta~,\quad&&X_i\xi^\alpha~.&&\nonumber
\end{alignat}
As we have seen, these generate the full $V_1(\pslf(2|2))$ vertex algebra. (If we did not know that this was the full cohomology, we would only be able to say that these will generate a (not necessarily strict) sub-VOA of the full VOA.) 

The details of all the steps in the aforementioned procedure are shown in Table~\ref{tab:TSU2-free-field}. We recall that the current $J_\psi$ appearing in the table is the Cartan generator of the additional $V_1(\slf_2)$ current algebra in the free fermion system (\cf\ \eqref{eq:ferm-currents}),
\begin{equation}
    J_\psi =-\frac{\epsilon_{\alpha\beta} \psi_1^\alpha\psi_2^\beta}{2}~.
\end{equation}
We note that after changing variables, these operators are all manifestly independent of $\wt{\eta}$, which is a simple consequence of gauge/BRST invariance. (It is also clear that they are in the kernel of the screening charges~\eqref{eq:TSU2-screening}.) In the final step, which amounts to finding good BRST representatives, all terms involving $J_{\eta}$ are set to zero. Indeed, this can be done quite na\"ively, because the manifest $\wt{\eta}$ independence evade normal ordering issues as well as trivialising the problem of finding the BRST-exact correction to eliminate such terms.

Finally, observe that while the free field expressions for the last four mixed currents $\Theta^{\alpha}_{A2}$ is not written in a manner that is manifestly covariant with respect to ${\rm SL}(2)_o$, they are in fact identically equal to the expected covariant expressions~\eqref{eq:tsu2_ffr_lowest_weight_computed},
\begin{equation}\label{eq:TSU2-cov-theta}
  \Theta_{A2}^{\alpha} =  \frac{1}{2} \left(\partial \psi_{A}^{\alpha}-\psi_{A}^{\alpha}\partial\delta + \frac{1}{3} \epsilon_{\beta\gamma}\epsilon^{BC}\psi_B^\alpha \psi_A^\beta \psi_C^\gamma \right)e^{-\frac{1}{2}(\phi+\delta)}~.
\end{equation}
As anticipated, we have exactly recovered our \emph{ad hoc} free field realisation but using a first principles approach.

\begin{table}[t!]
    \centering
    \resizebox{\textwidth}{!}{
    \begin{tabular}{c|c|c|c|c}
        & Gauge theory & Bosonisation & Change of Variables & FFR Representative\\ 
        \hline
        $e$                          & $X_1Y^2$         & $e^{-\sigma_1-\sigma_2+\rho_1+\rho_2}$                               &  $e^{\phi+\delta}$ & $e^{\phi+\delta}$ \\
        $f$                          & $X_2Y^1$         & $-J_{\rho_1} J_{\rho_2} e^{\sigma_1+\sigma_2-\rho_1-\rho_2}$         &  $\left((J_\psi+\frac{1}{2}J_{\eta})^2-\frac{1}{4}J_\delta^2\right) e^{-\phi-\delta}$ &  $\left(J_\psi^2-\frac{1}{4}J_\delta^2\right) e^{-\phi-\delta}$   \\
        $h$                          & $X_1Y^1-X_2Y^2$  & $J_{\sigma_1}+J_{\sigma_2}$                                              & $-J_\phi$ & $-J_\phi$ \\
        $\Theta_{11}^1$              & $-Y^2\chi_2$     & $-e^{-\sigma_2+\rho_2 + \gamma_2}$                                       & $\psi_1^1 e^{\frac{1}{2}(\phi+\delta)}$ & $\psi_1^1 e^{\frac{1}{2}(\phi+\delta)}$\\
        $\Theta_{11}^2$              & $Y^2\chi_1$      & $e^{-\sigma_2+\rho_2 - \gamma_1}$                                       & $\psi_1^2 e^{\frac{1}{2}(\phi+\delta)}$ & $\psi_1^2 e^{\frac{1}{2}(\phi+\delta)}$\\
        $\Theta_{21}^1$              & $X_1\xi^1$       & $e^{-\sigma_1+\rho_1 + \gamma_1} $                                       & $\psi_2^1 e^{\frac{1}{2}(\phi+\delta)}$ & $\psi_2^1 e^{\frac{1}{2}(\phi+\delta)}$ \\
        $\Theta_{21}^2$              & $X_1\xi^2$       & $e^{-\sigma_1+\rho_1 -\gamma_2} $                                        & $\psi_2^2 e^{\frac{1}{2}(\phi+\delta)}$ & $\psi_2^2 e^{\frac{1}{2}(\phi+\delta)}$ \\
        $\Theta_{12}^1$              & $Y^1\chi_2$      & $J_{\rho_1}e^{\sigma_1 - \rho_1 + \gamma_2}$                             & $-\left( \frac{1}{2}(J_\delta + J_{\eta} )+ J_\psi\right) \psi_1^1 e^{-\frac{1}{2}(\phi+\delta)}$ & $-\left( \frac{1}{2}J_\delta + J_\psi\right) \psi_1^1 e^{-\frac{1}{2}(\phi+\delta)}$ \\
        $\Theta_{12}^2$              & $-Y^1\chi_1$     & $-J_{\rho_1} e^{\sigma_1-\rho_1-\gamma_1}$                               & $-\left( \frac{1}{2}(J_\delta + J_{\eta} )+ J_\psi\right) \psi_1^2 e^{-\frac{1}{2}(\phi+\delta)}$ & $-\left( \frac{1}{2}J_\delta + J_\psi\right) \psi_1^2 e^{-\frac{1}{2}(\phi+\delta)}$  \\
        $\Theta_{22}^1$              & $X_2\xi^1 $      & $-J_{\rho_2}e^{\sigma_2-\rho_2+\gamma_1} $                               & $-\left( \frac{1}{2}(J_\delta - J_{\eta} )- J_\psi\right) \psi_2^1 e^{-\frac{1}{2}(\phi+\delta)}$ & $-\left( \frac{1}{2}J_\delta - J_\psi\right) \psi_2^1 e^{-\frac{1}{2}(\phi+\delta)}$ \\
        $\Theta_{22}^2$              & $X_2\xi^2$       & $-J_{\rho_2} e^{\sigma_2-\rho_2 -\gamma_2} $                             & $-\left( \frac{1}{2}(J_\delta - J_{\eta} )- J_\psi\right) \psi_2^2 e^{-\frac{1}{2}(\phi+\delta)}$ & $-\left( \frac{1}{2}J_\delta - J_\psi\right) \psi_2^2 e^{-\frac{1}{2}(\phi+\delta)}$ \\
        $J_1^1$                & $\frac12(\chi_1\xi^1-\chi_2\xi^2) $ & $\frac{1}{2}(e^{\gamma_1}e^{-\gamma_1} - e^{-\gamma_2}e^{\gamma_2}$ )     & $ \frac{1}{2}(\psi_1^1\psi_2^2+\psi_1^2\psi_2^1)$&$ \frac{1}{2}( \psi_1^1\psi_2^2+\psi_1^2\psi_2^1)$\\
        $J_2^2$                & $-\frac12(\chi_1\xi ^1-\chi_2\xi^2) $ & $-\frac12(e^{\gamma_1}e^{-\gamma_1} - e^{-\gamma_2}e^{\gamma_2})$      & $-\frac12 (\psi_1^1\psi_2^2+\psi_1^2\psi_2^1)$&$  -\frac12 (\psi_1^1\psi_2^2+\psi_1^2\psi_2^1 )$\\
        $J_1^{2}$                    & $\chi_1\xi^2$       & $e^{\gamma_1}e^{\gamma_2}$ & $\psi_1^2\psi_2^2$&$ \psi_1^2\psi_2^2$ \\
        $J_2^{1}$                    & $\chi_2\xi^1$       & $e^{-\gamma_2}e^{-\gamma_1}$  & $\psi_2^1 \psi_1^1$&$  \psi_2^1 \psi_1^1$
    \end{tabular}
    }
    \caption{\label{tab:TSU2-free-field}Outline of derivation of the free field realisation of $V_1(\pslf(2|2))$ of type $(+-)$.}
\end{table}

\subsubsection{\label{subsubsec:free_stress}Conformal vectors}

The direct solution for the semi-infinite cohomology presented above allows us to express the stress tensor/conformal vector of $V_1(\pslf(2|2))$ in the free field variables easily. The starting point is the stress tensor for the gauge theory degrees of freedom, including the $(b,c)$ ghosts associated to the bulk vector multiplet,
\begin{equation}
   T = \frac{1}{2}\left(\partial X_a Y^a - \partial Y^a X_a  \right) + \frac12 (\partial \chi^a \xi_a+  \partial \xi_a \chi^a ) + (\partial c)b~.
\end{equation}
This operator is always BRST closed. The realisation of the same operator in $V_{L^!}$ (in terms of the redefined basis) is as follows,
\begin{equation}
   T = T_\phi + T_\delta + T_{\omega_1} + T_{\omega_2} - \frac{1}{2}J_{\eta}J_{\wt{\eta}} + (\partial c)b~,
\end{equation}
where we have the following expressions for stress tensors of the various chiral bosons appearing in the free field realisation,
\begin{alignat}{4}\label{eq:free-stress-+-}
      &T_{\phi} &&= -\frac{1}{4}J_\phi J_\phi~,\qquad 
      &&T_{\delta} &&=  \frac{1}{4}J_\delta J_\delta -\frac{1}{2}\partial J_{\delta}~,\\
      &T_{\omega_1} &&= \frac{1}{2} J_{\omega_1}J_{\omega_1}~,
      &&T_{\omega_2} &&=  \frac{1}{2} J_{\omega_2}J_{\omega_2}~.
\end{alignat}
On the other hand, the stress tensor for $V_\eta$ and that for the $(b,c)$ ghosts combine to be BRST-exact,
\begin{equation}
    - \frac{1}{2}J_{\eta}J_{\wt{\eta}}  + (\partial c)b = -\frac{1}{2}d\left( J_{\wt{\eta}}b  \right)~,
\end{equation}
so indeed, in $V_{\mathrm{FFR}}$ we have
\begin{equation}
    T = T_\phi + T_\delta + T_{\omega_1} + T_{\omega_2}~.
\end{equation}
This assigns conformal weights $\pm\frac12$ to $e^{\pm \frac{1}{2}(\phi+\delta)}$, while the fermions $\psi_A^\alpha$ are all symmetrically weighted (with conformal weight $\frac{1}{2}$). It can be confirmed by direct calculation that in our free field realisation, this is identically equal to the Sugawara stress tensor for $V_{1}(\pslf(2|2))$ as well as the sum of Sugawara stress tensors for $V_1(\slf_2)$ and $V_{-1}(\slf_2)$ subalgebras. Thus the null vector relating these Sugawara vectors is indeed set to zero in this vertex algebra---this reinforces the expectation that the VOA is the simple quotient.

\subsubsection{\label{subsubsec:outer_aut_tsu2}A note on outer automorphism and symmetry groups}

We conclude this subsection with some remarks about the outer automorphism symmetry of this vertex algebra. Pragmatically speaking, the complete $\mathrm{SL}(2)_o$ covariance of our free field realisation follows directly from the relatively simple covariant expressions for the $\Theta_{A1}^{\alpha}$, as the $\Theta_{A2}^\alpha$ then appear in the OPES of these with with $f$, which is an $\mathrm{SL}(2)_o$ singlet. (Covariance of the other generators is obvious). This argument already applies at the level of the \emph{ad hoc} analysis earlier in this section. 

Alternatively, and perhaps more instructively, we can observe that the screening charges in \eqref{eq:tsu2_reduced_screenings} themselves transform in the standard representation of the additional $\slf_2$ algebra generated by the free fermions $V_{\rm FFR}$. In bosonised terms, this is the zero-mode algebra of the current algebra generated by
\begin{equation}\label{eq:extra_fermion_currents_bosonised}
    e^{\omega_1-\omega_2}~,\qquad e^{\omega_2-\omega_1}~,\qquad \partial\omega_1 - \partial\omega_2~.
\end{equation}
Indeed, it is easy to see that the $\mathbb{C}$-span of $\tilde{\mathfrak{s}}_{1,}$ and $\tilde{\mathfrak{s}}_2$ is invariant under this $\slf_2$. However, as the corresponding currents are not themselves annihilated by the screening charges, this structure gives rise to an \emph{outer automorphism} of the $V_1(\pslf(2|2))$ algebra that forms the joint screening charge kernel.

It was suggested in~\cite{Costello:2018swh} that this outer automorphism should be understood as a reflection of the (IR enhanced) global symmetry of the \emph{Coulomb branch} of $T[\mathrm{SU}(2)]$. In fact, as explained there, in the absence of extra boundary degrees of freedom the Neumann boundary condition utilised in this construction breaks the $\mathrm{U}(1)$ topological symmetry, but the inclusion of additional free fermions restores a mixture of that symmetry and the global symmetry of the fermions. On these grounds, the outer automorphism symmetry is expected to coincide with at least a subgroup of the IR symmetry acting on local bulk Coulomb branch operators. In the present context we have the full Coulomb branch symmetry acting.

\medskip

\paragraph{Remark.} We observe that at the level of groups, the symmetry acting on the free fermions is in fact the full $\mathrm{SU}(2)_C$, rather than the $\mathrm{SO}(3)_C$ Coulomb branch symmetry that acts faithfully in the bulk~\cite{Eckhard:2019jgg,
Gang:2018wek}. Incidentally, the same holds for the (inner) Higgs branch flavour symmetry: the odd generators $\Theta_{Aa}^\alpha$ transform in representations of $\mathrm{SU}(2)_H$ instead of $\mathrm{SO}(3)_H$ (the symmetry acting faithfully on bulk Higgs branch operators). It would be interesting to explore consequences of these phenomena in the context of the generalised symmetries of the bulk theory, which were recently explored in~\cite{Bhardwaj:2023zix}.

\subsection{\label{subsec:TSU2-geom}Shadows of chiral localisation on the Higgs branch}

We propose to interpret this free field realisation as a kind of localisation of $V_1(\pslf(2|2))$ on its associated variety. As discussed in Appendix~\ref{app:lattice_bosons}, the bosonisation procedure for symplectic bosons \emph{can} be understood precisely as a localisation of CDOs from $\mathbb{C}$ to $\mathbb{C}^\times$, so more precisely we can certainly interpret our realisation as arising from localisation before performing BRST reduction.

The purpose of this subsection is to develop an instructive, \emph{finite-dimensional} (non-chiral) analogy for this operation. In this simpler context, we can concretely understand the relationship between localisation before and after gauging (symplectic reduction).

\medskip

\paragraph{Remark.} It would be interesting to incorporate the free fermion degrees of freedom into the finite-dimensional analogue discussed here by considering the symplectic reduction of $V\oplus \Pi V$, where $\Pi V$ is the Grassmann-odd vector space where the free fermions live. The physical interpretation of such an analogy is somewhat unclear to us, but mathematically it may be relevant for understanding the interplay between boundary fermionic degrees of freedom and localisation on the associated variety.

\subsubsection{\label{subsubsec:pm_localisation}Localisation of type $(+-)$ and symplectic reduction} 

We recall the computation of the Higgs branch/Higgs branch chiral ring via holomorphic symplectic reduction. Recall that we had (see \eqref{eq:TSU2_Higgs_chiral_ring_sympquotient}),
\begin{equation}
    \mathbb{C}[\cM_H]=\left(\mathbb{C}[T^\ast\mathbb{C}^2]/\langle X_1Y^1+X_2Y^2\rangle\right)^{T_{G,\mathbb{C}}}\cong\mathbb{C}[A_1]~.
\end{equation}
Performing the analogous localisation to that arising in our bosonisation to CDOs amounts to removing the hyperplanes $X_1=0$ and $Y^2 = 0$ from $T^\ast\mathbb{C}^2$ before performing symplectic reduction. At the level of rings of functions, this leaves us with the localisation
\begin{equation}
    \mathbb{C} \left[T^\ast  (\mathbb{C}^\times)^2\right] = \mathbb{C}\left[  \left( Y^2 \right)^{\pm 1} , (X_1)^{\pm 1} , Y^1 , X_2 \right]~.
\end{equation}
We can equivalently write this as
\begin{equation}\label{eq:A1-reduction}
\begin{split}
     \mathbb{C} \left[T^\ast  (\mathbb{C}^\times)^2\right] &\cong \mathbb{C}\left[X_1Y^1 + X_2 Y^2 , \left(Y^2\right)^{\pm 1} \right] \otimes \mathbb{C}\left[X_1Y^1  , \left(X_1 Y^2\right)^{\pm 1} \right]  \\
     &\cong \mathbb{C} \left[ T^\ast  T_{G,\mathbb{C}}\right] \otimes \mathbb{C} \left[ T^\ast  \mathbb{C}^\times \right]~.
\end{split}
\end{equation}
In doing so, we have identified $X_1Y^1+X_2Y^2$ with the cotangent directions to $T_{G,\mathbb{C}}$ via the complex moment map, $(Y^2)$ with the standard coordinate on $T_{G,\mathbb{C}}$ (as it has weight $+1$), and the Poisson brackets follow from~\eqref{eq:mirror-sqed2-def}. On the other hand, the coordinates on the complementary $T^\ast\mathbb{C}^\times$ are gauge invariant, so symplectic reduction just gives
\begin{equation}
    R_{(+-)} \coloneqq \mathbb{C} \left[ T^\ast  \mathbb{C}^\times \right]~,
\end{equation}
with coordinate functions
\begin{equation}
    \wt{e}^{\,\pm1} \coloneqq (X_1 Y^2)^{\pm1}~,\qquad \wt{h} \coloneqq X_1Y^1 = \frac{1}{2}(X_1 Y^1-X_2Y^2)~.
\end{equation}
There is a Poisson algebra homomorphism
\begin{equation}\label{eq:TSU2-patch-map}
\begin{split}
    R_H &~~\hookrightarrow~~ R_{(+-)}~,\\        
    (e,\,h,\,f) &~~\mapsto~~ \left(\wt{e},\,\wt{h},\,-\frac{\wt{h}^2}{4\wt{e}}\right)~,
\end{split}
\end{equation}
and from this we see that there is an isomorphism of Poisson algebras,
\begin{equation}\label{eq:TSU2-patch-iso}
    R_{(+-)} \cong R_H [e^{-1}]~.
\end{equation}
This identifies the spectrum of $R_{(+-)}$ with an open subvariety of (the smooth locus of) the Higgs branch,
\begin{equation}
    T^\ast  \mathbb{C}^\times \hookrightarrow \cM_H~.
\end{equation}
So indeed, we see that our localisation upstairs induces a localisation on the associated variety/Higgs branch, and this is precisely the finite-dimensional geometric construction that has been proposed to underlie many of the four-dimensional free field realisations discussed elsewhere by one of the authors.

\subsubsection{\label{subsubsec:coulomb_perspective}The mirror Coulomb branch perspective}

The simple finite-dimensional analysis above fits nicely with certain recent developments on restriction functors between algebras associated to three-dimensional Coulomb branches~\cite{Kamnitzer:2022zkv}.

We recall that given an Abelian theory defined by a charge matrix $\wt{Q}$---which for now we allow to be general, as the following will apply more broadly than the present example---there exists a map from its Coulomb branch chiral ring to that of a theory obtained by removing an arbitrary collection of weights from $\wt{Q}$~\cite{Braverman:2016wma}. A recent result of~\cite{Kamnitzer:2022zkv} identifies certain conditions under which these maps can be understood as resulting from an open immersion of the Coulomb branch of the latter theory into that of the former. In the case of Abelian gauge theories, these conditions can be understood as the requirement of non-negativity of the pairing between certain selected monopole operators (specified by co-characters) and the characters encoded in the columns of $\wt{Q}$. 

The above considerations regarding localisation on the Higgs branch of $T[\mathrm{SU}(2)]$ are an extremely simple instance of this general result when applied to the Coulomb branch of the (same) mirror theory defined by the charge matrix $\wt{Q}_{(+-)}$ (\cf\ \eqref{eq:plus_minus_charge_matrices}). From the mirror Coulomb branch perspective, the operator $e$ is the monopole operator $v_1$ associated to the co-character $A=(1)$. Let us denote by $w_{\pm 1}$ the elementary monopole operators of a pure $\mathrm{U}(1)$ gauge theory. These satisfy the simple relation
\begin{equation}
    w_1 w_{-1} = 1~,
\end{equation}
and along with the vector multiplet scalar, $\varphi$, can be identified with coordinate functions on $T^\ast\mathbb{C}^\times$. The homomorphism~\eqref{eq:TSU2-patch-map} can be interpreted as a homomorphism between Coulomb branch chiral rings
\begin{equation}
    v_- \mapsto \varphi^{2} w_-~,\qquad v_+ \mapsto w_+~,\qquad\varphi \mapsto \varphi~,
\end{equation}
where we are using the same symbols for the dual photons of both theories. According to Theorem 2.9 of~\cite{Kamnitzer:2022zkv}, since the pairing between the co-character $A=(1)$ and the characters $\wt{Q}$ are positive---$\langle A,\wt{Q}_{(+-)}^j\rangle=1$ for $j=1,2$---this homomorphism becomes an isomorphism upon inverting $e$ in the Coulomb branch chiral ring of $T[\mathrm{SU}(2)]$. Indeed, this is precisely the content of the isomorphism~\eqref{eq:TSU2-patch-iso}.

\subsubsection{\label{subsubsec:alt_localistion}An alternative localisation: \texorpdfstring{$(++)$}{(++)}}

In the analysis to this point, we made the particular $(+-)$ choice of bosonisation. Here we consider the inequivalent $(++)$ bosonisation at the level of Higgs branches. (We will return to vertex algebras in the next subsection.) This means we are localising with respect to $X_1$ and $X_2$ prior to symplectic reduction.

After localising, we again have the cotangent to an algebraic torus $T^\ast(\mathbb{C}^\times)^2$, this time with coordinate ring
\begin{equation}
    \mathbb{C}\left[T^\ast(\mathbb{C}^\times)^2\right]=\mathbb{C}\left[X_1^{\pm 1},X_2^{\pm 1}, Y^1, Y^2\right]~.
\end{equation}
As in~\eqref{eq:A1-reduction}, we can rearrange this so as to separate out the coordinate ring of the (cotangent of) the gauge torus $T_{G,\mathbb{C}}$ from an otherwise gauge-invariant part,
\begin{equation}\label{eq:A1-reduction-alt}
\begin{split}
     \mathbb{C} \left[T^\ast (\mathbb{C}^\times)^2\right] &\cong \mathbb{C}\left[Y^1X_1 + Y^2 X_2 , \left(X_2\right)^{\pm 1} \right] \otimes \mathbb{C}\left[Y^1X_1  , \left(X_1  X_2^{-1}\right)^{\pm 1} \right]  \\
     &\cong \mathbb{C} \left[ T^\ast T_{G,\mathbb{C}}\right] \otimes \mathbb{C} \left[ T^\ast \mathbb{C}^\times \right]~,
\end{split}
\end{equation}
where all isomorphisms are as Poisson algebras. Symplectic reduction then kills off $T^\ast T_{G,\mathbb{C}}$ and we have (at the level of functions),
\begin{equation}
    R_{(++)} =  \mathbb{C} [T^\ast \mathbb{C}^\times] \cong  \mathbb{C} \left[ \wt{h} , \wt{e}^{\,\pm 1} \right]
\end{equation}
where
\begin{equation}
    \wt{e} \coloneqq X_1 X_2^{-1}~,\qquad\wt{h}\coloneqq Y^1X_1~.
\end{equation}
Now for our Poisson algebra homomorphism we have
\begin{equation}
\begin{split}
    R_H &~~\hookrightarrow~~R_{(++)}~,\\
    (e,\,h,\,f) &~~\mapsto~~ \left( -\wt{h}\wt{e},\,\wt{h},\,\wt{h} \wt{e}^{\,-1}\right)~.
\end{split}
\end{equation}
We can see that this does not determine an open immersion of $R_{(++)}$ into $\cM_H$ (let alone into the smooth locus). Indeed, all of $\{\wt{h}=0\}$ maps to the origin of $\cM_H$, so the fibre over zero is a full copy of $\mathbb{C}^\times$.

However, there \emph{is} an open immersion into the resolution of the Higgs branch $\cM_{H,\zeta}\cong T^\ast\mathbb{C}\mathbb{P}^1$ (defined by symplectic reduction with FI parameter $\zeta < 0$). Indeed, then the $\mathbb{C}^\times$ fibre of our map over the origin of $\cM_H$ is sent to the $\mathbb{C}^\times\subset\mathbb{CP}^1$ subset of the exceptional divisor where the north and south poles are removed. (This picture can be illuminated by a more careful analysis in terms of the $\mathrm{Proj}$ construction of the resolved Higgs branch, or in terms of GIT stability conditions. We postpone such discussions until Section~\ref{sec:fin-geom} where we provide a bit more detail.)

\subsection{\label{sec:TSU2-sec-bos}The alternative free field realisation}

Finally, let us return to analyse the free field realisation of $V_1(\pslf(2|2))$ obtained by implementing the $(++)$ type bosonisation. The steps to obtain the free field realisation are as before \emph{mutatis mutandis}, and for the sake of notational simplicity we will recycle the same symbols for the free field bosons here. We make the slightly different identification for the free fermions in terms of their lattice bosons (the difference with respect to~\eqref{eq:tsu2_fer_bosonisation} arises from our different conventions for bosonisation),
\begin{equation}\label{eq:fer-red}
\begin{split}
    \psi_1^1 \coloneqq -e^{-\omega_2}~,\qquad \psi_2^2 \coloneqq e^{\omega_2}~,\\
    \psi_1^2 \coloneqq \ph{e^{-\omega_1}}~,\qquad  \psi_2^1 \coloneqq e^{\omega_1}~,
\end{split}
\end{equation}
and the current $J_\psi$ is again given by the expression~\eqref{eq:ferm-currents}. We display the relevant expressions for the BRST derivation of a free field realisation in Table~\ref{tab:TSU2-free-field-2}. The expressions in the final column give rise to the same $V_1(\pslf(2|2))$ OPEs as before, but there are some notable differences to observe here compared to the $(+-)$ case. 

\begin{table}[t!]
    \centering
    \resizebox{\textwidth}{!}{
    \begin{tabular}{c|c|c|c|c}
        & Gauge theory & Bosonisation & Change of Variables & FFR Representative\\
        \hline
        $e$  & $X_1Y^2$ & $J_{\rho_2} e^{-(\sigma_1-\sigma_2)+(\rho_1-\rho_2)}$                             &   $\left(\frac{1}{2}(J_\delta+J_{\eta} )+J_\psi \right)e^{-\phi-\delta}$ & $\left(J_\psi+\frac{1}{2}J_\delta\right)e^{-\phi-\delta}$\\
        $f$  & $X_2Y^1$         & $J_{\rho_1} e^{(\sigma_1-\sigma_2)-(\rho_1-\rho_2)}$ & $\left(-\frac{1}{2}(J_\delta-J_{\eta}) +J_\psi \right) e^{\phi+\delta}$      &$\left(J_\psi-\frac{1}{2}J_\delta\right) e^{\phi+\delta}$\\
        $h$  & $X_1Y^1-X_2Y^2$  & $J_{\sigma_1}-J_{\sigma_2}$  & $J_\phi$ & $J_\phi$\\
        $\Theta_{11}^1$             & $-Y^2\chi_2$     & $-J_{\rho_2}e^{\sigma_2-\rho_2 - \gamma_2}$                                        & $\left(\frac{1}{2}(J_\delta+J_{\eta}) +J_\psi \right)\psi_1^1 e^{-\frac{1}{2}(\phi+\delta)}$ & $\left(\frac{1}{2}J_\delta +J_\psi \right)\psi_1^1 e^{-\frac{1}{2}(\phi+\delta)}$\\
        $\Theta_{11}^2$             & $Y^2\chi_1$      & $J_{\rho_2}e^{\sigma_2-\rho_2 - \gamma_1}$                                        & $\left(\frac{1}{2}(J_\delta+J_{\eta}) +J_\psi \right)\psi_1^2 e^{-\frac{1}{2}(\phi+\delta)}$ & $\left(\frac{1}{2}J_\delta +J_\psi \right)\psi_1^2 e^{-\frac{1}{2}(\phi+\delta)}$\\
        $\Theta_{21}^1$             & $X_1\xi^1$       & $e^{-\sigma_1+\rho_1 + \gamma_1} $                                                 & $\psi_2^1 e^{\frac{1}{2}(-\phi+\delta)}$  & $\psi_2^1 e^{-\frac{1}{2}(\phi+\delta)}$\\
        $\Theta_{21}^2$             & $X_1\xi^2$       & $e^{-\sigma_1+\rho_1 +\gamma_2} $                                                  & $\psi_2^2 e^{\frac{1}{2}(-\phi+\delta)}$  & $\psi_2^2 e^{-\frac{1}{2}(\phi+\delta)}$\\
        $\Theta_{12}^1$             & $Y^1\chi_2$      & $J_{\rho_1}e^{\sigma_1 - \rho_1 - \gamma_2}$                                       & $\left(\frac{1}{2}(J_\delta-J_{\eta})-J_\psi \right)\psi_1^1 e^{\frac{1}{2}(\phi+\delta)}$  & $\left(\frac{1}{2}J_\delta-J_\psi \right)\psi_1^1 e^{\frac{1}{2}(\phi+\delta)}$ \\
        $\Theta_{12}^2$             & $-Y^1\chi_1$     & $-J_{\rho_1} e^{\sigma_1-\rho_1-\gamma_1}$                                         & $\left(\frac{1}{2}(J_\delta-J_{\eta})-J_\psi \right)\psi_1^2 e^{\frac{1}{2}(\phi+\delta)}$   & $\left(\frac{1}{2}J_\delta-J_\psi \right)\psi_1^2 e^{\frac{1}{2}(\phi+\delta)}$\\
        $\Theta_{22}^1$             & $X_2\xi^1 $      & $e^{-\sigma_2+\rho_2+\gamma_1} $                                                    & $\psi_2^1 e^{\frac{1}{2}(\phi+\delta)}$&$\psi_2^1 e^{\frac{1}{2}(\phi+\delta)}$\\
        $\Theta_{22}^2$             & $X_2\xi^2$       & $e^{-\sigma_2+\rho_2 +\gamma_2} $                                                  & $ \psi_2^2 e^{\frac{1}{2}(\phi+\delta)}$&$ \psi_2^2 e^{\frac{1}{2}(\phi+\delta)}$ \\
        $J_1^1$  & $\frac12(\chi_1\xi^1-\chi_2\xi^2) $ & $\frac12(e^{\gamma_1}e^{-\gamma_1} - e^{\gamma_2}e^{-\gamma_2})$ & $\frac12(\psi_1^1\psi_2^2+\psi_1^2\psi_2^1)$ & $\frac12(\psi_1^1\psi_2^2+\psi_1^2\psi_2^1)$\\
        $J_2^2$  & $-\frac12(\chi_1\xi^1-\chi_2\xi^2) $ & $-\frac12(e^{\gamma_1}e^{-\gamma_1} - e^{\gamma_2}e^{-\gamma_2})$ & $-\frac12(\psi_1^1\psi_2^2+\psi_1^2\psi_2^1)$ & $-\frac12(\psi_1^1\psi_2^2+\psi_1^2\psi_2^1)$ \\
        $J_1^{2}$ & $\chi_1\xi^2$ & $e^{\gamma_1}e^{-\gamma_2}$ & $ \psi_1^2\psi_2^2$ & $\psi_1^2\psi_2^2$ \\
        $J_2^{1}$ & $\chi_2\xi^1$ & $e^{\gamma_2}e^{-\gamma_1}$ & $ \psi_2^1\psi_1^1$ & $\psi_2^1\psi_1^1$
    \end{tabular}
    }
    \caption{\label{tab:TSU2-free-field-2}Outline of derivation of the free field realisation of $V_1(\pslf(2|2))$ of type $(++)$.}
\end{table}

First, observe that the conformal weights of the free fields are much different here. This can be seen empirically from the free field realisation itself, but can be derived from first principles by analysing stress tensors. As before, the stress tensor of the gauge theory degrees of freedom can be rewritten so that up to the BRST-exact contribution $- d(J_{\wt{\eta}} b + \partial b)$ (which is set to zero in $\wt{V}_{\mathrm{FFR}}$) it takes the form
\begin{equation}
   T = T_\phi + T_\delta + T_{\omega_1} + T_{\omega_2}~
\end{equation}
where now there are different background charges for the various chiral bosons compared to~\eqref{eq:free-stress-+-},
\begin{alignat}{4}
      &T_{\phi} &&=-\frac{1}{4}J_\phi J_\phi~,\qquad 
      &&T_{\omega_1} &&= \frac{1}{2} J_{\omega_1}J_{\omega_1}  + \frac{1}{2}\partial J_{\omega_1}~,\\
      &T_{\delta} &&=  +\frac{1}{4}J_\delta J_\delta~,\qquad
      &&T_{\omega_2} &&= \frac{1}{2} J_{\omega_2}J_{\omega_2}  + \frac{1}{2}\partial J_{\omega_2}~.
\end{alignat}
The fermions are assigned asymmetric conformal weights, with $\psi^\alpha_1$ being of dimension zero and $\psi^\alpha_2$ being of dimension one. Moreover, $e^{\pm \frac{1}{2}(\phi+\delta)}$ carries conformal weight $0$. This is in alignment with the finite dimensional localisation picture described above, where $e^{\phi+\delta}$ should be thought of as the chiral version of the function that parameterises a patch on the exceptional divisor, and so is weight zero under the conformal $\mathbb{C}^\times$ action on the Higgs branch (which fixes the exceptional divisor pointwise).

Finally, the screening charges in this case are given by
\begin{equation}
\begin{split}
    \tilde{\mathfrak{s}}_1 &= e^{\rho_1} = e^{-\frac{1}{2}\delta - \frac{1}{2}\omega_1 - \frac{1}{2}\omega_2}~,\\
    \tilde{\mathfrak{s}}_2 &= e^{\rho_2} = e^{\frac{1}{2}\delta - \frac{1}{2}\omega_1 - \frac{1}{2}\omega_2}~.
\end{split}
\end{equation}
We see that these are no longer related by the $\slf_2$ action generated by the zero modes of the additional currents built from the (bosonised) free fermions, which in this case are given by
\begin{equation}
    e^{\pm(\omega_1+\omega_2)}~,\qquad \partial\omega_1+\partial\omega_2~.
\end{equation}
(Relatedly, the stress tensor is not invariant under the full symmetry group of the fermions). Indeed this free field realisation manifests only the $\mathfrak{u}(1)$ outer automorphism generated by $J_\psi$.

From a physical perspective, we interpret this lack of outer automorphism covariance as a result of this free field realisation arising by localisation on the \emph{resolved} Higgs branch $\cM_{H,\zeta}$. Resolving the Higgs branch by turning on an FI parameter in the underlying gauge theory breaks that $\slf_2$ symmetry of the Coulomb branch to a Cartan subalgebra. We anticipate that in general, it may be important for manifesting outer automorphism symmetries to identify the bosonisation choices that relate to localisation on an open subset that enjoys an open immersion into the unresolved Higgs branch.

\medskip

\paragraph{Remark.} Free field realisations corresponding to localisation on the resolved $\cM_{H,\zeta}$ in particular appear in the closely related setting in \cite{kuwabara2021vertex}, though in that work the free field expressions were all given purely in terms of symplectic bosons and a Heisenberg algebra. It should be possible to rewrite the present free field expressions in terms of a symplectic boson rather than a half-lattice vertex algebra by appropriate field redefinitions. This would amount to ``adding the north pole back in'' to the exceptional divisor.

\section{\label{sec:general_abelian}Generalisation for \texorpdfstring{$A$}{A}-twisted Abelian gauge theories}

In this section, we consider the generalisation of the above analysis of the $A$-twisted vertex algebra for $T[\mathrm{SU}(2)]$ to a more general class of $\cN=4$ Abelian gauge theories. Generally speaking, the strategy is the same as in our main example: we bosonise and change variables in order to solve an appropriate semi-infinite cohomology problem, and produce a free field realisation for the relevant VOA as the kernel of a collection of screening charges. Though at times technical, our conclusion will ultimately be that essentially all important features of the $T[\mathrm{SU}(2)]$ case persist in the more general setting.

For general Abelian theories, the gauge group is a rank-$k$ torus,
\begin{equation}
    T_G=\prod_{i=1}^k \mathrm{U}(1)~,
\end{equation}
and the matter content will be $N$ hypermultiplets packaged into a quaternionic $T_G$-representation,
\begin{equation}
    V = T^\ast\mathbb{C}^N~.
\end{equation}
Let $T_V\cong \mathrm{U}(1)^N$ again denote the standard torus acting on $\mathbb{C}^N$ (lifted to a Hamiltonian action on $V$), and then the flavour torus $T_H$ is the quotient $T_V/T_G$. 

We will restrict to cases where there is a short exact sequence generalising~\eqref{eq:TSU2-SES},
\begin{equation}
    0 \rightarrow \mathfrak{t}_{G,\mathbb{Z}} \cong \mathbb{Z}^k \xrightarrow{Q^T}  \mathfrak{t}_{V,\mathbb{Z}} \cong \mathbb{Z}^N \xrightarrow{\wt{Q}}  \mathfrak{t}_{H,\mathbb{Z}} \cong \mathbb{Z}^{N-k} \rightarrow 0~,
\end{equation}
where $\mathfrak{t}_{G,\mathbb{Z}}$, $\mathfrak{t}_{V,\mathbb{Z}}$, $\mathfrak{t}_{H,\mathbb{Z}}$ are co-weight lattices of the tori $T_G$, $T_V$, $T_H$. In particular, this assumption requires that the gauge group has rank $k\leqslant N$, and that the map $\wt{Q}$ is surjective. We additionally require $\wt{Q}$ to be unimodular, meaning that any minor consisting of $(N-k)$ columns of $\wt{Q}$ has determinant $\pm 1$. This implies in particular that $\wt{Q}$ can be completed to a matrix $\wt{\mathbf{Q}}$ of determinant $\pm 1$. The matrix $\wt{\mathbf{Q}}$ is then invertible over the integers, so we can define
\begin{equation}\label{eq:gen-mirror-def}
    \wt{\mathbf{Q}} = 
    \begin{pmatrix} \tilde{q} \\ \wt{Q} \end{pmatrix}~,
    \mathbf{Q}=
    \begin{pmatrix} Q \\ q \end{pmatrix}
    =(\wt{\mathbf{Q}}^{-1})^{T}~.
\end{equation}
This provides splitting and projection matrices $q: \mathfrak{t}_{H,\mathbb{Z}}\rightarrow \mathfrak{t}_{V,\mathbb{Z}}$ and $\wt{q}: \mathfrak{t}_{V,\mathbb{Z}}\rightarrow \mathfrak{t}_{G,\mathbb{Z}}$ for the above short exact sequence. The matrices $q$ and $\wt{q}$ encode choices of (integer) flavour charges of the original theory and those of its mirror, respectively. 

\medskip

\paragraph{Remark.} Unimodularity is a technical condition that simplifies our analysis in some places, but for many arguments is not required. We expect the approach of this paper to apply, possibly with minor modifications, in more general situations where $k\leqslant N$.

\medskip

Besides the flavour symmetry $G_H$ (with maximal torus $T_H$ with charge matrix given by $q$), these theories enjoy a UV topological symmetry $T_C$ whose charge operators measure the topological type of gauge bundles on a sphere surrounding an operator,
\begin{equation}
    T_C \cong \pi_1(T_G)^\vee \cong Z(T_G^\vee) \cong \mathrm{U}(1)^k~.
\end{equation}
This symmetry can experience a non-Abelian enhancement to a larger Coulomb branch symmetry $G_C$ at the infrared fixed point. Under mirror symmetry, the flavour symmetry $G_H$ of a theory defined by $\mathbf{Q}$ is identified with the enhanced Coulomb branch symmetry $\wt{G}_C$ of the mirror theory defined by $\wt{\mathbf{Q}}$.

\subsection{\label{subsec:general_higgs_coulomb_branches}Moduli spaces for general Abelian gauge theories}

The $T_G$ action on $V$ is tri-Hamiltonian, with a triplet of moment maps (organised here according to a choice of complex structure),
\begin{equation}
    \mu_{\mathbb{R}}: V\to \mathfrak{t}_{G,\mathbb{R}}^\ast~,\qquad
    \mu_{\mathbb{C}}: V\to \mathfrak{t}_{G,\mathbb{C}}^\ast~.
\end{equation}
The Higgs branch of vacua is given by the hyperk\"ahler quotient of $V$ by this $T_G$ action,
\begin{equation}
   \cM_H =  V /\!\!/\!\!/ T_G~,
\end{equation}
which is equivalently the $T_G$ quotient of the zero-locus of the moment maps,
\begin{equation}\label{eq:hyp-quot-general}
   \cM_H \cong \mu^{-1}_{\mathbb{C}}(0) \cap \mu_{\mathbb{R}}^{-1}(0)/ T_G~.
\end{equation}
Choosing a basis for $\mathfrak{t}_{G}^\ast$, we can write these moment maps in components,
\begin{alignat}{2}
    &\mu_{\mathbb{C}}^a = \sum_{i=1}^N Q^a_i Y^i X_i\,,\quad &&a\in\{1,\ldots,k\}~,\\
    &\mu_{\mathbb{R}}^a = \sum_{i=1}^N \frac{Q^a_i}{2}\left(|X_i|^2-|Y^i|^2\right)\,,\quad &&a\in \{1,\ldots,k\}~.
\end{alignat}
As an algebraic symplectic variety, the Higgs branch is given by the holomorphic symplectic quotient,
\begin{equation}\label{eq:higgs-quot-general}
    \cM_H \cong \mu_{\mathbb{C}}^{-1}(0)/\!\!/T_{G,{\mathbb{C}}}~,
\end{equation}
in terms of the complexified the gauge group $T_{G,\mathbb{C}} = (\mathbb{C}^\times)^k$. The Higgs branch chiral ring/coordinate ring of $\cM_H$ is given by
\begin{equation}
    R_{H} = \mathbb{C} \left[ \mu_{\mathbb{C}}^{-1}(0) \right]^{G_\mathbb{C}}~.
\end{equation}
By definition, this is a hypertoric variety of quaternionic dimension $(N-k)$ which carries a Hamiltonian action of the (complexified) torus $T_{H,\mathbb{C}}\cong(\mathbb{C}^\times)^{N-k}$.

Background vector multiplets for $T_C$ (as defined in the UV) provide resolution parameters for $\cM_H$. More precisely, the bottom component of a vector multiplet is a real FI parameter $\zeta \in \mathfrak{t}_{G,\mathbb{\mathbb{R}}}^*)$, and when set to nonzero values one has for the Higgs branch the resolved space
\begin{equation}\label{eq:higgs-quot-resolved}
    \cM_{H,\zeta} \cong  \mu_{\mathbb{C}}^{-1}(0) /\!\!/_{\zeta} T_{G,{\mathbb{C}}}~,
\end{equation}
which is the projective GIT quotient of the zero locus of the complex moment map. This is provides a resolution of singularities
\begin{equation}
    \cM_{H,\zeta} \twoheadrightarrow \cM_H~.
\end{equation}

The structure of this resolution and the dependence on the FI parameters admits an elegant combinatorial formulation in terms of hyperplane arrangements. We identify $\zeta$ with its image under $\wt{q}^T$ in $\mathfrak{t}^*_{V,\mathbb{R}}$ and define the following hyperplanes in $\mathfrak{t}^*_{H,\mathbb{R}}$,
\begin{equation}\label{eq:hyperplanes}
    H_i \coloneqq \{x \in \mathfrak{t}^*_{H,\mathbb{R}} \ | \ x \cdot \wt{Q}^i + \zeta^i = 0 \}~.
\end{equation}
The collection of hyperplanes $\mathcal{A}_{\zeta} =  \{H_1,\ldots,H_{N}\}$ is known as an \emph{arrangement}. An arrangement is called \emph{simple} if every subset of $m$ hyperplanes with nonempty intersection intersects in codimension $m$, and is called \emph{unimodular} if every collection of $(N-k)$ column vectors $\{\wt{Q}^{i_1},\ldots \wt{Q}^{i_{N-k}}\}$ spans $\mathfrak{t}^*_{H,\mathbb{Z}}$ over the integers.

A hyperplane arrangement completely determines the corresponding (resolved) hypertoric variety $\cM_{H,\zeta}$. This variety is sometimes written as $\cM(\mathcal{A}_\zeta)$ to emphasise this feature. The following are standard results relating properties of the arrangements to the geometry of the hypertoric varieties~\cite{proudfoot2004hyperkahler,proudfoot2008survey}:
\begin{itemize}
    \item The variety $\cM(\mathcal{A}_\zeta)$, is a holomorphic symplectic variety with a Hamiltonian $T_H$ action. It is independent of the signs of the row vectors $\wt{Q}_i$. (As a consequence, it is also independent of the signs of the column vectors of $Q$, which are determined by $\wt{Q}_i$).
    \item The resolution $\cM(\mathcal{A}_\zeta)$ has at worst orbifold singularities if and only if $\mathcal{A}$ is simple, and is smooth if and only if $\mathcal{A}$ is simple and unimodular.
\end{itemize}
For the theories we are considering, one can always find choices of resolution parameters so that the resulting arrangements are simple and unimodular. In other words, the Higgs branches in question always admit (physical) smooth resolutions.

\subsubsection{\label{subsubsec:chiral_ring_gens_general}Chiral ring generators}

It will prove useful in our general analysis, and especially in examples, to have a sense for the algebraic generators of the Higgs branch chiral ring, as well as their description in terms of the Coulomb branch of the mirror theory. We now summarise the characterisation of these chiral ring generators, as explained in detail in, \emph{e.g.}, \cite{Bullimore:2015lsa,Bullimore:2016nji}.

The ``mesons'' $X_iY^i$ for $i=1,\ldots,N$ are manifestly gauge-invariant, but $k$ linear combinations of these are set to zero by the moment map condition $\mu_{\mathbb{C}}=0$. As a basis for linearly independent, non-zero meson operators one can use precisely the $(N-k)$ complex moment maps for the $T_{H,\mathbb{C}}$ action,
\begin{equation}\label{eq:flav-0-gen}
    \mu_{H,\mathbb{C}}^a \coloneqq \sum_{i=1}^{N-k} q^a_i X_iY^i~.
\end{equation}
These are all weight-zero with respect to the $T_{H,\mathbb{C}}$ action on $\cM_H$.

For the remaining generators one takes gauge-invariant monomials of the elementary hypermultiplet scalars, which will have non-zero $T_{H,\mathbb{C}}$ weights. For any $\mathbb{Z}^N$-valued ``weight vector'' $w$, one defines
\begin{equation}
    w_{\pm,i} \coloneqq 
    \begin{cases} 
        |w_i| \quad & \text{if}\quad  \pm w_i \geqslant 0~,\\
        0 & \text{otherwise}~,
    \end{cases}
\end{equation}
as well as
\begin{equation}\label{eq:mon-def}
    W^w \coloneqq \prod_{i=1}^N (X_i)^{w_{+,i}}(Y^i)^{w_{-,i}}~.
\end{equation}
Then operators of the form
\begin{equation}\label{eq:higg-mon-2}
    W^A \coloneqq W^{\wt{Q}^T \cdot A}~,
\end{equation}
are gauge-invariant for any vector $A \in \mathbb{Z}^{N-k}$ and descend non-trivially to $\mathbb{C}[\mu^{-1}_{\mathbb{C}}(0)]^{T_{G,\mathbb{C}}}$. One can choose a finite set of these monomials such that together with~\eqref{eq:flav-0-gen} they algebraically generate the Higgs branch chiral ring. (The precise requirement is that the vectors $A$ for the set of monomials must generate the integral lattice inside each cone in so-called ``canonical slice'' in $\mathfrak{t}_{H,\mathbb{R}}^\ast$ \cite{Bullimore:2016nji}.)

From the mirror perspective, the combinations~\eqref{eq:higg-mon-1} correspond to the $(N-k)$ complexified vector multiplet scalars $\varphi_{a} = \phi_{1,a} + i\phi_{2,a}$,
\begin{equation}
    \mu_{H,\mathbb{C}}^a \leftrightarrow \wt{\varphi}_a~.
\end{equation}
A vector $A\in \mathbb{Z}^{N-k}$ determines by a co-character of the mirror gauge group,
\begin{equation}\label{eq:co-ch}
    A : \mathbb{C}^\times \rightarrow (\mathbb{C}^\times)^{N-k}~.
\end{equation}
and to this co-character $A$ one associates an elementary mirror monopole operator $v_A$. These monopole operators are dual to the operators $W^A$ defined above.

\subsection{\label{subsec:general_boundary_VOAs}\texorpdfstring{$A$}{A}-twisted boundary VOAs}

The boundary VOA for a general $A$-twisted Abelian gauge theory of the type reviewed above is a BRST quotient of $N$ sets of symplectic bosons $X_a(z),Y^b(w)$ and $N$ sets of complex fermions $\xi^\alpha,\chi_\beta$ \cite{Costello:2018fnz} with standard OPEs,
\begin{equation}
    Y^a(z)X_b(w)\sim \frac{\delta^a_b}{z-w}~,\qquad\chi_\alpha(z)\xi^\beta(w) \sim \frac{\delta_\alpha^\beta}{z-w}~.
\end{equation}
We denote the symplectic boson, complex fermions, and total matter VOAs as before,
\begin{equation}
    \mathcal{X}\coloneqq \bigotimes_{i=1}^N\left(\texttt{Sb}_i\otimes\texttt{Ff}_i\right)~.
\end{equation}
The $\mathrm{U}(1)^k$ gauge symmetry corresponds to a rank $k$, level-zero Heisenberg vertex subalgebra of $\mathcal{X}$, generated by the currents
\begin{equation}
    \cJ^a = \sum_{i=1}^{N} Q^a_i (X_iY^i +\chi_i\xi^i )~,\qquad a=1,\ldots,k.
\end{equation}
BRST reduction proceeds by introducing $k$ ghost systems  $(b_a,c^a)$ (all of weight $(1,0)$),
\begin{equation}
    b_a(z)c^b(w)\sim\frac{\delta_a^b}{z-w}~,
\end{equation}
and forming the Feigin standard complex for the $\gf=(\wt{\mathfrak{gl}_1})^k$ module $\cal X$,
\begin{equation}
    \mathcal{X}^{\bullet} = \mathcal{X} \otimes (b,c)^{\otimes k}~. 
\end{equation}
Cohomological grading is by ghost number (with the $c_a$ assigned ghost number $+1$ and the $b^a$ assigned ghost number $-1$). Generalising~\eqref{eq:psl22-BRST}, the differential is identified with the zero-mode of the BRST current
\begin{equation}
    J^{\text{BRST}}= c_a\cJ^a~,\qquad d = J^{\text{BRST}}_0~.
\end{equation}
The VOA for the gauge theory is identified with the cohomology of this BRST operator acting on the subcomplex of $\mathcal{X}^\bullet$ that is annihilated by the zero modes of all $b$-ghosts and gauge currents,
\begin{equation}\label{eq:general_theory_as_semi_infinite}
    H^\bullet_{d}\left(\psi\in\mathcal{X}^\bullet\,|\,\cJ^a_0\psi=b^a_0\psi=0\right) \cong H^{\frac{\infty}{2}+\bullet}(\gf,\gf_0,\mathcal{X})~.
\end{equation}

\subsubsection{\label{subsubsec:general_bosonisation}Semi-infinite cohomology via bosonisation}

To calculate the semi-infinite cohomology \eqref{eq:general_theory_as_semi_infinite}, we can directly adapt the strategy presented in Section~\ref{subsec:tsu2_voa_BRST} in the context of our main example. Consider the vertex algebra $V^!_L$ comprising $N$ copies of $V_{L_{\texttt{Sb}}}^!$ (with attendant chiral bosons $\{\rho_i,\sigma_i\}$, $i=1,\ldots,N$) and $N$ copies of $V_{L_{\texttt{Ff}}}$ (with chiral bosons $\{\gamma_i\}$, $i=1,\ldots,N$),
\begin{equation}
    V^!_L = \bigotimes_{i=1}^N \left(V_{L_{\texttt{Sb}},i}^!\otimes V_{L_{\texttt{Ff}},i}\right)~.
\end{equation}
For a given sign vector $\epsilon$, we bosonise the $i$'th symplectic boson pair in $V_{L_{\texttt{Sb}}}^!$ and the $i$'th free fermion in $V_{L_{\texttt{Ff}}}$ according to the relevant sign in $\epsilon$, and so realise an $\epsilon$-dependent embedding,
\begin{equation}
    \mathcal{X}\xhookrightarrow{\epsilon} V_{L}^! \hookrightarrow V_L~.
\end{equation}
In particular, the sum of the $i$'th symplectic boson and free fermion $\glf_1$ currents bosonise as $(X_iY^i)+(\chi_{i}\xi^i) = \epsilon_i(J_{\sigma_i}+J_{\gamma_i})$~.

We first consider a change of basis analogous to that introduced in our treatment of $T[\mathrm{SU}(2)]$. In fact, we consider a direct generalisation of~\eqref{eq:TSU2-COB},
\begin{equation}\label{eq:gen-COB}
\left(
        \begin{array}{c}
        \boldsymbol\eta \\[1pt] \hdashline
        \\[\dimexpr-\normalbaselineskip+3pt]
        \boldsymbol{\tilde{\eta}}\\[1pt] \hdashline
        \\[\dimexpr-\normalbaselineskip+3pt]
        \boldsymbol\phi \\[1pt] \hdashline
        \\[\dimexpr-\normalbaselineskip+3pt]
        \boldsymbol\delta \\[1pt] \hdashline
        \\[\dimexpr-\normalbaselineskip+3pt]
        \boldsymbol\omega
        \end{array} 
    \right)
    =
    \left(
        \begin{array}{c:c:c}
        ~~Q_{\epsilon}~~  &  ~~0~~ & ~~Q_{\epsilon}~~ \\[1pt] \hdashline
        &&\\[\dimexpr-\normalbaselineskip+3pt]
        ~~Q_{\epsilon}~~  &  ~~-Q_{\epsilon}~~ & ~~0~~ \\[1pt] \hdashline
        &&\\[\dimexpr-\normalbaselineskip+3pt]
        ~~\wt{Q}_{\epsilon}~~  & ~~0~~  & ~~0~~ \\[1pt] \hdashline
        &&\\[\dimexpr-\normalbaselineskip+3pt]
        ~~0~~  &  ~~-\wt{Q}_{\epsilon}~~ \ & ~~0~~ \\[1pt] \hdashline
        &&\\[\dimexpr-\normalbaselineskip+3pt]
        ~~F_{\epsilon}^T~~  &  ~~-F_{\epsilon}^T~~ & ~~\mathbf{1}_{N\times N}~~
        \end{array} 
    \right)
     \left(
        \begin{array}{c}
        {}\\[-5pt]
        \boldsymbol\sigma \\[7pt] \hdashline
        \\[\dimexpr-\normalbaselineskip+2pt]
        {}\\[-9pt]
        \boldsymbol\rho \\[7pt] \hdashline
        \\[\dimexpr-\normalbaselineskip+2pt]
        {}\\[-9pt]
        \boldsymbol\gamma\\[-4pt]
        {}
        \end{array} 
    \right)~,
\end{equation}
where we define the $N\times N$ matrices
\begin{equation}\label{eq:F-A-def}
    A_\epsilon \coloneqq 
    \begin{pmatrix}
        Q_\epsilon \\ \wt{Q}_\epsilon
    \end{pmatrix}~, \qquad 
    B_\epsilon \coloneqq
    \begin{pmatrix}
        Q_\epsilon \\ 0
    \end{pmatrix}~,\qquad
    F_\epsilon = A_\epsilon^{-1}B_\epsilon~,
\end{equation}
and as before, $Q_\epsilon$, $\wt{Q}_\epsilon$ are the charge matrices $Q$ with the signs of the columns flipped according to $\epsilon$ and its mirror. In the expression \eqref{eq:gen-COB} we have assembled various elements of our rank $3N$ Heisenberg algebra into vectors denoted by bold variables, $\boldsymbol\sigma=(\sigma_1, \ldots , \sigma_N)^T$, $\boldsymbol\eta = (\eta_1, \ldots ,\eta_k)^T$, and so on. (Below, we will continue to vectorise various quantities and denote these with boldfaced versions of the same symbols as used to denote the individual components.)

This improved basis enjoys many good properties that generalise those utilised in our analysis of the $T[\mathrm{SU}(2)]$ $A$-twisted VOA. First and foremost, the BRST operator now takes the form
\begin{equation}
    J^{\rm BRST} = c_aJ_{\eta_a}~,
\end{equation}
while the bilinear form encoding the OPEs of Heisenberg currents takes the relatively simple schematic form,
\begin{equation}\label{eq:OPE-blocks}
    \left(
    \begin{array}{c:c:c:c:c}
        ~~0~~ \ & \ \mathrm{OPE}(\boldsymbol{\eta}\times\boldsymbol{\tilde{\eta}}) \ & \ ~~0~~ \ & \ ~~0~~ \ & \ ~~0~~ \\[1pt]\hdashline
        &&\\[\dimexpr-\normalbaselineskip+3pt]
       \mathrm{OPE}(\boldsymbol{\eta}\times\boldsymbol{\tilde{\eta}}) \ & \ ~~0~~ \ & \ 0 \ & \ ~~0~~ \ &\ ~~0~~ \\[1pt]\hdashline
        &&\\[\dimexpr-\normalbaselineskip+3pt]
        ~~0~~ \ &  \ ~~0~~ \ & \ ~~\mathrm{OPE}(\boldsymbol{\phi})~~ \ & \ ~~0~~ & \ ~~0~~ \\[1pt]\hdashline
        &&\\[\dimexpr-\normalbaselineskip+3pt]
        ~~0~~ \ & \ ~~0~~ \ & \ ~~0~~ \ & \ ~~\mathrm{OPE}(\boldsymbol{\delta})~~ \ & \ ~~0~~ \\[1pt]\hdashline
        &&\\[\dimexpr-\normalbaselineskip+3pt]
        ~~0~~ \ & \ ~~0~~ \ & \ ~~0~~ \ & \ ~~0~~ \ & \ ~~\mathrm{OPE}(\boldsymbol{\omega})~~
    \end{array} 
    \right)~,
\end{equation}
where the locations of nontrivial blocks of OPEs are indicated by $\mathrm{OPE}(-)$. The specific form of the change of basis matrix further guarantees that we have
\begin{equation}
    \mathrm{OPE}(\boldsymbol{\delta}) = - \mathrm{OPE}(\boldsymbol{\phi})~,\qquad\mathrm{OPE}(\boldsymbol{\omega}) = \mathbf{1}_{N\times N}~.
\end{equation}
In general, $\mathrm{OPE}(\boldsymbol{\eta}\times\boldsymbol{\tilde{\eta}})$ and $\mathrm{OPE}(\boldsymbol\phi)$ are not diagonal (though they are always non-degenerate). They instead reflect the matrix of inner products amongst the gauge charge vectors (rows of $Q_\epsilon$) in the former case, and amongst the flavour charge vectors encoded in the rows of $\wt{Q}_\epsilon$ in the latter.

The lattice $L^!$ along with the Heisenberg vertex algebra is extended can be expressed in the basis of these new variables by inspecting the inverse of the change-of-basis matrix in \eqref{eq:gen-COB} (technical details can be found in Appendix~\ref{app:cob}). We ultimately find the expression
\begin{equation}
    L^! = \bigoplus_{i=1}^N \mathbb{Z} \alpha_i \oplus \mathbb{Z} \beta_i~,
\end{equation}
where
\begin{equation}
\begin{split}
    \boldsymbol\alpha &\coloneqq A^{-1}_\epsilon
    \begin{pmatrix}
        \boldsymbol{\tilde{\eta}} \\ \mathbf{0}
    \end{pmatrix} 
    + A^{-1}_\epsilon
    \begin{pmatrix}
        \mathbf{0} \\ \boldsymbol\phi + \boldsymbol\delta
    \end{pmatrix}~,\\
    \boldsymbol\beta &\coloneqq A^{-1}_\epsilon
    \begin{pmatrix}
        \boldsymbol{\tilde{\eta}} \\ \mathbf{0}
    \end{pmatrix}
    +\boldsymbol\omega~.
     \end{split}
\end{equation}
Note in particular that $L^!$ does include any vectors with nonzero $\boldsymbol\eta$ components. In analogy with~\eqref{eq:TSU2-VFRR-lattice}, we define a lattice refinement, $\tilde{L}^!$ according to\footnote{Here for dimensional reasons there is a redundancy amongst the lattice directions appearing in each of the first two direct sums; in examples we will write this extension more compactly by identifying basis vectors over $\mathbb{Z}$ for these vector spaces.}
\begin{equation}\label{eq:new-lattice}
    L^!\subset\tilde{L}^! = \left(\bigoplus_{i=1}^{N} \mathbb{Z}\left(A^{-1}_\epsilon \right)_i
    \begin{pmatrix}
        \boldsymbol{\tilde{\eta}} \\ \mathbf{0}
    \end{pmatrix}\right)
    \oplus\left(\bigoplus_{i=1}^N\mathbb{Z}\left(A^{-1}_\epsilon\right)_i
    \begin{pmatrix}
        \mathbf{0} \\ \boldsymbol{\phi}+\boldsymbol{\delta}
    \end{pmatrix}\right)
    \oplus\left(\bigoplus_{i=1}^N\mathbb{Z}\,\omega_i\right)~.
\end{equation}

When the Heisenberg vertex algebra is extended by this refined lattice, then the $\omega_i$ can be treated as bosonisations of $N$ complex free fermions,
\begin{equation}\label{eq:gen_free_fermions}
    \psi^i = e^{-\omega_i}~,\qquad~\wt{\psi}_i = e^{\omega_i}~,
\end{equation} 
with singular OPEs,
\begin{equation}
    \psi^i(z) \wt{\psi}_j(w) \sim \frac{\delta^i_j}{z-w}~.
\end{equation}
We then have the decomposition
\begin{equation}
    V_{\tilde{L}}^! \cong V_{\boldsymbol{\eta}} \otimes \wt{V}_{\mathrm{FFR}}~,
\end{equation}
where $V_{\boldsymbol\eta}$ is the relevant lattice extension (by the first parenthetical grouping in \eqref{eq:new-lattice}) of the Heisenberg vertex algebra generated by the $J_{\eta_i}$ and $J_{\tilde{\eta}_i}$, and $V_{\rm FFR}$ is the free field vertex algebra in which we will ultimately locate our free field realisation, which consists of the lattice extension of the $J_{\phi_i}$ and $J_{\delta_i}$ Heisenbergs by the second parenthetical grouping in \eqref{eq:new-lattice}, as well as the $N$ complex free fermions \eqref{eq:gen_free_fermions}.

The BRST problem is now isolated $V_{\boldsymbol\eta}$, so we have for our semi-infinite cohomology,
\begin{equation}
    H^{\frac{\infty}{2}+\bullet}(\gf,\gf_0, V^!_{\tilde{L}}) \cong H^{\frac{\infty}{2}+\bullet}(\gf,\gf_0, V_{\boldsymbol{\eta}})\otimes \wt{V}_{{\rm FFR}}~.
\end{equation}
Now by a change of basis for the components of $\boldsymbol{\tilde{\eta}}$, we can always diagonalise $\mathrm{OPE}(\boldsymbol{\eta} \times \boldsymbol{\tilde{\eta}})$. Consequently, $V_{\boldsymbol\eta}$ is essentially (up to a possible restriction of charge vectors) just the tensor product of $k$ copies of $V_\eta$ from our main example various gauged $\wt{\glf}_1$'s act separately in each copy. Our analysis for the semi-infinite cohomology from before then carries over essentially without modification, and we have
\begin{equation}
    H^{\frac{\infty}{2}+\bullet}(\gf,\gf_0, V_{\boldsymbol{\eta}})\cong\mathbb{C}~,\qquad  H^{\frac{\infty}{2}+\bullet}(\gf,\gf_0, V^!_{L}) \cong V_{{\rm FFR}}~,
\end{equation}
where $V_{{\rm FFR}}$ is the subalgebra of $V_L^!$ that has no momentum in the $\boldsymbol{\tilde{\eta}}$ direction.

\paragraph{Remark.} By a rational change of variable, the matrix of inner products $\mathrm{OPE}(\boldsymbol{\delta}) = - \mathrm{OPE}(\boldsymbol{\phi})$ can be diagonalised as well. Thus, subject to a refinement of the lattice by which the Heisenberg vertex algebras are extended, we can identify
\begin{equation}\label{eq:refinement}
    V_{\rm FFR}\subseteq V^{\rm ref}_{\rm FFR}\cong \mathcal{D}^{(ch)}(\mathbb{C}^\times)^{\otimes (N-k)}\otimes \texttt{Ff}^{\otimes N}~.
\end{equation}
Though this is a nice and uniform expression, in our study of examples, we will not always perform such a change of basis and refinement.

\subsection{\label{subsec:general_embedding}Embedding of the \texorpdfstring{$A$}{A}-twisted vertex algebra}

As in our main example, the gauge theory VOA is now identified with the intersection of the kernels of the $N$ screening currents $\mathfrak{s}_i =  e^{\rho_i}$ restricted to act on $V_{\mathrm{FFR}}$. These restrictions can be identified using the expression for the $\rho_i$ in terms of the modified Heisenberg basis,
\begin{equation}\label{eq:rhodef}
    \boldsymbol\rho =  A^{-1}_\epsilon  
    \begin{pmatrix}
        \boldsymbol{\eta} \\ -\boldsymbol{\delta}
    \end{pmatrix}
    - F_{\epsilon}\,\boldsymbol{\omega}~.
\end{equation}
We write the restrictions to $V_{\mathrm{FFR}}$ as
\begin{equation}\label{eq:general_screenings}
    \tilde{\mathfrak{s}}_i = e^{\bar{\rho}_i}~,
\end{equation}
where
\begin{equation}\label{eq:free_field_screenings}
    \boldsymbol{\bar{\rho}} =  -A^{-1}_\epsilon
    \begin{pmatrix}
        Q_\epsilon \boldsymbol{\omega} \\ \boldsymbol{\delta}
    \end{pmatrix}~.
\end{equation}
Thus we have a general characterisation for these $A$-twisted vertex algebras as $N$-fold intersections of kernels of the screening charges.

In the examples to follow, we will not endeavour to systematically determine the joint kernel of these screening charges for our free field realisations. Rather, as in the main example of $T[\mathrm{SU}(2)]$, we will identify a standard set of gauge/BRST-invariant operators which can be bosonised and rewritten (up to the addition or BRST exact terms) as elements of $V_{\rm FRR}$). Conjecturally, these will generate (perhaps even strongly generate) the corresponding $A$-twisted VOAs.

Our standard set of BRST-invariant operators (written in gauge theory variables), which extrapolates in the obvious manner the list in \eqref{eq:tsu2-closed-ops} for $T[\mathrm{SU}(2)]$, is as follows:
\begin{itemize}
     \item Affine currents in correspondence with the $T_H$ moment maps $\mu_{H,\mathbb{C}}^a$ for the flavour torus acting on hypermultiplets only,
    \begin{equation}\label{eq:closed-curr-1}
        K_a = \sum_{i=1}^N \wt{Q}_{a}^i X_i Y^i~,\qquad a=1,\ldots,N-k~.
    \end{equation}
    \item Fermionic currents $X_i \xi^i$ and $Y^i\chi_i$ for $i=1,\ldots,N$.
    \item Affine currents corresponding to the fermionic versions of the flavour symmetry moment maps,
    \begin{equation}\label{eq:closed-curr-2}
        J_a = \sum_{i=1}^N \wt{Q}_{a}^i \chi_i \xi^i~,\qquad a=1,\ldots,N-k~.
    \end{equation}
    \item Chiral analogues of the monomial generators $W^A$ of the Higgs branch chiral ring. These can always be written unambiguously (no ordering ambiguities arise) as the same monomials as the Higgs branch operators, with $X$'s and $Y$'s reinterpreted as symplectic bosons.
    \item Cousins of the $W^A$ obtained by performing all possible collections of substitutions of the form $X_i \leftrightarrow \chi_i$ or $Y^i \leftrightarrow \xi^i$.
\end{itemize}
We note that the above list is overcomplete as a potentially minimal set of strong generators. In particular, the affine currents $J_a$ as well as the fermionic cousins of the $W^A$ described in the last bullet point appear in the (iterated) singular OPEs of the fermionic currents with the $K_a$ and the $W^A$, respectively.

It is easy to see that the generators we used in the $T[\mathrm{SU}(2)]$ case correspond indeed to the operator listed above specialised to the example. We do not have a proof that this list forms a set of (weak) generators; if this is not the case, then the above operators will generate a sub-VOA and other operators would need to be included. It would be interesting to address this question in general using the screening charge characterisation of these VOAs.

(Unique) representatives in $V_{\rm FFR}$ can be obtained for all of these operators in the same algorithmic manner as in the previous section. In particular, the currents in~\eqref{eq:closed-curr-1}~and~\eqref{eq:closed-curr-2} take the very simple forms
\begin{equation}
    \bar{J}_a = \sum_{i=1}^N \wt{Q}_{\epsilon,a}^i \psi^i \wt{\psi}_i~,\qquad \bar{K}_a = -J_{\phi_a}~.
\end{equation}
The BRST representatives of the other operators can be obtained by performing the now-familiar explicit steps illustrated in Tables~\ref{tab:TSU2-free-field}~and~\ref{tab:TSU2-free-field-2}. We will see many examples of this below in Section~\ref{sec:voa_examples}.

\subsection{\label{subsec:outer_out}Outer automorphisms from free fermionic symmetries}

We now turn to the question of realising outer automorphism symmetries in these general free field realisations. There are $k$ interesting affine currents, which we can build from the gauge theory fermions as follows,
\begin{equation}
    \sum_{j=1}^N Q_i^j\chi_j \xi^j~,\qquad i=1,\ldots,k~.
\end{equation}
These are \emph{not BRST closed}; one needs to add the analogous currents with symplectic bosons replacing the free fermions to realise the BRST closed (and exact) gauge currents. However, the \emph{zero modes} of these operators commute with the BRST operator when restricted to the relative subcomplex of the Feigin standard complex (where the $c$-ghost zero modes are absent).\footnote{We could equivalently consider the affine currents built only from the symplectic bosons with the same charge matrices; the two are related (up to a sign) by the addition of the BRST-exact $\mathcal{J}_a$. We will find it useful to think of these in fermionic terms.} These zero modes therefore generate a $\mathfrak{u}(1)^k$ outer automorphism symmetry of the $A$-twisted vertex algebra---it is natural to interpret this as an avatar of the $\mathrm{U}(1)_C^k$ UV topological symmetry of the Coulomb branch. 

It is expected that this outer automorphism symmetry will be enlarged when there is an IR enhancement of the Coulomb branch symmetry. We saw in our main example that this enhancement could be made manifest at the level of the free field realisation, but only for certain choices of bosonisation vector $\epsilon$.

In the free field realisations described above, these additional currents can be written in terms of the modified (free field) fermions and the component of $\boldsymbol{\tilde\eta}$ (\cf\ Appendix \ref{app:cob}),
\begin{equation}
    \sum_{j=1}^N Q_{\epsilon,i}^j\wt{\psi}_j\psi^j-Q_{\epsilon,i}^a(A_\epsilon^{-1})_a^{\phantom{a}b}J_{\tilde{\eta}_b}~,\qquad i=1,\ldots,k~.
\end{equation}
In the zero mode of this current, the $\tilde\eta$ terms only act to shift an operator by BRST exact ($\eta$-dependent) terms. Restricting to representatives in $V_{\rm FFR}$, we have an action of the zero modes of the operators
\begin{equation}
    \sum_{j=1}^N Q_{\epsilon,i}^j\wt{\psi}_j\psi^j~.
\end{equation}
Though the currents themselves are not annihilated by the screenings \eqref{eq:general_screenings}, their zero modes are diagonalised by the screening currents. Consequently they act within the kernel of the screening charges so give rise to outer automorphisms of the kernel vertex algebra.

\medskip

It is then relatively clear what should be the natural mechanism leading to outer automorphism symmetry enhancement, generalising the observed structure of Section \ref{subsubsec:outer_aut_tsu2}. From the complete set of free fermions in $V_{\rm FFR}$ we can realise a standard construction of an $V_1(\mathfrak{so}_{2N})$ current algebra, whose Cartan subalgebra is generated by both the $T_C$ outer automorphism currents above and the moment map currents $K_a$. Then complementary to the subset of  $V_1(\mathfrak{so}_{2N})$ currents that lie in the kernel of the screening charges enhancing the inner affine symmetry of the vertex algebra, there may be further currents whose zero modes act \emph{preserve the $\mathbb{C}$-span of the screening charges}. These will act as outer automorphisms in the kernel vertex algebra.

The physical expectation is that the Coulomb branch symmetry will be enhanced when there is a \emph{balanced} subgroup of the gauge group, $\mathrm{U}(1)_S\subset \mathrm{U}(1)^k$ where there are exactly two hypermultiplets charged under $\mathrm{U}(1)_S$ (each with charge $\pm1$), as this is the setting where elementary monopole operators act as additional moment map operators in the Coulomb branch chiral ring.

It turns out that indeed, there is an outer automorphism enhancement by zero modes of additional generators of the fermionic $V_1(\mathfrak{so}_{2N})$ precisely when there is such a balanced subgroup \emph{and the corresponding charged hypermultiplets are bosonised appropriately}. We have the following.
\begin{lemma}
    For a given bosonisation $\epsilon$, we will call a subgroup $\mathrm{U}(1)_S\subset \mathrm{U}(1)^k$ \emph{sign balanced} if exactly two hypermultiplets are charged under it, and they are assigned charges $+1$ and $-1$ after the sign-flips dictated by $\epsilon$. Then a $\mathfrak{u}(1)\subset \mathfrak{u}(1)^k$ outer automorphism will enjoy an $\slf_2$ enhancement generated by zero modes of free fermionic currents if and only if the corresponding $\mathrm{U}(1)$ subgroup of the gauge group is sign balanced.
\end{lemma}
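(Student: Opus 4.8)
The plan is to reduce the statement to a concrete computation of how the zero mode of a candidate enhancing current acts on the screening currents \eqref{eq:general_screenings}, and to isolate a single sign as the discriminator between the two cases. First I would reduce to a single abelian factor: a subgroup $\mathrm{U}(1)_S\subset T_G$ is a primitive cocharacter $n$, and its $\epsilon$-flipped charge vector is the row vector $c = n^T Q_\epsilon$, which by construction lies in the row span of $Q_\epsilon$. The associated outer-automorphism Cartan current is $H_S = \sum_j c_j\,\wt{\psi}_j\psi^j = \sum_j c_j J_{\omega_j}$, a Cartan element of the fermionic $V_1(\mathfrak{so}_{2N})$. The only free-fermion bilinears carrying definite $\boldsymbol\omega$-momentum are the root vectors $e^{\mu}$ with $\mu\in\{\pm e_a\pm e_b\}$, $a\neq b$; requiring that $\{E,F,H_S\}$ close into $\slf_2$ with $E=e^{\mu}$, $F=e^{-\mu}$ forces the coroot relation $[E_0,F_0]\propto \sum_j \mu_j J_{\omega_j}=H_S$, i.e. $c=\mu$. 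Hence closure alone forces $c$ to be a root: exactly two hypermultiplets are charged, each with charge $\pm1$. This is the skeleton of the ``only if'' direction; it remains to see which signs survive.

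The heart of the argument is the adjoint action of $E_0$ on the screenings. From \eqref{eq:free_field_screenings} the $\boldsymbol\omega$-momentum of $\tilde{\mathfrak{s}}_i$ is the $i$-th row of $-F_\epsilon$, while the $\boldsymbol\omega$ bilinear form is the identity and $\boldsymbol\delta\perp\boldsymbol\omega$. I would exploit that $F_\epsilon=A_\epsilon^{-1}B_\epsilon$ (\cf\ \eqref{eq:F-A-def}) is an idempotent acting as the identity on the row span of $Q_\epsilon$: writing $A_\epsilon^{-1}=(C\,|\,D)$ with $C$ its first $k$ columns, the block identity $A_\epsilon A_\epsilon^{-1}=\mathbf 1$ gives $Q_\epsilon C=\mathbf 1_k$, whence $F_\epsilon=CQ_\epsilon$ obeys $F_\epsilon^2=F_\epsilon$ and $cF_\epsilon=c$ for every $c$ in the row span of $Q_\epsilon$. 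The zero mode $E_0=\mathrm{Res}\,e^{\mu}$ shifts the $\boldsymbol\omega$-momentum of a screening by $\mu=c$ whenever the relevant lattice OPE has a simple pole, so applying it to the two charged screenings $\tilde{\mathfrak{s}}_a,\tilde{\mathfrak{s}}_b$ and using $cF_\epsilon=c$ reduces everything to the elementary identities $(F_\epsilon)_a-(F_\epsilon)_b=c$ when $c=e_a-e_b$, and $(F_\epsilon)_a+(F_\epsilon)_b=c$ when $c=e_a+e_b$.

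This yields a dichotomy that proves both directions at once. If $\mathrm{U}(1)_S$ is sign balanced, $c=e_a-e_b$, the shift sends the momentum $-(F_\epsilon)_a$ to $-(F_\epsilon)_a+(e_a-e_b)=-(F_\epsilon)_b$, i.e. $E_0$ maps $\tilde{\mathfrak{s}}_a$ to a multiple of $\tilde{\mathfrak{s}}_b$ and the $\C$-span of the screenings is preserved, exactly as in the $T[\mathrm{SU}(2)]$ analysis of Section~\ref{subsubsec:outer_aut_tsu2} (\cf\ \eqref{eq:tsu2_reduced_screenings}). If instead $c=\pm(e_a+e_b)$, the same computation gives $-(F_\epsilon)_a+(e_a+e_b)=+(F_\epsilon)_b$: the image momentum carries the \emph{opposite} overall sign to every screening momentum $-(F_\epsilon)_l$, so it cannot equal any $\tilde{\mathfrak{s}}_l$ and the span is broken. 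Since a zero mode that fails to preserve the screening span does not descend to the joint screening kernel (its commutator with a screening charge is no longer a combination of screening charges), the $\slf_2$ of the unbalanced case does not act by automorphisms, establishing the equivalence. Finally one checks, by the same resonance, that in the balanced case $E$ itself is not annihilated by the screening charges, so the symmetry is genuinely outer rather than inner.

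The step I expect to be most delicate is controlling $E_0$ on the remaining screenings $\tilde{\mathfrak{s}}_l$ with $l\neq a,b$ in the balanced case, and confirming that the pole relevant to the unbalanced case is a genuine simple pole, so that the image $+(F_\epsilon)_b$ is a nonzero pure-momentum state and the span-violation is real rather than vacuous (in $T[\mathrm{SU}(2)]$ the inner product is indeed $-1$, as one sees by comparing the $(+-)$ and $(++)$ screenings). Both points reduce to verifying, from the explicit structure of $F_\epsilon$ and the unimodularity of $\wt{Q}$, that $\mathrm{ad}_{E_0}$ stabilises the finite set of screening momenta in the balanced case and ejects at least one of them in the unbalanced case; the identity $cF_\epsilon=c$ is the lever that makes these checks tractable. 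One should also record that restricting to $\slf_2$'s generated by free-fermion currents is what excludes non-root (e.g. principal) $\slf_2$'s whose Cartan would not take the form $\sum_j c_j J_{\omega_j}$.
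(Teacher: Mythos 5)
Your proposal is correct in outline and follows essentially the same route as the paper: both arguments rest on the explicit form of the screening momenta in the rotated basis, the fact that any candidate enhancing $\slf_2$ must be a root $\slf_2$ of the fermionic $V_1(\mathfrak{so}_{2N})$, and the single algebraic input $A_\epsilon A_\epsilon^{-1}=\mathbf{1}_{N\times N}$ (which you package as idempotency of $F_\epsilon$ and the identity $cF_\epsilon=c$, and the paper packages as equality of rows of $A_\epsilon^{-1}$).

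The one place where your argument is weaker than the paper's is the exclusion of the unbalanced roots $c=\pm(e_a+e_b)$: ``the image momentum carries the opposite overall sign'' is not by itself a proof that $+(F_\epsilon)_b\neq -(F_\epsilon)_l$ for all $l$, and you rightly flag this as the delicate step. The clean discriminator --- and the one the paper actually uses --- is the $\boldsymbol\delta$-momentum rather than the $\boldsymbol\omega$-momentum. Writing $A_\epsilon^{-1}=(C\,|\,D)$, membership of $c$ in the row span of $Q_\epsilon$ gives $cD=0$ because $Q_\epsilon D=0$; hence $D_a=D_b$ in the balanced case but $D_a=-D_b$ in the unbalanced case. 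Since a free-fermion bilinear carries no $\boldsymbol\delta$-momentum, it can only relate two screenings with \emph{equal} $\boldsymbol\delta$-momenta, which kills the unbalanced case immediately (up to the degenerate possibility $D_a=D_b=0$, which is excluded by $F_\epsilon^2=F_\epsilon=F_\epsilon^T$). Folding this into your computation also supplies the $\boldsymbol\delta$-matching you leave implicit when asserting $E_0\tilde{\mathfrak{s}}_a\propto\tilde{\mathfrak{s}}_b$ in the balanced case.
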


\begin{proof}   
    The key to proving this is to rewrite the screening charges in a more convenient form. Recall the expression from \eqref{eq:free_field_screenings}),
    \begin{equation}\label{eq:screening-form-0}
        \tilde{\mathfrak{s}}_i = e^{\bar\rho_i}~,\qquad \boldsymbol{\bar{\rho}} = A_\epsilon^{-1}\begin{pmatrix}
        Q_\epsilon \boldsymbol{\omega} \\
       -\boldsymbol{\delta}
    \end{pmatrix}~.
    \end{equation}
    In light of the definition \eqref{eq:F-A-def} of $A_\epsilon$, this can be rewritten as
    \begin{equation}\label{eq:screening-form}
        \boldsymbol{\bar{\rho}} = A_\epsilon^{-1}
        \begin{pmatrix}
            \mathbf{0} \\ -\boldsymbol\delta + \widetilde{Q}_\epsilon \boldsymbol\omega
        \end{pmatrix}+\boldsymbol\omega~.
    \end{equation}
    Now in order for any two screening currents---say $\tilde{\mathfrak{s}}_1$ and $\tilde{\mathfrak{s}}_2$---to be related by a free fermionic symmetry of the above type, they must have the same dependence on the components of $\boldsymbol{\delta}$, so in particular we would require 
    \begin{equation}\label{eq:screening-row-cond}
        \left(A_\epsilon^{-1}\right)_1^j = \left(A_\epsilon^{-1}\right)_2^j~,\quad j>k~. 
    \end{equation}
    Thus the only possibility is that the outer automorphism in question is generated by the zero modes of
    \begin{equation}
        \psi^1\wt{\psi}_2 = e^{-\omega_1+\omega_2}~,\qquad \psi^2\wt{\psi}_1 = e^{-\omega_2+\omega_1}~,\qquad J_{\omega_1}-J_{\omega_2}~.
    \end{equation}
    The two screening currents in question will indeed transform in the standard representation of this $\slf_2$, and the remaining screening currents will be invariant, if and only if the dependence on $\omega_1$ and $\omega_2$ in $\wt{Q}_\epsilon\boldsymbol\omega$ is entirely through the combination $\omega_1+\omega_2$. This in particular means that the vector $(1,-1,0,\ldots)$ must be orthogonal to the rows of $\wt{Q}_\epsilon$, and must therefore be in the span of the rows of $Q_\epsilon$.

    To complete the proof, we need only observe that if $(1,-1,0,\ldots)$ \emph{is} in the gauge charge lattice (so is in the span of the rows of $Q_\epsilon$), then the form of $A_\epsilon$ and the basic relation $A_\epsilon A_\epsilon^{-1}=\mathbf{1}_{N\times N}$ implies \eqref{eq:screening-row-cond}.
\end{proof}

\medskip

\paragraph{Remark.} The analysis here does not preclude the existence of additional outer automorphisms of $A$-twisted vertex algebras that do not arise in the manner described here for \emph{any} choice of bosonisation/free field realisation. As an example, the $A$-twisted vertex algebra for SQED[$1$] is just the symplectic fermion vertex algebra \cite{Gaiotto:2016wcv,Costello:2018fnz}, which enjoys an $\slf_2\cong\mathfrak{sp}_1$ outer automorphism symmetry. However, the free field realisation for this theory produced by our methods here is just the usual one, in which the symplectic fermion is realised as the kernel of a single screening charge in the free fermion VOA. In this case the outer automorphism does not act in the free field space at all.

Physically, this scenario amounts to the presence of an under-balanced $\mathrm{U}(1)$, giving an \emph{ugly} theory in the sense of \cite{Gaiotto:2008ak}. So in this case, the enhanced Coulomb branch symmetry is associated to emergent free fields. Perhaps unsurprisingly, this seems to be a case that requires special treatment

\section{\label{sec:fin-geom}Finite-dimensional considerations}

In this section we offer a brief investigation into the analogue of our free field constructions at the level of the finite-dimensional geometry of the Higgs branch, generalising the remarks presented in the $T[\mathrm{SU}(2)]$ case in Section~\ref{subsec:TSU2-geom}. The guiding principle is that a given bosonisation prescription corresponds to a localisation on a particular Zariski open subset of the associated variety, which we expect to be isomorphic to the Higgs branch of the respective theory. In VOAs arising in connection with four-dimensional SCFTs, these open subsets have been found to often take the form $T^\ast\left(\mathbb{C}^\times\right)^m \times T^\ast\left(\mathbb{C}\right)^n$, endowed with a standard symplectic structure. The finite-dimensional procedure in the present examples will lead to just such a result.

This section is structured as follows. First, we review how a given bosonisation prescription corresponds to restriction to a subspace $T^\ast\left(\mathbb{C}^\times\right)^N\subset T^\ast\mathbb{C}^N$ of the free hypermultiplet Higgs branch prior to performing symplectic reduction, which then results in the cotangent bundle $T^\ast\left(\mathbb{C}^\times\right)^{N-k}$ (we recall here that $N$ and $k$ are the number of hypermultiplets and the rank of the gauge group, respectively). Second, working in terms of hyperplane arrangements, we show that for each bosonisation, for some choice of resolution parameter $\zeta \in \mathfrak{t}_G^*$, the ``reduced torus'' $T^\ast\left(\mathbb{C}^\times\right)^{N-k}$ can be identified with an open subset of the resolved Higgs branch $\cM_{H,\zeta}$. Finally, we show that for particular choices, the reduced torus enjoys an open immersion into the (unresolved) singular affine variety $\cM_H$ itself. From a mirror perspective, we relate these embeddings to those arising from restriction functors for the (mirror) Coulomb branch chiral rings discussed in~\cite{Kamnitzer:2022zkv}. (This perspective has the additional benefit of predicting when some of the $T^\ast\mathbb{C}^\times$ components can be extended across the origin, providing open subsets of the form $T^\ast\left(\mathbb{C}^\times\right)^m \times T^\ast\left(\mathbb{C}\right)^n$. This may suggest a potential for \emph{de-bosonisation} of some free field realisations in terms of symplectic bosons, \cf\ \ref{subsubsec:sln_slower_free}.)

Throughout this section, all groups $T_G$, $T_H$, $T_C$, $T_V$ will be complexified unless otherwise stated and we omit the subscript $\mathbb{C}$. The corresponding complex Lie algebras will similarly be denoted by $\mathfrak{t}_G$, $\mathfrak{t}_H$, $\mathfrak{t}_C$, $\mathfrak{t}_V$. 

\subsection{\label{sec:sympl-red}Hypertoric symplectic reduction}

As in Section~\ref{subsec:TSU2-geom}, to bosonisation of the symplectic bosons $(X_i,Y^i)$ we associate a finite-dimensional (non-chiral) analogue. Namely, considering the corresponding functions $(Y^i,X_i)$ on $T^\ast\mathbb{C}$, we can choose to identify two different subspaces $T^\ast\mathbb{C}^\times\subset T^\ast\mathbb{C}$ by inverting either $X_i$ or $Y^i$. Given $N$ symplectic boson pairs, a choice of bosonisation is encoded in a sign vector $\epsilon$, and we identify this with a choice of subset $T^\ast\left(\mathbb{C}^\times\right)^N \subset T^\ast V \cong T^\ast\left(\mathbb{C}\right)^N$, where either $X_i$ or $Y^i$ is inverted according to the sign $\epsilon_i$ ($X_i$ if $+$, $Y^i$ if $-$). As a matter of convenience, throughout this section we will utilise an equivalent way of encoding these choices in terms of subsets $U\subset \{1,\ldots , N\}$, indicating that we invert $Y^i$ if $i \in U$ and $X_i$ if $i \not\in U$. Thus, for a given choice of bosonisation $\epsilon$
\begin{equation}
    U \coloneqq \big\{ i \in \{1,\ldots , N \}~|~\epsilon_i=-\big\}~.
\end{equation}
We denote the corresponding subset $T^\ast\left( \mathbb{C}^\times\right)^N\subset T^\ast\mathbb{C}^N$ by $T^\ast V_U$. This is a holomorphic symplectic variety with symplectic form pulled back from $T^\ast V$. It carries an action of $T_G$ and we can consider the corresponding holomorphic symplectic quotient.

\begin{lemma} 
    The symplectic reduction of $T^\ast V_U$ by $T_G$ is isomorphic as a holomorphic symplectic variety to $T^\ast T_H \cong T^\ast \left(\mathbb{C}^\times\right)^{N-k}$.
\end{lemma}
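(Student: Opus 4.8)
The plan is to reduce the statement to the standard fact that the cotangent bundle of an algebraic torus, symplectically reduced by a subtorus acting by translation, is the cotangent bundle of the quotient torus. The sign data $\epsilon$ (equivalently the subset $U$) enters only in the initial change of coordinates and plays no role in the reduction itself.

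\textbf{Step 1 (Coordinates).} First I would exhibit an explicit holomorphic symplectomorphism $T^\ast V_U \cong T^\ast T_V$. For each $i$ set the base coordinate $Z_i \coloneqq X_i$ if $i\notin U$ and $Z_i \coloneqq Y^i$ if $i\in U$; by construction $Z_i$ is invertible on $T^\ast V_U$, so the $Z_i^{\pm 1}$ generate the coordinate ring of $T_V\cong(\mathbb{C}^\times)^N$. As conjugate momenta take the meson functions $P_i \coloneqq X_iY^i$. A direct check of Poisson brackets (using $\{Y^i,X_j\}=\delta^i_j$) gives $\{Z_i,Z_j\}=\{P_i,P_j\}=0$ and $\{P_i,Z_j\}=\epsilon_i\delta_{ij}Z_j$, which is precisely the canonical (logarithmic) Poisson structure on $T^\ast T_V = T_V\times\mathfrak{t}_V^\ast$, with the mesons $P_i$ spanning the cotangent fibre $\mathfrak{t}_V^\ast$.

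\textbf{Step 2 (Splitting via the short exact sequence).} Under this identification the $T_G$ action is the cotangent lift of the translation action of $T_G$ on $T_V$ through the embedding $Q_\epsilon^T\colon \mathfrak{t}_{G,\mathbb{Z}}\hookrightarrow\mathfrak{t}_{V,\mathbb{Z}}$: the base coordinates $Z_i$ carry the $T_G$-weights recorded in the sign-flipped matrix $Q_\epsilon$, and the complex moment map $\mu_{\mathbb{C}}^a = \sum_i Q^a_i X_iY^i$ becomes $\sum_i (Q_\epsilon)^a_i P_i$, the standard moment map for a cotangent-lifted toric translation. The short exact sequence \eqref{eq:TSU2-SES} (in its $\epsilon$-flipped form) splits because $\mathfrak{t}_{H,\mathbb{Z}}$ is free; a choice of splitting, determined by the flavour-charge data $q_\epsilon$, furnishes a group isomorphism $T_V\cong T_G\times T_H$ under which $T_G$ acts by translation on the first factor alone. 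This induces a $T_G$-equivariant holomorphic symplectomorphism
\begin{equation}
    T^\ast T_V \cong T^\ast T_G \times T^\ast T_H~,
\end{equation}
with moment map supported entirely on the first factor. This is the structural generalisation of the explicit rearrangement \eqref{eq:A1-reduction} used in the $T[\mathrm{SU}(2)]$ example.

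\textbf{Step 3 (Reduction) and main obstacle.} It then remains to reduce. Since $T_G$ acts only on the $T^\ast T_G$ factor, the reduction factorises,
\begin{equation}
    T^\ast V_U /\!\!/ T_G \cong \left(T^\ast T_G /\!\!/ T_G\right)\times T^\ast T_H \cong \{\mathrm{pt}\}\times T^\ast T_H \cong T^\ast T_H~,
\end{equation}
using the elementary identity $T^\ast T_G /\!\!/_0 T_G \cong \{\mathrm{pt}\}$ (the zero-level fibre is $T_G\times\{0\}$, whose $T_G$-quotient is a point). The content of the argument is concentrated in Step 2: the main thing to verify carefully is that the identification of momenta with mesons genuinely turns the gauge-theoretic moment map into the canonical toric moment map, and that the chosen splitting is compatible with the holomorphic symplectic form, so that the product decomposition is symplectic and $T_G$-equivariant rather than merely an isomorphism of varieties. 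Freeness of the $T_G$ action, needed for the reduction to be smooth, follows from injectivity of $Q_\epsilon^T$ (exactness of the sequence); torsion-freeness of the quotient lattice $\mathfrak{t}_{H,\mathbb{Z}}$, which ensures the quotient is the honest torus $T_H$ rather than a torus-by-finite-group, is precisely where the unimodularity/exactness hypothesis is used.
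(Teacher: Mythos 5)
Your argument is correct and is essentially the paper's own proof in structural clothing: the splitting of the cocharacter sequence you invoke in Step 2 is exactly the data $(q,\wt{q})$ packaged in $\mathbf{Q}$ and $\wt{\mathbf{Q}}$, and the resulting coordinates on $T^\ast T_G\times T^\ast T_H$ are the paper's explicit generators $\mu_{\mathbb{C}}^a$, $W_U^a$, $\mu_{H,\mathbb{C}}^b$, $W_{U,H}^b$. The substantive checks — the Poisson brackets of the mesons against the invertible monomials, and the use of unimodularity/exactness to guarantee that these monomials generate the full Laurent ring and that $T_G$ acts freely — appear in both versions, so nothing is missing.
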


\begin{proof} 
    We proceed by defining appropriate generators for functions on the torus $T^\ast V_U$. In particular, we will see that
    \begin{equation}\label{eq:prod-tori}
        \mathbb{C}\left[T^\ast V_U \right] \cong \mathbb{C}\left[ T^\ast T_G \right] \otimes  \mathbb{C}\left[ T^\ast T_{H} \right]~,
    \end{equation}
    and the result follows immediately. 

    We include amongst our generators $\mu_{\mathbb{C}}^a$, $a=1,\ldots,k$, and $\mu_{H,\mathbb{C}}^b$, $b=1,\ldots N-k$. Then for a $\mathbb{Z}^N$-vector $w$ with components $w_i$, we define 
    \begin{equation}\label{mon-def}
        W_{U}^w \coloneqq \prod_{i \in U} {\left( X_i \right)}^{w_{i}} \prod_{i \not\in U} {Y^i}^{-w_{i}}~.
    \end{equation}
    For any $w$, $W_U^w$ is invertible by construction. We define the additional functions
    \begin{equation}
        W^a_{U,H} \coloneqq W^{\wt{Q}_a}_U~,\qquad W^b_{U} \coloneqq W^{\wt{q}_b}_U~,
    \end{equation}
    for $a=1,\ldots,k$, $b=1,\ldots,N-k$. Then one has for Poisson brackets,\footnote{Here we use the following elementary relations amongst (mirror) charge vectors,
    \begin{equation}
    \begin{alignedat}{3}
        &\wt{Q}Q^T &&= {\bf 0}_{k\times (N-k)}~,\qquad\qquad &&\wt{Q}q^T = \mathbf{1}_{(N-k)\times (N-k)}~,\\
        &\wt{q}q^T &&= {\bf 0}_{k\times N-k}~,\qquad &&\wt{q}Q^T = \mathbf{1}_{k\times k}~.
    \end{alignedat}
    \end{equation}
    }
    \begin{equation}
        \{\mu_\mathbb{C}^a , {W_U}^a \} = {W_U}^a~,\qquad \{\mu_{H,\mathbb{C}}^a,W_{U,H}^{b}\} = W_{U,H}^b~,
    \end{equation}
    with all other brackets vanishing identically. Thus we have the following identifications of Poisson algebras,
    \begin{equation}
    \begin{split}
        &\mathbb{C}\left[\mu_\mathbb{C}^1,\ldots , \mu_{\mathbb{C}}^k , W_{U}^1 , \ldots W_{U}^k, \left( W_{U}^1\right)^{-1} , \ldots , \left( W_{U}^k \right)^{-1} \right] \cong \mathbb{C}\left[ T^\ast T_G \right]~,\\
        &\mathbb{C}\left[\mu_{H,\mathbb{C}}^1,\ldots , \mu_{H,\mathbb{C}}^{N-k} , W_{U,H}^1,\ldots,W_{U,H}^{N-k},\left(W_{U,H}^1 \right)^{-1}, \ldots , \left(W_{U,H}^{N-k}\right)^{-1}  \right] \cong \mathbb{C}\left[ T^\ast T_{H} \right]~.
    \end{split}
    \end{equation}
    Then \eqref{eq:prod-tori} follows immediately upon noting that:
    \begin{itemize}
        \item As $\wt{Q}$ is unimodular by assumption, a monomial built from the $W_U^a$ and $W_{U,H}^b$ will recover $Y^i$, $(Y^i)^{-1}$ for $i\in U$ and $X_i$, $X_i^{-1}$ for $i\not\in U$;
        \item Every monomial $Y^iX_i$ can be written as a linear combination of the $\mu^a_{\mathbb{C}}$ and $\mu_{H,\mathbb{C}}^a$.
    \end{itemize}
\end{proof}

\subsection{\label{sec:open-imm}Embedding localised reductions}

We have seen that for each choice of localisation before symplectic reduction, the resulting space can be identified with $T^\ast T_H$. Depending on the localisation, however, these reductions will come with different maps into $\cM_{H}$ and its resolutions. In this section we will show that in general, the localised reductions can be understood as open subsets of the resolutions $\cM_{H,\zeta}$ for appropriate choices of resolution $\zeta$. We will also see that for certain choices of localisation, the localised reduction is an open subset of (the smooth locus) of the singular Higgs branch $\cM_H$.

In the first instance, we will demonstrate that for each $U$ there exists a resolution parameter $\zeta$ such that the symplectic reduction $T^\ast T_H$ of $T^\ast T_U$ enjoys an open immersion into $\cM_{H,\zeta}$.

Let us fix a resolution parameter $\zeta$ so that $\cM_{H,\zeta}$ is fully resolved, and pick as in~\eqref{eq:hyperplanes} a lift via $\wt{q}$ that in an abuse of notation, we will also denote this lift by $\zeta$. The real and complex moment maps for the flavour action $T_H$ on $\cM_{H,\zeta}$ are given by
\begin{equation}
\begin{alignedat}{3}\label{eq:higg-mon-1}
    &\mu_{H,\mathbb{C}}^a &&= \sum_{i=1}^N q^a_i Y^i X_i~,\qquad &&a \in \{1, \ldots , N-k\}~,  \\
    &\mu_{H,\mathbb{R}}^a &&= \sum_{i=1}^N \frac{q^a_i}{2} \left(|X_i|^2-|Y^i|^2-2\zeta^i\right)~,\qquad &&a \in \{1, \ldots,N-k\}~.
\end{alignedat}
\end{equation}
These lead to a fibration,
\begin{equation}\label{eq:higgs-fib}
   (\pi_{\mathbb{C}},\pi_{\mathbb{R}}) : \cM_{H,\zeta} \rightarrow \mathfrak{t}^*_{H,\mathbb{C}}\oplus \mathfrak{t}^*_{H,\mathbb{R}}~,
\end{equation}
with generic fibres $\left(S^1\right)^{N-k}$ that degenerate along hyperplanes where $Y^i=X_i=0$~(\cf\ \cite{bielawski2000geometry}). If we impose the constraints $Y^i\neq 0$ or $X_i \neq 0$ for every $i$ before symplectic reduction, then we will have removed all of the degeneration loci of this fibration (and more). What remains will be open subsets that (if non-empty) are isomorphic to $\left( S^1 \times \mathbb{R}^3\right)^{N-k} \cong T^\ast \left(\mathbb{C}^\times\right)^{N-k}$. Whether these subsets are actually non-empty or not can depend on the value of $\zeta$.

We can treat this in greater detail using a description of hypertoric varieties in terms of cotangent bundles to toric varieties, glued together in a combinatorial fashion. The relation to toric geometry arises by first restricting to the zero set of the complex moment map, also known as the extended core
\begin{definition}
    The \emph{extended core} $\mathcal{E}$ is the subset $\pi_{\mathbb{C}}^{-1}\left(0\right) \subset \cM_{H,\zeta}$.
\end{definition}
Equivalently, $\mathcal{E}$ consists of those points $p\in\cM_{H,\zeta}$ such that for each $i$, $X_iY^i =0$ at $p$. Now let $A\subset \{1,\ldots N\}$ and define
\begin{equation}
    E_A \coloneqq \{(Y^1,X_1,\ldots ,Y^N, X_N) \in T^\ast V \ | \ X_i = 0 \text{ if } i \in A \text{ and } Y^i=0 \text{ if } i \not\in A \} \
\end{equation}
(note that here we are requiring $Y^i$ or $X_i$ to be zero, rather than non-vanishing). Then we have that
\begin{equation}
    X_A \coloneqq E_A  /\!\!/_{\zeta} T_G~,
\end{equation}
are $(N-k)$-dimensional K\"ahler, Lagrangian subvarieties of $\cM_{H,\zeta}= \cM(\mathcal{A}_\zeta)$. For example, if $A=\emptyset$, then
\begin{equation}
     X_{\emptyset} = \mathbb{C}^N/\!\!/_{\zeta} T_G~,
\end{equation}
where we have set to zero the cotangent fibres $Y^i$. The collection of these K\"ahler Lagrangian subvarieties cover $\mathcal{E}$. Furthermore, their normal bundle in $M_{H_\zeta}$ can be identified with their cotangent bundle (as the subvarieties are Lagrangian), and the collection of $T^\ast X_A$ covers the hypertoric variety $\cM_{H,\zeta}$ itself. Additionally, the $X_A$ themselves are actually toric varieties, and their gluing can is described in a completely combinatorial fashion.

Key to these facts is the identification of the codomain of the fibration~\eqref{eq:higgs-fib}, restricted to $\mathfrak{t}^*_{H,\mathbb{R}}$, with the ambient space of the hyperplane arrangement $\mathcal{A}_\zeta$. Upon this identification, one has that
\begin{equation}
     \pi^{-1}_{\mathbb{R}}(H_i) = \{p \in \cM_{H,\zeta} \ | \ |X_i|^2 - |Y^i|^2 = 0\ \}~.
\end{equation}
This implies that $X_A$ can be identified with the pre-image of the polyhedron
\begin{equation}\label{eq:polyhedra}
    P_A \coloneqq \{x\in \mathfrak{t}_{H,\mathbb{R}}^* \ | \ x \cdot \wt{Q}_i + \zeta^i \leqslant 0 \text{ if } i\not\in A \ \text{ and } \  x \cdot \wt{Q}_i + \zeta^i \geqslant 0 \text{ if } i \in A  \}~.
\end{equation}
It follows that the K\"ahler variety $X_A$ is a toric variety classified by the respective polyhedron $P_A$ (as essentially follows from the fibration structure). Some of the polyhedra $P_A$ could be empty, but their union covers $\mathfrak{t}_{H,\mathbb{R}}^*$. Moreover, the gluing of the K\"ahler varieties $X_A$ is dictated by the intersections of the respective polyhedra. The above observations are summarised in the following:

\begin{lemma}[Remark 2.1.6 of~\cite{proudfoot2008survey}]\label{lem:hypertoric}
    Let $\mathcal{A}_\zeta$ be a hyperplane arrangement determining a hypertoric variety $\cM(\mathcal{A}_\zeta)$. Let $A\subset \{1,\ldots N\}$ and $P_A$ be the polyhedra defined in~\eqref{eq:polyhedra}. Then $\cM(\mathcal{A}_\zeta)$ can be described in terms of a union of cotangent bundles $T^\ast X_A$ to toric varieties $X_A$ classified by $P_A$, glued together equivariantly and symplectically based on their intersections in $\mathfrak{t}^*_{H,\mathbb{R}}$. 
\end{lemma}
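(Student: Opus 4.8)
The plan is to build the covering chart-by-chart out of the extended core $\mathcal{E}=\pi_{\mathbb{C}}^{-1}(0)$, using the cotangent-fibre-scaling $\mathbb{C}^\times$-action to manufacture tubular neighbourhoods and the real moment map fibration~\eqref{eq:higgs-fib} to control how the resulting charts overlap. The bridge is the observation (already noted before the statement) that on $\mu_{\mathbb{C}}^{-1}(0)$ the vanishing of $\pi_{\mathbb{C}}$ forces $X_iY^i=0$ for every $i$, so that $\mathcal{E}$ decomposes as $\bigcup_A X_A$. First I would establish the toric structure of each piece. Each $E_A\subset T^\ast V$ is a coordinate linear subspace, isomorphic to $\mathbb{C}^N$ with linear coordinates $\{X_i : i\not\in A\}\cup\{Y^i : i\in A\}$, on which $T_G$ acts linearly; hence $X_A = E_A/\!\!/_\zeta T_G = \mathbb{C}^N/\!\!/_\zeta T_G$ is a (semiprojective) toric variety for the residual torus $T_V/T_G\cong T_H$. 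Its defining polyhedron is read off from the image of the restricted real moment map~\eqref{eq:higg-mon-1}: on $E_A$ the term $|X_i|^2$ contributes with a $+$ sign for $i\not\in A$ while $-|Y^i|^2$ contributes for $i\in A$, so the GIT nonemptiness/stability conditions reproduce exactly the half-space inequalities defining $P_A$ in~\eqref{eq:polyhedra}. This gives the first claim, that $X_A$ is the toric variety classified by $P_A$.

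Next I would exhibit each $X_A$ as the zero section of a copy of $T^\ast X_A$ inside $\wt{\cM}_{H,\zeta}$. The key point is that $X_A$ is holomorphic Lagrangian: it has complex dimension $N-k$, half that of $\wt{\cM}_{H,\zeta}$, and the holomorphic symplectic form $\sum_i dX_i\wedge dY^i$ restricts to zero on the coordinate subspace $E_A$ (half the differentials vanish there), a property that descends through the reduction. For a complex Lagrangian the normal bundle is canonically $T^\ast X_A$, and a holomorphic-symplectic tubular-neighbourhood argument (the complex analogue of Weinstein's theorem) identifies an open neighbourhood of $X_A$ with a neighbourhood of the zero section in $T^\ast X_A$. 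The fibre-scaling action $Y\mapsto sY$ commutes with $T_G$, so descends to $\wt{\cM}_{H,\zeta}$ and acts fibrewise on each $T^\ast X_A$; using (suitably signed versions of) its flow I would upgrade the local neighbourhood to the full attracting set, which is globally isomorphic to $T^\ast X_A$, the linear model being the Bialynicki--Birula cell of the corresponding component of the fixed locus.

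Finally I would verify that these charts cover $\wt{\cM}_{H,\zeta}$ and glue combinatorially. Covering follows because the polyhedra $P_A$ tile $\mathfrak{t}^*_{H,\mathbb{R}}$: every point projects under $\pi_{\mathbb{R}}$ into some $P_A$, and its limit under the appropriate signed fibre-scaling flow lies in that $X_A$. The gluing is manifestly $T_H$-equivariant and symplectic since every construction descends from the linear data on $T^\ast V$, and two charts $T^\ast X_A$, $T^\ast X_B$ overlap precisely over $\pi_{\mathbb{R}}^{-1}(P_A\cap P_B)$, so the combinatorics of $\mathcal{A}_\zeta$ dictates the gluing. I expect the main obstacle to be the second step: passing from the first-order identification of the normal bundle with $T^\ast X_A$ to an honest symplectomorphism of a genuine (not merely formal) neighbourhood, carried out compatibly for all $A$ simultaneously so that the charts reassemble $\wt{\cM}_{H,\zeta}$ on the nose rather than only birationally. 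This is exactly where the hyperplane-arrangement combinatorics and the torus-equivariance of the fibration~\eqref{eq:higgs-fib} must be invoked to rigidify the gluing.
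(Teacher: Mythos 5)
Your outline reproduces, almost step for step, the discussion that the paper itself gives immediately before the statement: the decomposition of the extended core into the Lagrangian toric pieces $X_A=E_A/\!\!/_\zeta T_G$ classified by the polyhedra $P_A$ of \eqref{eq:polyhedra}, the identification of the normal bundle of a Lagrangian with its cotangent bundle, and the combinatorial gluing read off from the fibration \eqref{eq:higgs-fib}. Be aware, though, that the paper does not actually prove this lemma --- it is quoted verbatim as Remark 2.1.6 of Proudfoot's survey, and the preceding paragraphs are an informal summary of the ingredients rather than an argument. So the comparison here is really between your proposal and the standard proof in the hypertoric literature, which it matches in structure.

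The one place where your argument, as literally written, would fail is the appeal to ``the complex analogue of Weinstein's theorem'' in the second step. There is no holomorphic Darboux--Weinstein theorem: in the holomorphic symplectic category a complex Lagrangian need not admit any neighbourhood biholomorphic to a neighbourhood of the zero section of its cotangent bundle, and even formal identifications can fail to converge. You do not need it, and you in fact name the correct substitute in the same sentence: the signed fibre-scaling action (scaling $X_i$ for $i\in A$ and $Y^i$ for $i\notin A$) rescales the holomorphic symplectic form with weight one, preserves $\mu_{\mathbb{C}}^{-1}(0)$, commutes with $T_G$, and fixes $X_A$ pointwise; its Bialynicki--Birula attracting set is then \emph{globally and canonically} isomorphic to $T^\ast X_A$ (an affine bundle over the fixed locus whose linearisation is the conormal, identified with the cotangent bundle by the weight-one symplectic form). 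Equivalently, one can exhibit $T^\ast X_A$ explicitly as $\pi_{\mathbb{R}}^{-1}(P_A)$ using the $(S^1)^{N-k}$-fibration \eqref{eq:higgs-fib}, which is how the combinatorics of $\mathcal{A}_\zeta$ enters. Either route removes the tubular-neighbourhood step entirely; with that replacement your proof is complete and agrees with the cited one.
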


This description of hypertoric varieties is a useful starting point for our analysis of the localisations determined by $U$. Suppose that we remove the coordinate hyperplanes $X_i=0$ or $Y^i=0$. If we impose $Y^i \neq 0$, on the extended core we necessarily have $X_i=0$, and \emph{vice-versa}. This means that by imposing a choice of localisation $U$, in $\mathfrak{t}_{H,\mathbb{R}}^*$ we are left with the interior a single polyhedron,
\begin{equation}\label{eq:polyhedra_2}
    O_U \coloneqq \{x\in \mathfrak{t}_{H,\mathbb{R}}^* \ | \ x \cdot \wt{Q}^i + \zeta^i > 0 \text{ if } i\not\in U \ \text{ and } \  x \cdot \wt{Q}^i + \zeta^i < 0 \text{ if } i \in U  \}~.
\end{equation}
Now for a given $U$, it remains to be shown that there exists a choice of $\zeta$ so that $O_U$ is the toric diagram for $\left(\mathbb{C}^\times\right)^{N-k}$. In other words, we want to prove the following:
\begin{lemma}\label{lem:FI-existence}
    For each $U$, there exists a (not necessarily unique) $\zeta$ such that $O_U$ is non-empty and classifies a toric variety $T_{U,\zeta} \cong \left(\mathbb{C}^\times\right)^{N-k} \cong T_H$.
\end{lemma}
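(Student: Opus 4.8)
The plan is to reduce the statement to a purely linear-algebraic feasibility question about the sign pattern defining $O_U$, and then to solve that question using the invertibility of the combined charge matrix $\wt{\mathbf{Q}}$.

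First I would record the geometric reduction. By Lemma~\ref{lem:hypertoric}, the open region $O_U$ of \eqref{eq:polyhedra_2} is the interior of one of the polyhedra governing the hypertoric decomposition, and the interior of a full-dimensional polyhedron classifies precisely the open dense torus orbit of the corresponding toric piece. Equivalently, in the fibration \eqref{eq:higgs-fib} the strict inequalities defining $O_U$ guarantee that the image avoids every hyperplane $H_i$ of \eqref{eq:hyperplanes}, so the $(S^1)^{N-k}$ fibre never degenerates over $O_U$ and the localised reduction restricts to an untwisted torus bundle. Hence, provided $O_U$ is non-empty (and therefore open and full-dimensional, being cut out by strict inequalities), $T_{U,\zeta}$ is automatically the open torus $(\mathbb{C}^\times)^{N-k}\cong T_H$. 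The entire content of the lemma is thus the existence of a $\zeta$ for which $O_U\neq\emptyset$.

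Second, I would linearise this existence problem. Assembling the affine functions $v_i \coloneqq x\cdot\wt{Q}^i+\zeta^i$ into a single vector $v=(v_1,\dots,v_N)\in\mathfrak{t}^\ast_{V,\mathbb{R}}\cong\mathbb{R}^N$, and recalling that $\zeta$ is identified with its lift $\wt{q}^T\zeta_0$ for $\zeta_0\in\mathfrak{t}^\ast_{G,\mathbb{R}}$, one has $v=\wt{Q}^T x+\wt{q}^T\zeta_0$. As $x$ ranges over $\mathfrak{t}^\ast_{H,\mathbb{R}}$ and $\zeta_0$ over $\mathfrak{t}^\ast_{G,\mathbb{R}}$, the vector $v$ ranges over $\operatorname{im}(\wt{Q}^T)+\operatorname{im}(\wt{q}^T)$. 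The key point is that these two subspaces are complementary: the transpose of $\wt{\mathbf{Q}}=\binom{\wt{q}}{\wt{Q}}$ from \eqref{eq:gen-mirror-def} is invertible, so its columns---which are exactly the columns of $\wt{q}^T$ together with those of $\wt{Q}^T$---form a basis of $\mathbb{R}^N$, giving $\operatorname{im}(\wt{Q}^T)\oplus\operatorname{im}(\wt{q}^T)=\mathbb{R}^N$. Consequently the assignment $(x,\zeta_0)\mapsto v$ is a linear isomorphism onto $\mathbb{R}^N$.

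Finally I would produce the desired $\zeta$. Choose any $v^\star\in\mathbb{R}^N$ obeying the prescribed sign pattern, e.g.\ $v^\star_i=+1$ for $i\notin U$ and $v^\star_i=-1$ for $i\in U$; by the previous paragraph there are unique $x^\star,\zeta^\star_0$ with $\wt{Q}^T x^\star+\wt{q}^T\zeta^\star_0=v^\star$. For $\zeta=\zeta^\star_0$ the point $x^\star$ then lies in $O_U$, so $O_U\neq\emptyset$, and the geometric reduction above yields $T_{U,\zeta}\cong(\mathbb{C}^\times)^{N-k}\cong T_H$. Since non-emptiness of $O_U$ is an open condition on $\zeta$, while simplicity and unimodularity of $\mathcal{A}_\zeta$ hold on a dense open set, I can additionally perturb $\zeta^\star_0$ slightly to arrange that $\wt{\cM}_{H,\zeta}$ is a genuine resolution, justifying the appeal to Lemma~\ref{lem:hypertoric}. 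The step I expect to require the most care is the complementarity $\operatorname{im}(\wt{Q}^T)\oplus\operatorname{im}(\wt{q}^T)=\mathbb{R}^N$, which is precisely where the unimodularity/invertibility hypothesis on $\wt{\mathbf{Q}}$ enters; everything else is either an open-condition genericity argument or a direct reading-off from the hypertoric dictionary already established.
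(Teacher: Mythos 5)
Your proof is correct, but it establishes the key non-emptiness claim by a genuinely different (and more elementary) route than the paper. Both arguments first reduce the lemma to showing $O_U\neq\emptyset$ for some $\zeta$, on the grounds that $O_U$ is cut out by strict inequalities and so, once non-empty, supports no fibre degenerations and classifies the bare torus. The paper then passes to the Gale dual arrangement (the hyperplane arrangement of the mirror theory, determined by the columns of $Q$) and invokes Theorem 2.4 of Braden--Proudfoot to convert non-emptiness of $O_U$ into the existence of a linear functional bounded on the dual polyhedron $\wt{P}_A$, which it constructs from a normal vector to a face. You instead lift the problem from $\mathfrak{t}^*_{H,\mathbb{R}}$ to $\mathfrak{t}^*_{V,\mathbb{R}}\cong\mathbb{R}^N$ and observe that the map $(x,\zeta_0)\mapsto \wt{Q}^Tx+\wt{q}^T\zeta_0$ is a linear isomorphism onto $\mathbb{R}^N$ because the columns of $\wt{\mathbf{Q}}^T$ form a basis, so any prescribed sign vector $v^\star$ is attained; this is correct, is fully constructive (it produces the witness $\zeta^\star_0$ explicitly), and makes transparent that the freedom to move $\zeta$ is exactly what guarantees feasibility of every sign pattern $U$. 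What the paper's Gale-duality argument buys in exchange is contact with the mirror/Coulomb-branch picture developed in the surrounding sections and a criterion that remains meaningful when $\zeta$ is \emph{fixed} (e.g.\ the $\zeta=0$ analysis of Section~\ref{sec:aff-imm}), where your surjectivity argument no longer applies. Your closing remark about perturbing $\zeta^\star_0$ to a simple arrangement is a genuine point of care that the paper glosses over, and it is handled correctly since simplicity is a dense open condition in $\zeta$ while unimodularity depends only on $\wt{Q}$.
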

Notice that it then follows that there is an open embedding $T^\ast T_{U,\zeta} \hookrightarrow \cM_{H,\zeta}$.
\begin{proof}
    Since $O_U$ has no faces, it suffices to prove that there is a lift of $\zeta = (\zeta^1 , \dots , \zeta^k )$ such that $O_U$ is non-empty. The existence of such a $\zeta$ can for instance be argued in terms of bounded linear functions on Gale dual arrangements~\cite{braden2010gale}.
    
    Alternatively, chasing through the definitions, this is equivalent to showing that for a fixed $\epsilon$ there are at least a $\zeta = (\zeta^1 , \dots , \zeta^k )$ and a point $(x^1,\dots ,x^{N-k})\in \mathfrak{t}_{H,\mathbb{R}}^*\cong \mathbb{R}^{N-k}$ such that the $N$ inequalities
    \begin{equation}
        \begin{pmatrix} \tilde{q}_\epsilon^T & \tilde{Q}_\epsilon^T  
        \end{pmatrix}
    \begin{pmatrix}
    \zeta^1 , &
    \dots ,
    \zeta^k , &
    x^1 ,&
    \dots  ,
    x^{N-k}
    \end{pmatrix}^T >0~
    \end{equation}
    are satisfied. But since $ \left( \tilde{q}_\epsilon^T  \ \tilde{Q}_\epsilon^T \right)$ has full rank by assumption, this is guaranteed.
\end{proof}

\subsection{\label{subsec:aff-imm}Embeddings into singular varieties}

We next want to ask whether a resolution is at all necessary. In the language of the previous section, this is equivalent to asking whether setting some of the components of $\zeta $ to zero (thus taking some hyperplanes in $\mathfrak{t}_{H,\mathbb{R}}^*$ to pass through the origin) ruins the identification of $O_U$ as classifying the toric variety $T_H$.

To put this question into context, it is useful to introduce the notion of the core. Suppose that for a given $U$ we can find a $\zeta$ satisfying the requirements of the previous section. That is, we have made a choice of localisation $U$ and found an open immersion $T^\ast T_H\hookrightarrow\cM_{H,\zeta}$. Now there is always a surjective map,
\begin{equation}\label{eq:higgs-res}
    \cM_{H,\zeta} \twoheadrightarrow \cM_{H}~,
\end{equation}
given by affinisation. This map is in fact an equivariant orbifold resolution. One then defines the \emph{core} as follows,
\begin{definition}
    The core $L(\mathcal{A}_\zeta) \subset \cM_{H,\zeta} \cong \cM(\mathcal{A}_\zeta)$ is the pre-image of the most singular point $0\in\cM_H$ under the surjective map~\eqref{eq:higgs-res}.
\end{definition}
For our purposes, it is best to express the core in terms of the polyhedra $P_A$. In fact, the following simple characterisation holds~\cite{proudfoot2004hyperkahler},
\begin{equation}
    L(\mathcal{A}_\zeta) = \bigcup_{P_{A} \text{ bounded}} X_A~,
\end{equation}
where the bounded polyhedra are by definition the polyhedra of finite volume. Now it is clear that whenever we set some of the components of $\zeta$ to zero, some of the bounded polyhedra will degenerate. The question of whether the open subsets $T^\ast T_{U,\zeta} \cong T^\ast T_H$ remain openly embedded in $\cM_H$ corresponds to the question of whether the respective polyhedra degenerate or not.

This question is relevant for the manifestation (at the level of free-field realisations) of the enhanced outer automorphism group of the $A$-twisted boundary VOAs observed in Lemma~\ref{lem:outer-aut} of Section~\ref{subsec:outer_out}. We noted therein that for a theory with a balanced gauge group (a $U(1)_S$ subgroup of the gauge symmetry $T_G$ whose action on $\mathbb{C}^N \subset V$ has only two non-zero weights $\pm 1$), the enhanced outer automorphism group is manifest only if the $U(1)_S$ subgroup is \emph{sign}-balanced with respect to a choice of localisation $U$ (or equivalently bosonisation $\epsilon$). Translated into the language of this section, we speculated that this might be due to the necessity of resolving $\mathcal{M}_H$ by a resolution parameter $\zeta$ associated to $U(1)_S$ in order of $O_U$ to be non-empty. We now show that this is the case.
\begin{lemma}\label{lem:fin-outer}
    Suppose that in $(T_G,V)$ there is a balanced gauge group $U(1)_S\subset T_G$, so that without loss of generality $Q$ has the first row equal to $(1,1,0,\dots 0)$. Then there exists a resolution parameter $\zeta$ satisfying $\zeta^1=0$ so that $O_U$ is non-empty only if the respective sign-vector $\epsilon$ is of the form $(+-\dots)$ or $(-+\dots)$, and in particular $U(1)_S$ is sign-balanced with respect to $\epsilon$ in the sense of Proposition~\ref{lem:outer-aut}.
\end{lemma}
\begin{proof}
    We now from the proof of Lemma~\ref{lem:FI-existence} that $O_U$ is non-empty if and only if 
    \begin{equation}
        \begin{pmatrix} \tilde{q}_\epsilon^T & \tilde{Q}_\epsilon^T  
        \end{pmatrix}
    \begin{pmatrix}
    \zeta^1 , &
    \dots ,
    \zeta^k , &
    x^1 ,&
    \dots  ,
    x^{N-k}
    \end{pmatrix}^T >0~
    \end{equation}
    for some $\zeta = (\zeta^1 , \dots , \zeta^k )$ and a point $(x^1,\dots ,x^{N-k})\in \mathfrak{t}_{H,\mathbb{R}}^*\cong \mathbb{R}^{N-k}$. Now notice that $ \left( \tilde{q}_\epsilon^T  \ \tilde{Q}_\epsilon^T \right) = \tilde{\mathbf{Q}}^T_\epsilon$, and by~\eqref{eq:gen-mirror-def}
    \begin{equation}
    \mathbf{Q}_\epsilon \tilde{\mathbf{Q}}_\epsilon^T = \mathbf{1}_{N\times N}~.
    \end{equation}
    Thus
    \begin{equation}
    \mathbf{Q}_\epsilon \tilde{\mathbf{Q}}_\epsilon^T   \begin{pmatrix}
    \zeta^1 , &
    \dots ,
    \zeta^k , &
    x^1 ,&
    \dots  ,
    x^{N-k}
    \end{pmatrix}^T  =  \begin{pmatrix}
    \zeta^1 , &
    \dots ,
    \zeta^k , &
    x^1 ,&
    \dots  ,
    x^{N-k}
    \end{pmatrix}^T ~.
    \end{equation}
    From this we can derive $\zeta^1 = \epsilon_1 a + \epsilon_2 b$ for some $a,b>0$, and this can be satisfied for $\zeta^1=0$ only if $\epsilon = (+- \cdots  )$ or $\epsilon = (-+ \dots )$.
\end{proof}

Finally, it is interesting to ask under which conditions $U$ determines a non-degenerate chamber $O_U$ even if \emph{all} components of $\zeta$ are set to zero. Suppose for simplicity that $U=\{1,\ldots N\}$ and consider the hyperplane arrangement $\mathcal{A}_{\zeta=0}$. If there is $x \in \mathbb{R}^{N-k} \cong  \mathfrak{t}^*_{H,\mathbb{R}}$ such that $x \cdot \wt{Q}^i > 0$ for all $i\in \{1,\ldots , N\}$, then by the proof of Lemma~\ref{lem:FI-existence} indeed $O_U$ will be non-empty. For other choices of $U$, one simply has to adjust the signs in the previous inequalities. The goal of our next subsection will be to understand this condition from the perspective of restriction functors on (mirror) Coulomb branch chiral rings, as described in~\cite{Kamnitzer:2022zkv}.

\subsection{\label{sec:mirr-coul-per}The mirror Coulomb branch perspective}

Still in the unresolved case, we can adopt a mirror perspective and make contact with some general results of~\cite{Kamnitzer:2022zkv} related to restriction functors on Coulomb branches. More precisely, we can view $\cM_{H}$ as the Coulomb branch of the mirror theory defined by the charge matrix $\wt{Q}$ (we denote this by $\cM_C[\wt{Q}]$). Then the following lemma, which is a simple consequence of Theorem 2.9 of~\cite{Kamnitzer:2022zkv}, applies.
\begin{lemma}\label{lem:coul-fun}
    Let $\mathbb{C}[\cM_C [\wt{Q}]] $ be the Coulomb branch chiral ring associated to a an Abelian theory with charge matrix $\wt{Q}$. Let $A$ be a co-character (defining a monopole operator $v_A$) that pairs non-negatively with the weights encoded in $\wt{Q}$,
    \begin{equation}
        \langle A , \wt{Q}^i \rangle \geqslant 0~.
    \end{equation}
    Then there is an isomorphism
    \begin{equation}
        \mathbb{C}\left[\cM_C [\wt{Q}]\right][v_A^{-1}] \cong \mathbb{C}\left[\cM_C[\wt{Q}^{(0)}]\right]~,
    \end{equation}
    where $\wt{Q}^{(0)}$ is the representation obtained by restricting to those weights whose pairing with $A$ is zero.
\end{lemma}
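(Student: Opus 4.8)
The plan is to realise this as the abelian specialisation of Theorem 2.9 of~\cite{Kamnitzer:2022zkv}, making that theorem concrete via the monopole-formula presentation of the abelian Coulomb branch chiral ring. Recall that $\mathbb{C}[\cM_C[\wt{Q}]]$ is generated over $\mathbb{C}[\varphi]$ by monopole operators $v_B$ indexed by cocharacters $B\in\mathfrak{t}_{\mathbb{Z}}$, subject to
\[
    v_B v_{B'} = v_{B+B'}\prod_{i=1}^{N}(\wt{Q}^i\cdot\varphi)^{c_i(B,B')}~,\qquad
    c_i(B,B') = \tfrac12\big(|\langle B,\wt{Q}^i\rangle| + |\langle B',\wt{Q}^i\rangle| - |\langle B+B',\wt{Q}^i\rangle|\big)~.
\]
Writing $a_i\coloneqq\langle A,\wt{Q}^i\rangle$, the hypothesis is $a_i\geqslant0$ for all $i$; set $S=\{i:a_i>0\}$ and $Z=\{i:a_i=0\}$, so that $\wt{Q}^{(0)}$ is precisely the subrepresentation indexed by $Z$.

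First I would record the two computations that let non-negativity do all the work. Because $a_i\geqslant0$, multiplying monopoles in the $A$-direction produces no $\varphi$-factors, so $v_{nA}=v_A^{\,n}$ for every $n\geqslant0$; in particular inverting $v_A$ inverts every $v_{nA}$. Second, one has $v_A v_{-A}=\prod_{i\in S}(\wt{Q}^i\cdot\varphi)^{a_i}$, whereas in the restricted theory the analogous product collapses, $v_A^{(0)}v_{-A}^{(0)}=\prod_{i\in Z}(\wt{Q}^i\cdot\varphi)^{0}=1$: every surviving weight is $A$-neutral, so $A$ becomes a pure gauge direction in $\wt{Q}^{(0)}$ and $v_A^{(0)}$ is \emph{already} invertible there.

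Next I would write down the restriction homomorphism $r:\mathbb{C}[\cM_C[\wt{Q}]]\to\mathbb{C}[\cM_C[\wt{Q}^{(0)}]]$ supplied by the theorem, fixing $\varphi$ and sending
\[
    v_B\longmapsto v_B^{(0)}\prod_{i\in S}(\wt{Q}^i\cdot\varphi)^{\max(0,-\langle B,\wt{Q}^i\rangle)}~,
\]
thereby absorbing the removed $S$-weights. Checking that this respects the monopole relations is a direct but unilluminating computation with the functions $c_i$, and is the one step I would not grind out by hand. Since $v_A^{(0)}$ is invertible in the target and $r(v_A)=v_A^{(0)}$ carries no extra factor (again because $a_i\geqslant0$ forces $\max(0,-a_i)=0$ on $S$), the map $r$ factors through the localisation, yielding $\bar r:\mathbb{C}[\cM_C[\wt{Q}]][v_A^{-1}]\to\mathbb{C}[\cM_C[\wt{Q}^{(0)}]]$.

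Finally I would show $\bar r$ is an isomorphism. Writing $b_i=\langle B,\wt{Q}^i\rangle$ and using $v_{nA}=v_A^{\,n}$ together with $v_B v_{nA}=v_{B+nA}\prod_{i\in S,\,b_i<0}(\wt{Q}^i\cdot\varphi)^{|b_i|}$ for $n\gg0$, every $v_B$ in the localised ring is rewritten as $v_A^{-n}\,v_{B+nA}$ times $\varphi$-factors, with $B+nA$ having all $S$-components non-negative; this exhibits the localised ring as generated by $\varphi$, $v_A^{\pm1}$, and monopoles whose surviving content lies entirely in the $Z$-sector, matching the generators and relations of $\mathbb{C}[\cM_C[\wt{Q}^{(0)}]]$ (the inverse sends $v_B^{(0)}\mapsto v_B\prod_{i\in S}(\wt{Q}^i\cdot\varphi)^{-\max(0,-b_i)}$, which lands in the localisation). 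I expect the main obstacle to be exactly this bookkeeping of $\varphi$-factors under localisation---verifying that inverting $v_A$ introduces no spurious relations and trivialises precisely the $S$-matter---which is where the hypothesis $\langle A,\wt{Q}^i\rangle\geqslant0$ is indispensable, and equivalently the point at which one invokes the positivity condition of Theorem 2.9 and checks that it specialises, in the abelian case, to the stated pairing inequality with $\wt{Q}^{(0)}$ the $A$-fixed subtheory.
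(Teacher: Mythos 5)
Your proposal is correct and follows essentially the same route as the paper, which simply invokes Theorem 2.9 of Kamnitzer et al.\ together with the explicit Braverman--Finkelberg--Nakajima homomorphism $v_B \mapsto v_B^{(0)}\prod_{i}(\wt{Q}^i\cdot\varphi)^{\max(0,-\langle B,\wt{Q}^i\rangle)}$; your version merely makes the abelian monopole-algebra bookkeeping explicit (and your identities $v_{nA}=v_A^n$, $v_Av_{-A}=\prod_{i\in S}(\wt{Q}^i\cdot\varphi)^{a_i}$, and the homomorphism check all verify correctly). The only cosmetic caveat is that the parenthetical inverse formula with negative powers of $\wt{Q}^i\cdot\varphi$ should be read as shorthand for $v_{B+nA}v_A^{-n}$, since those linear factors are not individually invertible in the localisation.
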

In general, there is an explicit homomorphism from $\mathbb{C}[\cM_C [\wt{Q}]]$ to $\mathbb{C}[\cM_C [\wt{Q}^{(0)}]]$~(see \cite{Braverman:2016wma}),
\begin{equation}\label{eq:coulomb-iso}
    v_{A} \mapsto  \prod_{\langle\wt{Q}^i,A\rangle<0} \prod_{j=\langle\wt{Q}^i,A\rangle}^{-1} \varphi^{\wt{Q}^i} w_A~,
\end{equation}
where the operators $w_A$ satisfy the monopole operator algebra of the new theory. The above Lemma from~\cite{Kamnitzer:2022zkv} tells us that under the given assumptions, upon inverting the monopole operator $v_A$ this map becomes an isomorphism.

Now suppose that $\langle A,\wt{Q}^i\rangle >0$ for all $i$. Then the above lemma tells us that there is an open subset of $\cM_C [\wt{Q}] \cong \cM_{H,\zeta}$ that can be identified with the Coulomb branch of a rank-$(N-k)$ pure gauge theory. This Coulomb branch is isomorphic as a holomorphic symplectic variety to $T^\ast \left(\mathbb{C}^\times\right)^{N-k}$, and so we see some alignment with our results from the previous section; indeed, the existence of such a co-character $A$ is equivalent to the statement that the interior of the chamber $O_{\emptyset}$ is non-degenerate.

If for some $i$ $\langle A,\wt{Q}^i\rangle <0$ instead, then we can consider the equivalent theory defined by a charge matrix with $\wt{Q}^i$ replaced by $-\wt{Q}^{i}$. If $\langle A,\wt{Q}^i\rangle =0$, however, then we may not be able to find such an openly embedded $T^\ast (\mathbb{C}^\times)^{N-k}$ in $\cM_C[\wt{Q}]$. Indeed, based on our results in the previous sections, we expect that generically we will be forced to introduce deformation parameters in order to find such patches.

This mirror perspective highlights an interesting additional feature compared to what followed from our previous considerations. Namely, there is a special case in which, for a given co-character $A$, $\wt{Q}^{(0)}$ turns out to define $m\leqslant N-k$ copies of a $\mathrm{U}(1)$ gauge theory a single hypermultiplet of charge one (SQED[$1$]). This theory mirror to a free hypermultiplet, so the Coulomb is just $T^\ast \mathbb{C}$. Thus, by inverting the monopole operator defined by $A$ one finds the ring of functions on $T^\ast \mathbb{C}^m \times T^\ast (\mathbb{C}^\times)^{N-k-m}$. While we will not explore this feature in full generality, we will see some examples of this phenomenon in relation to our free-field construction of the boundary VOA of SQED[$N$].

\subsection{\label{subsec:fin-geom-AN}\texorpdfstring{$A_{N-1}$}{A[N-1]} singularities} 

Consider the rank $k=(N-1)$ theory defined with $N$ hypermultiplets defined by the following charge matrices
\begin{equation}\label{eq:charge-AN}
    \mathbf{Q} =  \left(
    \begin{array}{ccccc}
    -1 & 1 & 0 &  \cdots & 0 \\
     \vdots &  &  & \ddots &  \\
     -1 & 0 & 0 & \cdots & 1 \\
     -1 & 0 & 0 & \cdots & 0 \\
    \end{array}
    \right)~,\qquad  \widetilde{\mathbf{Q}} =\left(
    \begin{array}{ccccc}
     0 & 1 & 0 &  \cdots & 0 \\
     \vdots &  &  & \ddots &  \\
     0 & 0 & 0 & \cdots & 1 \\
     -1 &- 1 & -1 & \cdots & -1 \\
    \end{array}
    \right)~.
\end{equation}
The charge matrix $Q$ encodes the $A_{N-1}$ quiver. The chiral ring is generated by the operators
\begin{equation}
    Z = \prod_{i=1}^N Y^i~, \quad W = \prod_{i=1}^N X_i~,
\end{equation}
as well as 
\begin{equation}
    h = \frac{X_1Y^1}{2}~,
\end{equation}
with
\begin{equation}
    WZ = \frac{(-1)^N}{2^N} h^N~,
\end{equation}
which are the equations determining the $A_{N-1}$ singularity
\begin{equation}
    A_{N-1} \cong \mathbb{C} / \mathbb{Z}_N~.
\end{equation}
The singularity can be resolved by introducing real FI parameters $\zeta\in\mathfrak{t}_G^*\cong \mathbb{R}^{N-1}$.

\subsubsection{\label{subsubbsec:open_patches}Open patches}

For simplicity let us focus on the $A_2$ case; the generalisation to $A_{N-1}$ is relatively straightforward. We will look at our hyperplane arrangement in $\mathfrak{t}^*_{H,\mathbb{R}}\cong \mathbb{R}$, which is identified with the values of the moment map $\mu_{H,\mathbb{R}}$. There are three hyperplanes,
\begin{equation}
\begin{split}
    H_1 : &\ x = 0~, \\
    H_2 : &\ x = \zeta^1~,\\
    H_3 : &\ x = \zeta^2~, 
\end{split}
\end{equation}
where  which in this case are just points whose positions depend on the values of the two FI parameters. Such an arrangement is displayed in Fig. \ref{fig:a2_hyperplanes}.
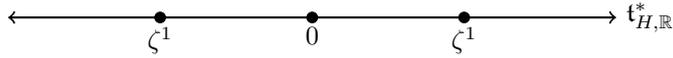
\begin{figure}[t!]
\begin{center}
\begin{tikzpicture}
\draw[<->,black, thick] (-4,0) -- (4,0) node[anchor=west]{$\mathfrak{t}^*_{H,\mathbb{R}}$};
\filldraw[black] (-2,0) circle (2pt) node[anchor=north]{$\zeta^1$};
\filldraw[black] (0,0) circle (2pt) node[anchor=north]{$0$};
\filldraw[black] (2,0) circle (2pt) node[anchor=north]{$\zeta^2$};
\end{tikzpicture}
\end{center}
\caption{\label{fig:a2_hyperplanes}Hyperplane arrangement for the $A_2$ theory}
\end{figure}
with the normal vectors all pointing to the left. Generically (meaning, away from the hyperplanes) above each point we have a copy of $\mathbb{R}^2\times S^1$ in $\cM_{H,\zeta}$. $\mathbb{R}^2$ corresponds to the values of the complex moment map, whereas $S^1$ is the generic fibre of the fibration~\eqref{eq:higgs-fib}. We set the complex moment map to zero to inspect the extended core $\mathcal{E}$ and the toric varieties associated to the various polyhedra (intervals) into which $\mathfrak{t}^*_{H,\mathbb{R}}$ is divided.

Let us fix $\zeta^1 < 0 < \zeta^2$. There are then four non-degenerate polyhedra $P_{A_i}$ associated to the choices $A_1=\{1,2,3\}$, $A_2=\{1,3\}$, $A_3=\{3\}$, $A_4=\emptyset$. Up to rescaling, these polyhedra are recognised to be $\mathbb{R}^{\geqslant 0}$, $[0,1]$, $[0,1]$, $\mathbb{R}^{\geqslant 0}$ respectively, and subsequent polyhedra intersect at a point. Thus, the extended core is build out of K\"ahler varieties $\mathbb{C}$, $\mathbb{P}^1$, $\mathbb{P}^1$, $\mathbb{C}$. glued together at points. In this case, we can easily visualise this by depicting the radius of the $S^1$ fibre, which provides the classical picture of the resolution of the $A_2$ singularity
\begin{center}
\begin{tikzpicture}
\draw[color = black,  thick] (-3,-1) arc (-90:90:1);
\draw[black , thick] (3,1) arc (90:270:1);
\draw[color = black, thick] (-1,0) circle (1);
\draw[color = black,  thick] (1,0) circle (1);
\filldraw[black] (-2,0) circle (2pt);
\filldraw[black] (0,0) circle (2pt);
\filldraw[black] (2,0) circle (2pt);
\draw[<->,black, thick] (-4,0) -- (4,0) node[anchor=west]{$\mathfrak{t}^*_{H,\mathbb{R}}$};
\end{tikzpicture}
\end{center}

Let us now consider the localisation choice $U=\emptyset$. This amounts to inverting the function $e\coloneqq W$. In algebraic terms, we are left with a ring of functions
\begin{equation}
    \mathbb{C}[p,e,e^{-1}]~
\end{equation}
where $p= h/2$. This is clearly the ring of functions on a copy of $T^\ast \mathbb{C}^\times$. In fact, we can see this very clearly in the above picture. Recall that picking the inversion choice $U=\emptyset$ corresponds to removing all chambers (that is, polyhedra) but the interior of $P_{\{1,2,3\}}$. If we colour in red the chambers that we lose by inverting $X_i$, then
\begin{center}
\begin{tikzpicture}
\filldraw[color = red, fill=red!30 , thick] (3,1) arc (90:270:1);
\draw[black , thick] (-3,-1) arc (-90:90:1);
\filldraw[color = red, fill=red!30 , thick] (-1,0) circle (1);
\filldraw[color = red, fill=red!30 , thick] (1,0) circle (1);
\filldraw[red] (-2,0) circle (2pt);
\filldraw[red] (0,0) circle (2pt);
\filldraw[red] (2,0) circle (2pt);
\draw[<->,black, thick] (-4,0) -- (4,0) node[anchor=west]{$\mathfrak{t}^*_{H,\mathbb{R}}$};
\end{tikzpicture}
\end{center}
so that we are indeed left with a $\mathbb{C}^\times$ in the extended core, or a $T^\ast \mathbb{C}^\times$ in $\cM_{H,\zeta}$. Moreover, notice that setting $\zeta=0$ (that is, collapsing the hyperplanes) the left-over chamber interior would remain untouched. It is an interior of an unbounded chamber, which is not part of the core. In other words, after setting the FI parameters to zero this $T^\ast \mathbb{C}^\times$ embeds openly into the $A_2$ singularity itself.

Let us now choose instead $U=\{1,2\}$. Algebraically, we are inverting the function
\begin{equation}
    e' = (Y^1)^{-1}(Y^2)^{-1}X_3~,
\end{equation}
and we are left with a ring of functions
\begin{equation}
      \mathbb{C}[p, e', e'^{-1}]~.
\end{equation}
To see which $T^\ast \mathbb{C}^\times$ this corresponds to, let us depict in red the chambers lost from inverting $X_3$ and in blue the chamber lost from inverting $Y^1$, $Y^2$.
\begin{center}
\begin{tikzpicture}
\filldraw[color = blue, fill=blue!30 , thick] (-3,-1) arc (-90:90:1);
\filldraw[color = red, fill=red!30 , thick] (3,1) arc (90:270:1);
\filldraw[color = blue, fill=blue!30 , thick] (-1,0) circle (1);
\draw[black , thick] (1,0) circle (1);
\filldraw[blue] (-2,0) circle (2pt);
\filldraw[blue] (0,0) circle (2pt);
\filldraw[red] (2,0) circle (2pt);
\draw[<->,black, thick] (-4,0) -- (4,0) node[anchor=west]{$\mathfrak{t}^*_{H,\mathbb{R}}$};
\end{tikzpicture}
\end{center}
We clearly see that are left with one copy of $\mathbb{C}^\times$, but this time it is part of the core. If we set $\zeta=0$, it collapses with the core itself.

Finally, let us consider the inversion choice $U=\{1,3\}$. With this choice of FI parameter we would remove the entire (resolved) affine singularity. However, we know based on our previous abstract arguments that there must be a value of the FI parameter $\zeta^a$ that would make $O_{\{1,3\}}$ isomorphic to $\mathbb{C}^\times$. It can be verified that one such choice is $\zeta^2 < 0 < \zeta^1$.

\subsubsection{Mirror perspective}

Let us now take a mirror perspective and consider the charge matrix
\begin{equation}
    \widetilde{Q} = (-1,-1,\cdots , -1)~.
\end{equation}
If we invert the monopole operator defined by the co-character $A=(-1)$ (corresponding to the operator $W$ in the Higgs branch of the original theory) then clearly $\langle \widetilde{Q}^i , A \rangle = 1 $ for all $i$ and so $\widetilde{Q}^{(0)}$ is empty. Therefore we are left with the Coulomb branch of a $\mathrm{U}(1)$ pure gauge theory, which is a copy of $T^\ast \mathbb{C}^\times$. This is the open patch that we discussed above for the choice $U=\emptyset$.

We can alternatively flip the signs of all the hypermultiplets, and invert the mirror co-character $A=(-1)$ instead to find another openly embedded patch $T^\ast \mathbb{C}^\times$. This choice corresponds to the inversion of the operator $Z$ in the Higgs branch of the original theory.

Notice that no further sign flip is compatible with the condition $\langle \widetilde{Q} , A \rangle \geqslant 0$, which is a necessary assumption in Lemma~\ref{lem:coul-fun}. 

\subsection{\label{subsec:fin-geom-min-nil}Minimal nilpotent orbit closures} 

Now we consider SQED[$N$]. The charge matrix $\mathbf{Q}$ and its mirror charge matrix $\widetilde{\mathbf{Q}}$ read
\begin{equation}\label{eq:charge-SQEDN}
    \mathbf{Q} = \left(
    \begin{array}{ccccc}
     1 & 1 & 1 & \cdots & 1 \\
     0 & 1 & 0 &  \cdots & 0 \\
     \vdots &  &  & \ddots &  \\
     0 & 0 & 0 & \cdots & 1 \\
    \end{array}
    \right)~,\qquad
    \widetilde{\mathbf{Q}} = \left(
    \begin{array}{ccccc}
     \ph{1} & 0 & 0 & \cdots & 0 \\
     -1 & 1 & 0 &  \cdots & 0 \\
     \vdots &  &  & \ddots &  \\
     -1 & 0 & 0 & \cdots & 1 \\
    \end{array}
    \right)~.
\end{equation}
Famously, the Higgs branch of SQED[$N$] is the minimal nilpotent orbit closure in $\slf_N^\ast$. As in the case $N=2$, this can be seen by putting the collection of gauge invariant operators $Y^iX_j$ into a $N\times N$ charge matrix $M$. The complex moment map reads
\begin{equation}
    \mu_{\mathbb{C}} =\sum_{i=1}^N Y^iX_i~.
\end{equation}
Setting this to zero results in the conditions $M^N = \tr (M) = 0$, which define the minimal nilpotent orbit closure. If we introduce a non-zero real FI parameter $\zeta \in \mathbb{R}$, then it is easy to see that 
\begin{equation}
    \cM_{H,\zeta} \cong T^\ast \mathbb{C}\mathbb{P}^{N-1}~,
\end{equation}
with core $\mathbb{C}\mathbb{P}^{N-1}$. This is the Springer resolution. There are two chambers for the resolutions, $\zeta>0$ and $\zeta < 0$, corresponding to whether we are requiring at least one of the $Y^i$s to be invertible, or one of the $X_i$s.

\subsubsection{Open patches in SQED[\texorpdfstring{$3$}{3}]}

We now discuss the open patches in these minimal nilpotent orbit closures that are realised in our approach for $N=3$. Recall that the mirror matrix reads
\begin{equation}
    \widetilde{Q} = 
    \begin{pmatrix}
        -1 & 1 & 0 \\
        -1 & 0 & 1 
    \end{pmatrix}~.
\end{equation}
Thus, we have three hyperplanes that up to overall translation can be taken to be
\begin{equation}
\begin{split}
H_1 : &\ -x_1-x_2 = 0~, \\
H_2 : &\ x_1 = 0~, \\
H_3 : &\ x_2 = \zeta~.
\end{split}
\end{equation}
with $\zeta>0$. We can represent the hyperplane arrangement as in the Figure below. 

\begin{center}
\begin{tikzpicture}
\draw[draw=blue, thin , fill=gray!50!white, -] (2.5,-2.5)--(-2.5,+2.5) ;
\draw[draw=blue, thin , fill=gray!50!white, -] (0,-3)--(0,3) ;
\draw[draw=blue, thin , fill=gray!50!white, -] (-3,1.5)--(3,1.5) ;
\draw[<->] (-3,0)--(3,0) node[below right]{$ \mathfrak{t}^{*,1}_{H,\mathbb{R}}$};
\draw[<->] (0,-2.5)--(0,2.5) node[above right]{$ \mathfrak{t}^{*,2}_{H,\mathbb{R}}$};
\filldraw[black] (0,1.5) circle (2pt) node[anchor=north east]{$\zeta$};
\end{tikzpicture}
\end{center}

We consider two choices of inversions $U$ that are compatible with $\zeta>0$: $U=\{2,3\}$, which descends to a viable choice on the nilpotent orbit closure, and $U=\{1,2,3\}$, which does not. Following the above convention, we colour in blue the chambers that we lose by inverting $X_i$s, and in red the chambers that we lose by inverting $Y^i$s.

\paragraph{Choice $U=\{1,2,3\}$} With this choice we get a $T^\ast \left(\mathbb{C}^\times\right)^2$ where the that has components in the core (the area that remains white).

\begin{center}
\begin{tikzpicture}
\fill[blue!30]      (0,-3) rectangle (3.3,3) ;
\fill[blue!30]      (-3,1.5) rectangle (3,3) ;
\fill[blue!30]      (-3.3,-3) rectangle (-3,3) ;
\fill[blue!30]      (-3.3,3) rectangle (3.3,3.1) ;
\fill[blue!30]      (-3.3,-3) rectangle (3.3,-3.1) ;
\fill[blue!30]      (3,-3) -| (-3,3) -- cycle;
\draw[draw=blue, very thick ,  -] (-3.1,3.1)--(3.1,-3.1) ;
\draw[draw=blue, very thick , -] (0,-3.1)--(0,3.1) ;
\draw[draw=blue, very thick , -] (-3.3,1.5)--(3.3,1.5) ;
\draw[<->] (-2.5,0)--(2.5,0) node[below right]{$ \mathfrak{t}^{*,1}_{H,\mathbb{R}}$};
\draw[<->] (0,-2.5)--(0,2.5) node[above right]{$ \mathfrak{t}^{*,2}_{H,\mathbb{R}}$};
\filldraw[black] (0,1.5) circle (2pt) node[anchor=north east]{$\zeta$};
\end{tikzpicture}
\end{center}

\paragraph{Choice $U=\{2,3\}$} In this case, the remaining $T^*(\mathbb{C}^\times)^2$ does not degenerate when $\zeta = 0$. In fact, we are inverting the operators $X_1Y^2$ and $e_2 = X_1Y^3$, which are both in the chiral ring of the minimal nilpotent orbit closure.

\begin{center}
\begin{tikzpicture}
\fill[blue!30]      (0,-3) rectangle (3.3,3) ;
\fill[blue!30]      (-3.3,1.5) rectangle (3.3,3) ;
\fill[purple!30]      (3,-3) rectangle (3.3,3) ;
\fill[purple!30]      (-3,3) -| (3,-3) -- cycle;
\fill[red!30]      (-1.5,1.5) -| (0,0) -- cycle;
\draw[draw=red, very thick ,  -] (3,-3)--(-3,3) ;
\draw[draw=blue, very thick , -] (0,-3)--(0,3) ;
\draw[draw=blue, very thick , -] (-3.3,1.5)--(3.3,1.5) ;
\draw[<->] (-2.5,0)--(2.5,0) node[below right]{$ \mathfrak{t}^{*,1}_{H,\mathbb{R}}$};
\draw[<->] (0,-2.5)--(0,2.5) node[above right]{$ \mathfrak{t}^{*,2}_{H,\mathbb{R}}$};
\filldraw[black] (0,1.5) circle (2pt) node[anchor=north east]{$\zeta$};
\end{tikzpicture}
\end{center}

\subsubsection{\label{sec:sqed3-mirror-geom}Mirror perspective}

We now take a mirror perspective and discuss open patches that openly embeds in the (unresolved) minimal nilpotent orbit closures for general $N$.

To this end, consider the charge matrix $\widetilde{Q}$. We can flip the signs of the charges of the last $N-1$ hypermultiplets to obtain the matrix
\begin{equation}
    \widetilde{Q}_U = \left(
    \begin{array}{ccccc}
     -1 & -1 & \ph{0} & ~\cdots & \ph{0} \\
     -1 & \ph{0} & -1 & ~\cdots & \ph{0} \\
     \vdots &  \ph{\vdots} &  & ~\ddots &  \\
     -1 & \ph{0} & \ph{0} & ~\cdots & -1 \\
    \end{array}
    \right)~.
\end{equation}

Then we can invert the mirror monopole operator defined by the co-character $(-1,0,\cdots ,0)$ (corresponding to the operator $X_1Y^2$ of the original theory) to obtain 
\begin{equation}
    \widetilde{Q}_U^{(0)} = \left(
   \begin{array}{ccc}
         \ph{0}     & ~~\cdots & \ph{0} \\
         -1         & ~~\cdots & \ph{0} \\
         \ph{0}     & ~~\cdots & \ph{0} \\
        \ph{\vdots} & ~~\ddots & \ph{\vdots} \\
        \ph{0}      & ~~\cdots & -1 \\
    \end{array}
    \right)~.
\end{equation}
This is the matrix defining a $\mathrm{U}(1)$ pure gauge theory and $N-2$ copies of SQED[$2$], we have therefore identified a patch $T^\ast \mathbb{C}^\times \times T^\ast \mathbb{C}^{N-2}\subset \mathcal{M}_C[\widetilde{Q}]\cong \mathcal{M}_H$. 

This patch should in principle be good enough to construct a free field realisation, and in fact we are going to discuss such a construction in Section~\ref{sec:sqed3-example}. However, we could further invert the mirror monopoles associated to the various mirror co-characters $(\ldots ,0,-1,0\ldots )$ to obtain an openly embedded patch $T^\ast (\mathbb{C}^\times)^{N-2}\subset \mathcal{M}_H$. This are precisely the operators that we invert with the $U=\{2,\cdots ,N\}$ choice.

\section{\label{sec:voa_examples}Further examples of free field realisation}

In this final section, we apply our general free field analysis to SQED[$N$] and $A_{N-1}$ quiver gauge theories. The (singular) Higgs branches for these theories are minimal nilpotent orbit closures in $\mathfrak{sl}_N^*$ and $A_{N-1}$ Kleinian singularities, respectively (see Section~\ref{subsec:fin-geom-AN}~and~\ref{subsec:fin-geom-min-nil}).

\subsection{\texorpdfstring{SQED[$N$]}{SQED[N]}}\label{sec:sqed3-example}

Our first set of theories are $\mathrm{U}(1)$ gauge theories with $N$ hypermultiplets of weight $1$. The charge matrix $\mathbf{Q}$ and its mirror charge matrix $\widetilde{\mathbf{Q}}$ were already reported in~\eqref{eq:charge-SQEDN}
\begin{equation}
    \mathbf{Q} = \left(
    \begin{array}{ccccc}
     1 & 1 & 1 & \cdots & 1 \\
     0 & 1 & 0 &  \cdots & 0 \\
     \vdots &  &  & \ddots &  \\
     0 & 0 & 0 & \cdots & 1 \\
    \end{array}
    \right)~,\qquad
    \widetilde{\mathbf{Q}} = \left(
    \begin{array}{ccccc}
     \ph{1} & 0 & 0 & \cdots & 0 \\
     -1 & 1 & 0 &  \cdots & 0 \\
     \vdots &  &  & \ddots &  \\
     -1 & 0 & 0 & \cdots & 1 \\
    \end{array}
    \right)~.
\end{equation}
The $A$-twisted VOA is the BRST reduction of $N$ symplectic boson pairs $(X_i,Y^i)$ and $N$ complex fermion pairs $(\chi_\alpha, \xi^\alpha)$ with respect to the zero modes of the BRST current
\begin{equation}
    J^{\mathrm{BRST}} =  c\left( X_a Y^a + \chi_\alpha \xi^\alpha \right)~,
\end{equation}
where $(b,c)$ is a ghost system of weight $(1,0)$.

The standard list of BRST-invariant operators (\cf\ \ref{subsec:general_embedding}) reduces to the following. For $i\neq j$ and $k\neq 1$, we have BRST-closed bi-linears composed of symplectic bosons,
\begin{equation}\label{eq:slN-1}
   X_iY^j~,\qquad-X_1Y^1 + X_k Y^{k}~,
\end{equation}
which generate a $V_{-1}(\slf_N)$ VOA. Similarly (for the same conditions $i\neq j$, $k\neq 1$), we have BRST-closed bi-linears composed of the complex fermions
\begin{equation}
    \xi^i\chi_j~,\qquad-\xi^1\chi_1 + \xi^k\chi_k~,
\end{equation}
which generate a $V_{1}(\slf (N))$ VOA. Finally, we have odd, fermionic current generators,
\begin{equation}\label{eq:slN-odd}
    Y^i \chi_j~,\qquad X_i\xi^j~.
\end{equation}
Together, these generate a quotient of the level one $V^1(\slf(N|N))$, which after taking cohomology with respect to the BRST operator gets reduced to a quotient (we expect the simple quotient) of $V^1(\pslf(N|N)))$.

We can now apply our general procedure to these examples. After briefly describing the change of lattice coordinates in Section~\ref{subsubsec:general_coordinate_charnge}, in Section~\ref{subsubsec:sln_quick_free} we give a short exposition of the free field realisation that arises by applying the methods of Section~\ref{sec:general_abelian}. We also compare our result with the free field realisation of $V_{-\frac32}(\slf(3))$ proposed in~\cite{Beem:2019tfp}, as well as an extensions of that result to $V_{-1}(\mathfrak{sl}_N)$. We comment on the geometric aspects of this comparison in Section~\ref{subsubsec:sln_more_geom}, and see a nice application of our Coulomb branch analysis above. The actual free field constructions are compared in Section~\ref{subsubsec:sln_slower_free}, where inspired by the finite-dimensional geometry we de-bosonise some combinations of our free fields.

\subsubsection{\label{subsubsec:general_coordinate_charnge}Change of coordinates}

We will work with the $\epsilon = (+--\cdots -)$ choice of bosonisation, which generalises the one used in our first analysis of SQED[$2$] in Section~\ref{subsec:TSU2-BRST}. From the point of view of the Higgs branch, this corresponds to localising on a $T^\ast \left(\mathbb{C}^\times\right)^N\subset \cM_H$ subset of the singular Higgs branch, as we described in Section~\ref{subsec:fin-geom-min-nil}. The modified charge matrices now take the form
\begin{equation}
    Q_\epsilon = \left(
    \begin{array}{ccccc}
     1 & -1 & -1 & \cdots & -1 
    \end{array}
    \right)~,\qquad
    \widetilde{Q}_\epsilon = \left(
    \begin{array}{ccccc}
     -1 & -1 & \ph{0} & \cdots & \ph{0} \\
     -1 & \ph{0} & -1 & \cdots & \ph{0} \\
     \vdots &  \ph{\vdots} &  & \ddots &  \\
     -1 & \ph{0} & \ph{0} & \cdots & -1 \\
    \end{array}
    \right)~,
\end{equation}
from which we compute the remaining players in the change of basis matrix \eqref{eq:gen-COB},
\begin{equation}
   A_\epsilon^{-1} =
  \frac{1}{N}\left(
    \begin{array}{ccccc}
   \ph{1} & -1 & -1 & \cdots & -1 \\
     -1 & -N+1 &\ph{1} & \cdots &\ph{1} \\
     \vdots & & \ph{\ddots} & & \ph{\vdots} \\
     -1 &\ph{1} & \ph{1} & -N+1 &\ph{1} \\
     -1 &\ph{1} & \ph{1} &1 & -N+1 \\
    \end{array}
    \right)
    ~,\quad
    F_\epsilon^T =\frac{1}{N} \left(
    \begin{array}{ccccc}
    \ph{1}      & -1            & -1        & \cdots &- 1 \\
    -1          & \ph{1}        & \ph{1}         & \cdots & \ph{1} \\
    \ph{\vdots} & \ph{\vdots} & \vdots    & \ddots & \ph{1} \\
     -1         & \ph{1}         & \ph{1}         & \cdots & \ph{1} \\
    \end{array}
    \right)~.
\end{equation}
For our free fields, we therefore take the following basis,
\begin{equation}
\begin{alignedat}{3}
    &\phi_i     &&= -\sigma_1 - \sigma_{i+1}~,\qquad &&i = 1,\ldots,N-1~, \\
    &\delta_i   &&= \rho_1 + \rho_{i+1}~,\qquad &&i = 1,\ldots,N-1~, \\
    &\omega_i   &&= \gamma_i + \frac{(-1)^{\delta_{1i}}}{N} \left(\sum_{j=1}^N (\sigma_j - \rho_j) - 2(\sigma_{1} - \rho_{1} ) \right)~,\qquad &&i=1,\ldots,N~.
\end{alignedat}
\end{equation}
The two point functions in this basis are not diagonal, but rather we have pairings
\begin{equation}
\begin{aligned}
    ( \delta_i,\delta_j ) &= -( \phi_i,\phi_j ) = 1 + \delta_{ij}~,\\
    ( \omega_i , \omega_j ) &=  \delta_{ij}
    ~.
\end{aligned}
\end{equation}
The lattice extension of this Heisenberg which defines $\wt{V}_{\mathrm{FFR}}$ is given by
\begin{equation}
    \bigoplus_{i=1}^{N-1} \mathbb{Z}\left(\delta_i + \phi_i - \sum_{j=1}^{N-1}\frac{\delta_j+\phi_j}{N}\right) \oplus \bigoplus_{i=1}^{N} \left( \omega_i \right)~.
\end{equation}
As we described below~\eqref{eq:refinement}, free fields with mutually vanishing OPEs could be obtained by a rational change of variables. Though such a refinement could be useful in interpreting the free fields in terms of CDOs on $T^\ast (\mathbb{C}^\times)^{N-1}$, for our purposes it will be simpler to continue with the present basis.\footnote{The relevant lattice refinement would correspond to an orthogonalisation of $\widetilde{Q}_\epsilon$, which reads,
\begin{equation}
        \widetilde{Q}^{\text{orth}}_\epsilon = \left(
    \begin{array}{ccccc}
     -1 & -1 & 0 & \cdots & 0 \\
     -\frac{1}{2} & \frac{1}{2} & 1 & \cdots & 0 \\
     \vdots &  \vdots &  & \ddots &  \\
     -\frac{1}{N-1} & \frac{1}{N-1} & \frac{1}{N-1} & \cdots & 1 \\
    \end{array}\right)~.
\end{equation}
}.

We will need a few notational ingredients to produce our free field realisation. First, we define our free fermions as follows (for later convenience, we are introducing a relative minus sign for $\omega_1$ compared to~\eqref{eq:gen_free_fermions})
\begin{equation}
    J_{\omega_i} = (-1)^{\mathbf{\delta}_{i1}} \psi^i \widetilde{\psi}_{i}~.
\end{equation}
We then introduce the following current, whose zero mode generates the $\mathrm{U}(1)_C$ outer automorphism,
\begin{equation}
    J_\psi \coloneqq \frac{1}{N}\left( \sum_{i= 1}^N \psi^i \widetilde{\psi}_i \right)~.
\end{equation}
The BRST representatives in $V_{\rm FFR}$ of the currents $J_{\rho_i}$ can now be written as\footnote{Here and below, we use the shorthand notation that fields with index $\leqslant 0$ vanish.}
\begin{equation}\label{eq:sqedN_rho_currents}
    \bar{J}_{\rho_i} = (-1)^{\delta_{i1}} \left( -\frac{1}{N} \sum_{j=1}^{N-1} J_{\delta_j} - J_\psi \right) + J_{\delta_{i-1}}~.
\end{equation}

\subsubsection{\label{subsubsec:sln_quick_free} Free field realisation}

Given the ingredients above, we can efficiently produce a free field realisation of the $\pslf(N|N)$ current algebra. Indeed, the computations are directly parallel to the $N=2$ case (\cf\ Tables~\ref{tab:TSU2-free-field}~and~\ref{tab:TSU2-free-field-2}). First, provided $i\neq j$, we have replacements,
\begin{equation}\label{eq:sln1-free-field}
    X_i Y^j ~~\mapsto~~
    \begin{cases}
        \phantom{-\bar{J}_{\rho_1} \bar{J}_{\rho_i}}e^{(\phi_{j-1} + \delta_{j-1})} & i=1~, \\
        \phantom{\bar{J}_{\rho_i}}\!\!-\bar{J}_{\rho_i}e^{-(\phi_{i-1} + \delta_{i-1})+(\phi_{j-1} + \delta_{j-1})} & i\neq 1,~j\neq 1~,\\
        -\bar{J}_{\rho_1} \bar{J}_{\rho_i} e^{-(\phi_{i-1} + \delta_{i-1})} & j=1~.
    \end{cases}
\end{equation}
The currents in~\eqref{eq:slN-1} are then simply identified with the $J_{\phi_i}$. One can check directly that these fields generate indeed a level $-1$ $\slf_N$ current algebra.

The $V_{1}(\slf_N)$ generators are given by the usual construction in terms of the free fermions, with
\begin{equation}
    \xi^i\chi_j ~\mapsto~ -\psi^{i}\widetilde{\psi}_j~,\qquad i\neq j~,
\end{equation}
and a basis for the Cartan subalgebra given by
\begin{equation}
    \psi^{1}\widetilde{\psi}_1 - \psi^{j}\widetilde{\psi}_j~,\qquad j=2,\ldots,N~.
\end{equation}
Finally, the odd generators read
\begin{equation}
\begin{aligned}
    Y^i\chi_j &~~\mapsto~~
    \begin{cases}
       \bar{J}_{\rho_1} e^{-\frac{1}{N}\sum_{l=1}^{N-1}( \phi_l + \delta_l )} \psi^j~, &i=1~, \\
      e^{-\frac{1}{N} \sum_{l=1}^{N-1}( \phi_l + \delta_l )+(\phi_{i-1}+\delta_{i-1})} \psi^j~,\quad &\text{else}~,
    \end{cases}\\
     X_i\xi^j &~~\mapsto~~
    \begin{cases}
       e^{\frac{1}{N} \sum_{l=1}^{N-1}( \phi_l + \delta_l )} \widetilde{\psi}_j &i=1~,\\
       -\bar{J}_{\rho_i}e^{\frac{1}{N} \sum_{l=1}^{N-1}( \phi_l + \delta_l )-(\phi_{i-1}+\delta_{i-1})} \widetilde{\psi}_j &\text{else}~.
    \end{cases}
\end{aligned}
\end{equation}
The full set of $\pslf(N|N)$ OPEs can be verified for all of these operators by direct computation.

As we mentioned above, the zero-mode of the trace current $J_\psi$ (which is not included in $V_{-1}(\slf_N) $) generates a $\mathfrak{u}(1)$ outer automorphism symmetry. This corresponds to the $\mathrm{U}(1)_C$ topological symmetry of SQED[$N$], which for $N>2$ is not enhanced. Indeed, from \eqref{eq:sqedN_rho_currents} one can see that the screening currents $e^{\bar{\rho}}_i$ will not enjoy any non-Abelian automorphism enhancement for $N=2$, as they each depend differently on the $\delta$ bosons. This is as it should be given the lack of a balanced subgroup of the gauge symmetry in this example. From the point of view of the outer automorphism symmetry, then, our choice of bosonisation is in this case was not particularly important.

\subsubsection{\label{subsubsec:sln_more_geom} More on finite-dimensional geometry}

In order to understand a little more concretely the inner workings of our free field realisation, it is instructive to compare our result here with a related construction in four dimensions~\cite{Beem:2019tfp}, where free field realisations of the simple affine Kac-Moody algebra $V_\kappa (\mathfrak{sl}_3)$ at level $\kappa=-\frac{3}{2}$ was produced. This VOA has been proven to have the minimal nilpotent orbit in $\mathfrak{sl}_3^\ast$ as associated variety, and the construction of~\cite{Beem:2019tfp}---much in the spirit of the \emph{ad hoc} free field realisation of $V_{-1}(\pslf(2|2))$ presented in Section~\ref{sec:tsu2}---exploits the geometry of the minimal nilpotent orbit to produce a free field realisation. Furthermore, it was announced in~\cite{Beem:2019tfp} that similar constructions can be performed for $V_{1}(\mathfrak{sl}_N)$, but at the cost of introducing an additional auxiliary $\mathfrak{gl}_1$ Heisenberg current. Intuitively, this is because the associated varieties to $V_{-1}(\mathfrak{sl}_N)$ \emph{are not} the respective minimal nilpotent orbits~\cite{ARAKAWA2017157}, but include some additional direction in $\slf_N^\ast$ related to the extra current.

Per Conjecture~\ref{conj:ass-var}, we expect the associated variety for $V_{1}(\pslf(N|N))$ to be the minimal nilpotent orbit, so we may expect some connection between the geometric construction of~\cite{Beem:2019tfp}---including the extra Heisenberg current---and our free field realisation, but the extra current should be accounted for by extra nilpotent degrees of freedom. We will indeed demonstrate that this is the case.

The geometric aspects of $\slf_N^\ast$ minimal nilpotent orbit closures used in~\cite{Beem:2019tfp} are based on the following decomposition of $\mathfrak{sl}_N$,
\begin{equation}\label{eq:slN-dec}
  \slf_N \cong \left( \slf_{N-2} \oplus \mathbb{C}h_{\perp} \oplus (\slf_2)_\theta \right) \oplus (\mathfrak{R},\mathbf{2})~,
\end{equation}
where $(\slf_2)_\theta$ is an $\mathfrak{sl}_2$-triple associated to a choice of highest root, and $\mathfrak{R}$ transforms in the representation $(\mathbf{N}-\mathbf{2})_+ \oplus \overline{(\mathbf{N}-\mathbf{2})}_-$ of $\slf_{N-2} \oplus \mathbb{C}$ (here $+$ and $-$ correspond to positive and negative weights with respect to the extra Cartan generator $h_\perp$). We will pick a basis $(e_\theta , f_\theta , h_\theta)$ for $(\slf_2)_\theta$ and decompose the $(\mathfrak{R},2)$ generators in terms of their $h_\theta$ weight,
\begin{equation}
    (\mathfrak{R},2) = \mathfrak{R}^+ \oplus \mathfrak{R}^-~.
\end{equation}
We write $e_a$, $\tilde{e}_a$ $a \in \{1,\cdots , N-2\}$ as bases for the two $\slf_{N-2}$ representations in $\mathfrak{R}^+$. A key observation in~\cite{Beem:2019tfp} is that upon inversion of $e_\theta$, the generators of the minimal nilpotent orbit can be expressed in terms of the functions $(e_\theta , e_{\theta}^{-1} ,  h_\theta )$ and $e^a$, $\widetilde{e}_a$ which have non-vanishing Poisson brackets
\begin{equation}\label{eq:sln-poisson}
     \{h_\theta , e_\theta\} = 2 e_\theta~,\quad
     \{e^a , \widetilde{e}_b\} = \delta_{b}^a e_\theta ~,\quad
     \{h_\theta , e^a \} = e^a~\quad
     \{h_\theta , \widetilde{e}_a \} = e_a~.
\end{equation}
Moreover, if one introduces $(\boldsymbol{d}_\theta , \boldsymbol{\beta}_a,  \boldsymbol{\gamma}_a)$ for $a\in \{1,\cdots , N-2\}$ according to,
\begin{equation}\label{eq:sln-beta-def}
    \boldsymbol{\beta}^a \coloneqq e^{-\frac12}_\theta e_a ~,\qquad
    \boldsymbol{\gamma}_a =  e^{-\frac12}_\theta \widetilde{e}_{a} ~,\qquad
    \boldsymbol{d}_\theta = e_\theta^{\frac12}~, 
\end{equation}
then one can realise all of the coordinate functions in terms of $\mathfrak{h}_\theta$, $\boldsymbol{d}_\theta$, $\boldsymbol{\beta}^a$, $\boldsymbol{\gamma}_a$ with non-vanishing Poisson brackets given by
\begin{equation}
    \{h_\theta , \boldsymbol{d}_\theta\} = \boldsymbol{d}_\theta~,\qquad \{\boldsymbol{\beta}^a , \boldsymbol{\gamma}_b\} = \delta^a_b~.
\end{equation}
These are interpreted as coordinate functions on 
\begin{equation}\label{eq:ad-hoc-free}
    T^\ast \mathbb{C}^\times \times T^\ast \mathbb{C}^{N-2}~,
\end{equation}
and so $h_\theta$, $e_\theta$, $e^a$, $\widetilde{e}_a$ can be interpreted as functions on
\begin{equation}
    \left( T^\ast  \mathbb{C}^\times \times T^\ast \mathbb{C}^{N-2} \right) /\mathbb{Z}_2~,
\end{equation}
with $\mathbb{Z}_2$ acting on $\mathbb{C}^\times$ by negation. The free fields introduced in~\cite{Beem:2019tfp} correspond to CDOs on the space~\eqref{eq:ad-hoc-free}.

Let us compare this picture with the open patches we discussed in Section~\ref{subsec:fin-geom-min-nil}, and in particular the restriction functor on Coulomb branches as presented in Section~\ref{sec:sqed3-mirror-geom}. We saw there that inversion of $e= e_\theta$ indeed leads to an openly embedded patch of the form
\begin{equation}
  \left( T^\ast  \mathbb{C}^\times \times T^\ast \mathbb{C}^{N-2} \right)~,
\end{equation}
where $T^\ast \mathbb{C}^{N-2}$ can be identified with $N-2$ copies of the Coulomb branch of SQED[$1$]. The way in which this patch is obtained, however, is slightly different from the one presented in~\cite{Beem:2019tfp}. In fact, instead of the redefinition~\eqref{eq:sln-beta-def}, the isomorphism~\eqref{eq:coulomb-iso} is such that factors of $e_\theta$ are mapped into the new variables in an asymmetric fashion, namely
\begin{equation}
w_+^a \coloneqq e_\theta e_a ~,\qquad w_-^a =   \widetilde{e}_{a}~.
\end{equation}
The relation between the two open patches can be understood in terms of an isomorphism
\begin{equation}
    \left( T^\ast  \mathbb{C}^\times \times T^\ast \mathbb{C}^{N-2} \right) /\mathbb{Z}_2 \cong  \left( T^\ast  \mathbb{C}^\times \times T^\ast \mathbb{C}^{N-2} \right)~,
\end{equation}
where the two sides are subject to different $\mathbb{C}^\times$ scaling. Finally, notice that the open patch $T^\ast  \left(\mathbb{C}^\times\right)^N$ that underlies the free field realisation of Section~\ref{subsubsec:sln_quick_free} can be obtained by further inverting the functions $e_a$.

\subsubsection{\label{subsubsec:sln_slower_free} More on free field realisations and de-bosonisation}

The discussion above raises the question of whether there is a chiral analogue of these relations. More precisely, starting from the free field realisation presented above, perhaps one can \emph{de-bosonise} some auxiliary $\beta \gamma$-systems to obtain the generators proposed in~\cite{Beem:2019tfp} for the $V_{-1}(\mathfrak{sl}_N)$ component of our current algebra, with the extra Heisenberg current provided by the free fermions. In this section, we will show that this is precisely the case.

By an abuse of notation, we use the same letters for the functions we introduce above and their chiral analogues. According to the discussion around~\eqref{eq:sln1-free-field}, we have for the relevant generators in our free field realisation
\begin{equation}\label{eq:sln-field-patch}
\begin{alignedat}{3}
    &h_\theta &&= -J_{\phi_1}~,\qquad e^a &&= e^{\phi_{a+1} + \delta_{a+1}}~,\\
    &e_\theta &&= e^{\phi_1+\delta_1}~,\qquad \wt{e}_a &&= -\bar{J}_{a+2} e^{-(\phi_{a+1} + \delta_{a1+} )+(\phi_1+\delta_1)}~,
\end{alignedat}
\end{equation}
with OPEs in accordance with~\eqref{eq:sln-poisson}
\begin{equation}
\begin{split}
    h_\theta (z) e_\theta (w) &~~\sim~~ \frac{2e_\theta(w)}{z-w}~, \\
    e^a (z) \widetilde{e}_b (w) &~~\sim~~ \frac{\delta_{b}^a e_\theta(w)}{z-w}~,\\
    h_\theta (z) e_a(w) &~~\sim~~ \frac{e_a(w)}{z-w}~.
\end{split}
\end{equation}
If we now define (for $a\in\{1,\cdots,N-2\}$),
\begin{equation}\label{eq:sln-field-patch-2}
\begin{split}
    \boldsymbol{d}_\theta &~~\coloneqq~~ e^{\frac{1}{2}\left(\phi_1+\delta_1\right)}~,\\
    \boldsymbol{\beta}^a &~~\coloneqq~~ e^{(\phi_{a+1} + \delta_{a+1})-\frac{1}{2}(\phi_1+\delta_1)}~,\\
    \boldsymbol{\gamma}_a &~~\coloneqq~~ -\bar{J}_{a+2} e^{-(\phi_{a+1} + \delta_{a+1} )+\frac12 (\phi_1+\delta_1)}~, 
\end{split}
\end{equation}
then these fields enjoy the OPEs of CDOs on $T^\ast \mathbb{C}^\times\times T^\ast \mathbb{C}^{N-2}$~,
\begin{equation}
\begin{split}
    \boldsymbol{\beta}^a (z) \boldsymbol{\gamma}_b (w) &~~\sim~~ \frac{\delta_a^b}{z-w}~,\\
    h_\theta (z) \boldsymbol{d}_\theta (w) &~~\sim~~ \frac{\boldsymbol{d}_\theta}{z-w}~.
\end{split}
\end{equation}
Indeed, we can see that the pairs $(\boldsymbol{\beta}^a,\boldsymbol{\gamma}_a)$ in~\eqref{eq:sln-field-patch-2} are in fact $(N-2)$ bosonised $\beta \gamma$-systems by making the following identifications,
\begin{equation}
    \boldsymbol{\beta}^a \coloneqq e^{\upsilon_a - \tau_a}~,\qquad\boldsymbol{\gamma}_a \coloneqq J_{\upsilon_a} e^{-\upsilon_a + \tau_a}~,
\end{equation}
where we define yet another basis of free bosons,
\begin{equation}
\begin{split}
    \upsilon_a &\coloneqq \delta_{a+1}-\nu ~,\\
    \tau_a &\coloneqq -\phi_{a+1}-\frac12(\phi_1+\delta_1)-\nu~.
\end{split}
\end{equation}
For convenience, we have introduced
\begin{equation}
    \nu \coloneqq \frac{1}{N} \left(\sum_{j=1}^{N-1} \delta_j + \sum_{i=1}^{N}(-1)^{\delta_{1i}}\omega_i \right)~.
\end{equation}

\begin{table}[t!]
    \centering
    \resizebox{\textwidth}{!}{
    \begin{tabular}{c|c|c|c|}
                & Gauge theory &  FFR Representative & De-bosonisation \\ 
        \hline
        $e_\theta$              & $X_1Y^2$               & $e^{ \phi_1+\delta_1 }$                                                                             & $e^{\phi_1+\delta_1}$\\
        $f_\theta$              & $X_2Y^1$               & $\left(-\frac{1}{4}J^2_{\delta_1}  + T \right)e^{-(\delta_1 + \phi_1)}$                             & $\left(-\frac{1}{4}J^2_{\delta_1}  + T \right)e^{-(\delta_1 + \phi_1) }$ \\
        $h_\theta$              & $(Y^1X_1-Y^2X_2)$      & $-J_{\phi_1}$                                                                                       & $-J_{\phi_1}$ \\
        $e^a$                   & $X_1Y^{a+2}$           & $e^{\phi_{a+1}+\delta_{a+1} }$                                                                      & $\boldsymbol{\beta}^a \boldsymbol{d}_\theta$\\
        $\widetilde{e}_a$       & $X_{a+2}Y^2$           & $\left(- J_{\delta_{a+1}} +J_\nu \right)e^{-(\phi_{a+1}+\delta_{a+1}) + (\phi_1 + \delta_1) }$      & $\boldsymbol{\gamma}_a \boldsymbol{d}_\theta$ \\
        $f^a$                   & $X_2Y^{a+2}$           & $\left(- J_{\delta_{1}} +J_\nu \right)e^{(\phi_{a+1}+\delta_{a+1}) - (\phi_1 + \delta_1) }$                                              & $\boldsymbol{\beta}^a \boldsymbol{d}_\theta^{-1}$             \\
        $\widetilde{f}_a$       & $X_{a+2}Y^{1}$         & $J_\nu   \left(- J_{\delta_{a+1}}  +J_\nu \right)e^{-\phi_{a+1}-\delta_{a+1} }$                     & $J_\nu\boldsymbol{\gamma}_a\boldsymbol{d}^{-1}_\theta$\\
        $c_a^b$                 & $X_{a+2}{Y}^{b+2}$     &  $\left(- J_{\delta_{a+1}}+J_\nu \right)e^{-(\phi_{a+1}+\delta_{a+1})+(\phi_{b+1}+\delta_{b+1})}$   & $\boldsymbol{\gamma}_a\boldsymbol{\beta}^b$  \\
        $t_{a}$                 & $(- \frac12 Y^1X_1 - \frac12 Y^2X_2 + Y^{a}X_a)$   &   $-J_{\phi_a} -\frac12J_{\phi_1}$                                        &    $ J_\nu + \boldsymbol{\beta}^a \boldsymbol{\gamma}_a +\frac12 J_{\delta_1}   $ \\[2pt] 
        \hdashline &&& \\[\dimexpr-\normalbaselineskip+2pt]
                                & $\chi_{j}{Y}^{1}$      &  $J_\nu e^{-\frac{1}{N} \sum_{l=1}^{N-1}( \phi_l + \delta_l )} \psi^j$                  & $J_\nu\boldsymbol{d}_\theta^{-1}\Psi^j$  \\
                                & $\chi_{j}{Y}^{2}$    &  $e^{-\frac{1}{N} \sum_{l=1}^{N-1}( \phi_l + \delta_l )+(\phi_{1}+\delta_{1})} \psi^j$  & $\boldsymbol{d}_\theta \Psi^j$  \\
                                & $\chi_{j}{Y}^{2+a}$    &  $e^{-\frac{1}{N} \sum_{l=1}^{N-1}( \phi_l + \delta_l )+(\phi_{a+1}+\delta_{a+1})} \psi^j$  & $\boldsymbol{\beta}_a\Psi^j$  \\
                                & $X_1{\xi}^{j}$         &  $e^{\frac{1}{N} \sum_{l=1}^{N-1}( \phi_l + \delta_l )} \widetilde{\psi}_j$                                     & $\boldsymbol{d_\theta}\widetilde{\Psi}_j$  \\
                                & $X_{2}\xi^j$         &  $\left(- J_{\delta_{1}} +J_\nu \right)e^{\frac{1}{N} \sum_{l=1}^{N-1}( \phi_l + \delta_l )-(\phi_{1}+\delta_{1})} \widetilde{\psi}_j$  & $\left(- J_{\delta_{1}} +J_\nu \right) \boldsymbol{d}^{-1}_\theta \widetilde{\Psi}_j$  \\
                                & $X_{2+a}\xi^j$         &  $\left(- J_{\delta_{a+1}} +J_\nu \right)e^{\frac{1}{N} \sum_{l=1}^{N-1}( \phi_l + \delta_l )-(\phi_{a}+\delta_{a})} \widetilde{\psi}_j$  & $\boldsymbol{\gamma}_a\widetilde{\Psi}_j$  \\
    \end{tabular}
    }
    \caption{\label{tab:SUN}Relation between free field realisations of $V_{1}(\mathfrak{psl}(N|N))$. Here $a,b \in \{1, \cdots N-2 \}$, $i,j \in \{1,\cdots N\}$ and we are assuming $a\neq b$. The top block corresponds to  $V_{-1}(\mathfrak{sl}_N)$. $f^a$, $\widetilde{f}_a$ are the generators corresponding to a basis of $\mathfrak{R}^-$, $c_a^b$, $t_a$ be generators of $\mathfrak{sl}_{(N-2)}\oplus\mathbb{C}h_{\perp}$. The second block corresponds to the odd generators. We omit the copy of $V_1(\slf_N)$ generated by free fermions, which are standard.}
\end{table}

By further redefining the free fermions $\psi^i$, $\wt{\psi}_j$, we can absorb much of the remaining dependence on the lattice bosons. Indeed, if we set
\begin{equation}
\begin{split}
    \Psi^j ~\coloneqq~ e^{-\frac{1}{N}\sum_{l=1}^{N-1}( \phi_l + \delta_l ) + \frac12 (\delta_1+\phi_1)} \psi^j~,\\
    \widetilde{\Psi}_j ~\coloneqq~ e^{\frac{1}{N}\sum_{l=1}^{N-1}( \phi_l + \delta_l ) - \frac12 (\delta_1+\phi_1)} \widetilde{\psi}_j~,
\end{split}
\end{equation}
then these still enjoy free fermion OPEs but also commute with $\boldsymbol{\beta}^a$, $\boldsymbol{\gamma}_a$, $\boldsymbol{d}_\theta$, $h_\theta$. Moreover, we now have
\begin{equation}
    J_\nu = -\frac{N}{2} J_{\Psi} + \frac12  \boldsymbol{\beta}^a\boldsymbol{\gamma}_a + \frac12 J_{\delta_1}~.
\end{equation}
where we have defined
\begin{equation}
    J_\Psi \coloneqq \frac{1}{N}\sum_{i=1}^N \Psi^i \widetilde{\Psi}_i~.
\end{equation}
Now let us consider the form of the lowering operator in $\slf(2)_\theta$,
\begin{equation}
    f_\theta = \left(-\frac{1}{4}J^2_{\delta_1}  + T \right)e^{-(\delta_1 + \phi_1)}
\end{equation}
where
\begin{equation}\label{eq:sln-stress-1}
    T = \left(J_\nu -\frac12J_{\delta_1}\right)\left( J_\nu-\frac12 J_{\delta_1}\right)~,
\end{equation}
which we can now rewrite in our new variables as
\begin{equation}
    T = \left(-\frac{N}{2} J_{\Psi} + \frac12  \boldsymbol{\beta}^a\boldsymbol{\gamma}_a\right)^2~.
\end{equation}
So indeed, we have $J_\Psi$ acting as the additional Heisenberg current compared to the free field realisation of \cite{Beem:2019tfp}.

More generally it can be verified that all generators of $V_{1}(\mathfrak{psl}(N|N))$ can in fact be expressed in terms of these de-bosonised free fields. We therefore obtain a free field realisation of $V_{1}(\mathfrak{psl}(N|N)$ modelled on a $T^\ast \mathbb{C}^\times \times T^\ast  \mathbb{C}^{N-2}$ patch on the associated variety. Details of this and the comparison between the different free field realisations is provided in Table~\ref{tab:SUN}.

\subsection{\texorpdfstring{$A_{N-1}$}{A[N-1]} quiver example}

Finally, we consider the mirror of SQED[$N$], for which we take the charge matrices defined in~\eqref{eq:charge-AN}
\begin{equation}
    \mathbf{Q} =  \left(
    \begin{array}{ccccc}
    -1 & 1 & 0 &  \cdots & 0 \\
     \vdots &  &  & \ddots &  \\
     -1 & 0 & 0 & \cdots & 1 \\
     -1 & 0 & 0 & \cdots & 0 \\
    \end{array}
    \right)~,\qquad \widetilde{\mathbf{Q}} =\left(
    \begin{array}{ccccc}
     0 & -1 & 0 &  \cdots & 0 \\
     \vdots &  &  & \ddots &  \\
     0 & 0 & 0 & \cdots & -1 \\
     -1 &- 1 & -1 & \cdots & -1 \\
    \end{array}
    \right)~.
\end{equation}
The charge matrix $Q$ encodes the $A_{N-1}$ quiver. The VOA for $N=3$ has recently been studied in~\cite{Yoshida:2023wyt} where the OPEs were constructed directly.

The theory has a $\mathfrak{su}(N)_C$ Coulomb branch symmetry (the symmetry of the $\slf_N$ minimal nilpotent orbit) which enhances the $\mathfrak{u}(1)_C^{N-2}$ topological symmetry visible in the UV. Per our discussion in Section~\ref{subsec:outer_out}, we expect two choices of bosonisation to make this symmetry manifest as outer automorphism symmetry of the free field realisation: $(++\cdots+)$ and $(--\cdots -)$. In terms of charts on the Higgs branch, as described in Section~\ref{subsec:fin-geom-min-nil}, these choices indeed correspond to localisation on an open $T^\ast \mathbb{C}^\times \subset \mathcal{M}_H $ of the singular Higgs branch.

Below, after reviewing the prescribed change of basis, we list the expected generators in terms of free fields in the $\epsilon = (++\cdots +)$ bosonisation and observe the manifest $\mathrm{SL}(N)_o$ outer automorphism symmetry. Finally, specialise to $N=3$ case and study the VOA in more detail in Section~\ref{subsubsec:A3-VOA}.

\subsubsection{\label{subsubsec:AN-COB}Change of coordinates}

We compute the relevant matrices for the change of basis \eqref{eq:gen-COB}:
\begin{equation}
   A_\epsilon^{-1} =
  \frac{1}{N}\left(
    \begin{array}{ccccc}
    -1 & -1 & \cdots & -1 & 1 \\
     N-1 & -1 & \cdots & -1 &1 \\
     \vdots & \ph{\ddots} &  & \vdots & \vdots \\
     -1 &  & N-1 & -1 &1 \\
     -1 & -1 & \cdots  & N-1 & 1 \\
    \end{array}
    \right)
    ~,~
    F_\epsilon =\frac{1}{N} \left(
    \begin{array}{ccccc}
     N-1 & -1 & \cdots & -1 &- 1 \\
     -1 & N-1 & \cdots & -1 & -1 \\
    \vdots & \vdots & \ddots  & \vdots & -1 \\
     -1 & -1 & -1 & \cdots & N-1 \\
    \end{array}
    \right)~.
\end{equation}
Thus, the new basis for the lattice vertex algebra will be
\begin{equation}
\begin{split}
    \phi            &= -\sum_{i=1}^N \sigma_i~,\\
    \delta          &= \sum_{i=1}^N \rho_i~,\\
    \omega_i        &= \gamma_i+(\sigma_i - \rho_i)+\frac{1}{N}\sum_{j=1}^N (\rho_j-\sigma_j)~.
\end{split}
\end{equation}
The fields have the normalisation
\begin{equation}
    ( \delta , \delta ) = - ( \phi , \phi ) = N~,\qquad ( \omega_i , \omega_j) = \delta_{ij}~,
\end{equation}
and the lattice extension of this algebra which defines $\wt{V}_{\rm FFR}$ is given by
\begin{equation}
    \mathbb Z \left(\frac{\delta+\phi}{N}\right) \oplus \bigoplus_{i=1}^N \mathbb{Z}\,\omega_i~.
\end{equation}
We define free fermions
\begin{equation}
    \psi^i \coloneqq e^{-\omega_i}~,\qquad~\widetilde{\psi}_i \coloneqq e^{\omega_i}~,
\end{equation}
so the free field space is the tensor product of a half lattice $\mathcal{D}^{(ch)}(\mathbb{C}^\times)$ with $N$ free fermions as expected. With an eye towards outer automorphisms, we adopt the notation $\psi^A$, $\wt\psi_A$ where $A,B,C\ldots = 1,\cdots N$ are now indices fundamental/anti-fundamental representations of $\mathrm{SL}(N)$. We introduce the $\mathrm{SL}(N)$ invariant (trace) current,
\begin{equation}\label{eq:a2-fcurr}
    J_\psi = \psi^A \widetilde{\psi}_{A}~.
\end{equation}
The BRST representatives $\bar{J}_{\rho_i}$ of the currents $J_{\rho_i}$ read as follows
\begin{equation}\label{eq:a2-drho}
    \bar{J}_{\rho_i} = \tfrac{1}{N}\left( J_\delta - N \psi^i \widetilde{\psi}_i + J_\psi \right)~.
\end{equation}
The corresponding screening currents are given by
\begin{equation}
    \tilde{\mathfrak{s}}_i = \exp\left(\tfrac{1}{N}\delta-\tfrac{1}{N}\sum_{j=1}^N\omega_j+\omega_i\right)~,
\end{equation}
and indeed these transform in the standard representation of horizontal $\slf_N$ subalgebra of the free fermionic $V_1(\slf_N)$ that commutes with $J_\psi$.

\subsubsection{\label{subsubsec:AN-gen}\texorpdfstring{$A_{N-1}$}{A[N-1]} generators and outer automorphism symmetry}

In this short section, we briefly outline the list of generators we expect in the $A_{N-1}$ case. From the flavour moment map and its fermion counterpart, we will have two Heisenberg currents,
\begin{equation}
\begin{split}
    \sum_{i=1}^N X_iY^i &~~~\mapsto~~ -J_\phi~,\\
    \sum_{i=1}^N \chi_i\xi^i &~~~\mapsto~~ \ph{J_\psi}~.
\end{split}
\end{equation}
The $N$ Grassmann-odd currents take the form
\begin{equation}\label{eq:an-free-currents-2}
\begin{alignedat}{3}
    &X_i\xi^i &&~~~\mapsto~~~ J^+_{A} &&~~\coloneqq~~   e^{\frac{1}{N}(\phi + \delta)} \widetilde{\psi}_A~,\\
    &\chi_iY^i &&~~~\mapsto~~~ J^{-,A} &&~~\coloneqq~~\left(\tfrac{1}{N}\left( - J_{\delta} - J_\psi \right)\psi^A+\partial {\psi^A}\right)e^{-\frac{1}{N}(\phi + \delta)} ~.
\end{alignedat}
\end{equation}
Finally, the additional generators are realised by starting with $X_1X_2\cdots X_N$ and $Y^1Y^2\cdots Y^N$ and performing the substitutions $X_i\leftrightarrow \chi_i$, $Y^i \leftrightarrow \xi^i$. We obtain operators of the form,\footnote{To avoid confusion, we remind the reader that here $A,B,\cdots$ denote fundamental indices of the $\mathrm{SL}(N)_o$ outer automorphism symmetry.}
\begin{equation}
    \begin{alignedat}{3}
        &W &&~~\coloneqq~~ &&e^{\phi+\delta}~,\\
        &W^A &&~~\coloneqq~~  &&\psi^A e^{\frac{N-1}{N}(\phi+\delta)}~,\\
        &W^{AB} &&~~\coloneqq~~ &&\psi^A\psi^B  e^{\frac{N-2}{N}(\phi+\delta)}~,\\
        &\cdots && &&\cdots~.
     \end{alignedat}
\end{equation}
as well as
\begin{equation}\label{eq:An-non-cov}
    \begin{alignedat}{3}
        &Z &&~~\coloneqq~~ &&e^{-\phi-\delta}~,\\
        &Z_A &&~~\coloneqq~~ &&\prod_{i\neq A}\left(-\frac{1}{N}(J_\delta + J_\psi) + \psi^i\widetilde{\psi}_i \right) \widetilde{\psi}_A   e^{-\frac{N-1}{N}(\phi+\delta)}~,\\
        &Z_{AB} &&~~\coloneqq~~ &&\prod_{i\not \in \{A,B\}} \left(-\frac{1}{N}(J_\delta + J_\psi) + \psi^i\widetilde{\psi}_i \right) \widetilde{\psi}_A\widetilde{\psi}_B   e^{-\frac{N-2}{N}(\phi+\delta)}~,\\
        &\cdots && &&\cdots~.
     \end{alignedat}
\end{equation}
To see that these operators enjoy an $\mathrm{SL}(N)_o$ outer automorphism symmetry, it is sufficient to note that the affine currents~\eqref{eq:an-free-currents-2} have been written in a manifestly covariant form, and the monomial operators given here appear in first-order poles when taking repeated OPEs of the singlets $W$ and $Z$ with $J^{-,A}$ and $J^+_A$, respectively. This means, in particular, that the expressions in~\eqref{eq:An-non-cov} that are not transparently covariant nevertheless always can be massaged into a covariant form.

\subsubsection{\label{subsubsec:A3-VOA}Free field realisation for the $A_{2}$ quiver gauge theory}

Finally, we include more detail in the special case $N=3$. In this case, we  can complete the list of $Z$ and $W$-type generators and put them in covariant form as expected. In particular, we have the very simple expressions for the $W$'s,
\begin{equation}
\begin{alignedat}{3}
        &W &&~~\coloneqq~~ &&e^{\phi+\delta}~,\\
        &W^A &&~~\coloneqq~~  &&\psi^A e^{\frac23(\phi+\delta)}~,\\
        &W^{AB} &&~~\coloneqq~~ &&\psi^A\psi^B  e^{\frac13(\phi+\delta)}~,\\
        &W^{ABC} &&~~\coloneqq~~ &&\psi^A\psi^B\psi^C~.
\end{alignedat}
\end{equation}
Moreover, we have the following for the $Z$'s,
\begin{equation}
\begin{alignedat}{3}
        &Z &&\coloneqq~ &&e^{-\phi-\delta}~,\\
        &Z_A &&\coloneqq~ &&\left( \tfrac{1}{9} (J_\delta + J_\psi)(J_\delta -2 J_\psi)\widetilde{\psi}_A -\tfrac{1}{3}(J_\delta +J_\psi)\widetilde{\psi}'_A -\tfrac{1}{12}\epsilon_{ABC} \epsilon^{DEF} \psi^B\psi^C \widetilde{\psi}_D\widetilde{\psi}_E\widetilde{\psi}_F   \right) e^{-\frac{2}{3}(\phi+\delta)}~,\\
        &Z_{AB} &&\coloneqq~ &&\left(-\tfrac{1}{3}\left(J_\delta + J_\psi \right) \widetilde{\psi}_A \widetilde{\psi}_B + \tfrac{1}{6} \epsilon_{ABC}\epsilon^{DEF} \psi^C \widetilde{\psi}_D \widetilde{\psi}_E \widetilde{\psi}_F\right) e^{-\frac13(\phi+\delta)}~,\\
        &Z_{ABC} &&\coloneqq~ &&\widetilde{\psi}_A \widetilde{\psi}_B \widetilde{\psi}_C~.
\end{alignedat}
\end{equation}
The $\mathrm{SL}(3)_o$ outer automorphism symmetry allows us to express the OPEs amongst these somewhat compactly. We report here some of the non-vanishing ones 
\begin{equation}
\begin{split}
    J_{\phi}(z) W^{AB\cdots}(w) &~\sim~ \frac{-{(3-|AB\cdots|)}W^{AB\cdots}}{z-w}~,\\
    J_{\phi}(z) Z_{AB\cdots}(w) &~\sim~ \frac{+(3-|AB\cdots|) Z_{AB\cdots}}{z-w}~.
\end{split}
\end{equation}
Here $|AB\cdots|$ simply denotes the number of indices. Then, we have
\begin{equation}
\begin{split}
    J^{+,A} (z)W^{BC\cdots} (w) &~\sim~ \frac{-W^{ABC\cdots}}{z-w}~,\\
    J^{-}_A (z)W^{BC\cdots} (w) &~\sim~ \frac{\delta_A^B W^{C \cdots} + \delta_{A}^C W^{B \cdots}+\cdots}{z-w}~,
\end{split}
\end{equation}
as well as
\begin{equation}
\begin{split}
    J_A^- (z) Z_{BC\cdots} (w) &~\sim~ \frac{Z_{ABC\cdots}}{z-w}~,\\
    J^{+,A} (z)Z_{BC\cdots} (w) &~\sim~ \frac{\delta^A_B Z_{C \cdots} + \delta^{A}_C Z_{B \cdots}+\cdots}{z-w}~.
\end{split}
\end{equation}
The OPEs amongst cubic operators are slightly more complicated. For example, we have
\begin{equation}
\begin{split}
    W(z) Z(w) &~\sim~ \frac{1}{(z-w)^3} + \frac{J_{\phi}}{(z-w)^2} +\frac{1}{(z-w)} \left(\frac{1}{2}\partial J_\phi
    +\frac{1}{3}J_\phi J_\phi
    +\frac{1}{6}J_\phi J_\psi 
    +\frac{1}{6}J_\psi J_\phi\right)\\
    &
    +\frac{1}{(z-w)}\left(-\frac{1}{6}J_\psi J_\psi +\frac{1}{2} \sum_{A=1}^3 \left( J^{+,A}J^-_{A} - J_{A}^-J^{A,+}\right)\right)~.
\end{split}
\end{equation}
One may checked that the OPEs then close amongst the selected generators.

\subsubsection{\label{subsubsec:stress_tensor}Stress tensors}

In particular, the above VOA does not require a separate stress tensor as a generator. Indeed, there is a Sugawara-type expression for the stress tensor in terms of the various currents. First, we have the stress tensors associated to the free fields,
\begin{equation}
\begin{split}
    T_\phi &~\coloneqq~ -\frac{1}{6}J_\phi J_\phi~,\\
    T_\delta &~\coloneqq~ \frac{1}{6}\left( J_\delta J_\delta -3 J_\delta\right)~,\\
    T_{\psi^i} &~\coloneqq~ \frac{1}{2}\left( {\partial \psi^i}\widetilde{\psi}_i -\psi^i \partial \widetilde{\psi}_i\ \right)~,
\end{split}
\end{equation}
where the background charge for $J_\delta$ ensures that the vertex operators have the correct dimension. It can then be checked explicitly that in terms of the free fields, we have the identity
\begin{equation}
    T_\phi + T_\delta + \sum_{i=1}^3 T_{\psi^i} = T_{\text{Sug}}~,
\end{equation}
where
\begin{equation}
    T_{\text{Sug}} \coloneqq -\frac{1}{6}\left( J_\phi J_\psi+J_\psi J_\phi\right)
    +\frac{1}{3} J_\psi J_\psi
    -\frac{1}{2} \sum_{i=A}^3 \left( J^{+,A} J_{A}^- - J_{A}^- J^{+,A} \right)~.
\end{equation}
This provides the required Sugawara-type construction in the $A_2$ case.

\acknowledgments

The authors would like to thank Dylan Butson, Davide Gaiotto, Andr\'e Henriques, Justin Hilburn, Sujay Nair, and Ben Webster for helpful discussions about this work and related topics. They would also like to thank Andrew Ballin, Thomas Creutzig, Tudor Dimofte and Wenjun Niu for coordinating submission of a paper that studies the same $A$-twisted vertex algebras discussed here. Christopher Beem's work is supported in part by grant \#494786 from the Simons Foundation, by ERC Consolidator Grant \#864828 ``Algebraic Foundations of Supersymmetric Quantum Field Theory'' (SCFTAlg), and by the STFC consolidated grant ST/T000864/1. Andrea Ferrari's work is supported by the EPSRC Grant EP/T004746/1 ``Supersymmetric Gauge Theory and Enumerative Geometry".

\appendix

\section{\label{app:psl22-conv}Conventions for the structure of (affine) \texorpdfstring{$\pslf(2|2)$}{psl(2,2)}}

In this appendix we establish our notation for the Lie superalgebra $\pslf(2|2)$ and its $\mathrm{SL}(2)_o$ group of outer automorphisms. 

\subsection{\texorpdfstring{$\pslf(2|2)$}{psl(2-2)}}

Starting with the larger superalgebra $\glf(2|2)$, we denote by $\ell_a^{\phantom{a}b}$ and $\scriptr_{\alpha}^{\phantom{\alpha}\beta}$ a basis for the $\glf(2)\times \glf(2)$ (Grassmann-)even subalgebra. In reference to the realisation of these subalgebras as symmetries in the $T[\mathrm{SU}(2)]$ theory, we will refer to the subalgebra spanned by the $\ell_a^{\phantom{a}b}$ as the \emph{bosonic subalgebra}, and that spanned by $\scriptr_{\alpha}^{\phantom{\alpha}\beta}$ as the \emph{fermionic subalgebra}. As a basis for the (Grassmann-)odd part of the algebra we take $\vartheta_{a}^{\phantom{a}\alpha}$, $(\widetilde{\vartheta})^b_{\phantom{b}\beta}$. In matrix form, these are assembled as follows,
\begin{equation}\label{eq:psl22matrix}
    M=\left(
    \begin{array}{cc|cc}
        ~\ell_1^{\phantom{1}1}~ & ~\ell_1^{\phantom{1}2}~ & ~\vartheta_1^{\phantom{1}1}~ & ~\vartheta_1^{\phantom{1}2}~ \\
        ~\ell_2^{\phantom{1}1}~ & ~\ell_2^{\phantom{1}2}~ & ~\vartheta_2^{\phantom{1}1}~ & ~\vartheta_2^{\phantom{1}2}~ \\[1pt]
        \hline
        &&&\\[\dimexpr-\normalbaselineskip+2pt]
        ~\tilde{\vartheta}^1_{\phantom{1}1}~ & ~\tilde{\vartheta}^2_{\phantom{2}1}~ & ~\scriptr_1^{\phantom{1}1}~ & ~\scriptr_1^{\phantom{1}2}~ \\
        ~\tilde{\vartheta}^1_{\phantom{1}2}~ & ~\tilde{\vartheta}^2_{\phantom{2}2}~ & ~\scriptr_2^{\phantom{1}1}~ & ~\scriptr_2^{\phantom{1}2}~ \\
    \end{array}
    \right)~.
\end{equation}
The subalgebra which includes only super-traceless combinations of diagonal elements is identified with the simple Lie superalgebra $\slf(2|2)$, and one can define the super-traceless version of the matrix elements of \eqref{eq:psl22matrix} according to
\begin{equation}
    \widehat{M}_a^{\phantom{a}b} = M_{a}^{\phantom{a}b}-\frac{1}{4}\delta_a^{\phantom{a}b}\,{\rm Str}\,M~.
\end{equation}
Further taking the quotient by the ideal generated by the \emph{ordinary} trace defines $\pslf(2|2)$. In this quotient, one has that the $\ell_a^{\phantom{a}b}$ and $\scriptr_\alpha^{\phantom{a}\beta}$ generate $\slf_2$  even subalgebras (in particular, $\ell_1^{\phantom{a}1}+\ell_2^{\phantom{a}2}=\scriptr_1^{\phantom{a}1}+\scriptr_2^{\phantom{a}2}=0$). Adopting identical notation as before for our matrix elements, but \emph{now denoting their avatars in $\pslf(2|2)$}, we have the following non-vanishing brackets,
\begin{equation}\label{eq:psl2_commutators}
    \begin{split}
        [\ell_a^{\phantom{a}b},X_c] &= +\delta_c^b X_a - \frac{1}{2}\delta_a^b X_c~,\\
        [\ell_a^{\phantom{a}b},Y^c] &= -\delta_a^c Y^b + \frac{1}{2}\delta_{a}^b Y^c~,\\
        [\scriptr_\alpha^{\phantom{a}\beta}, X_\gamma] &= +\delta_\gamma^\beta Y^\alpha - \frac{1}{2}\delta_\alpha^\beta X_\gamma~,\\
        [\scriptr_\alpha^{\phantom{a}\beta}, Y^\gamma] &= -\delta^\gamma_\alpha Y^\beta + \frac{1}{2}\delta_\alpha^\beta X_\gamma~,\\
        \{\vartheta_{a}^{\phantom{a}\alpha}, \tilde{\vartheta}^b_{\phantom{b}\beta} \} &= +\delta_{a}^b \scriptr_{\alpha}^{\phantom{a}\beta} + \delta_{\alpha}^\beta \ell_a^{\phantom{a}b}~, 
    \end{split}
\end{equation}
with the other anti-commutators involving $\vartheta_{\alpha}^a$ and $\tilde{\vartheta}_b^\beta$ being zero. Here $X$ is a placeholder for any elements of the algebra with the appropriate index structure.

This superalgebra enjoys the action of an $\mathrm{SL}(2)_o$ outer automorphism group. This acts only on the Grassmann odd part of the algebra, and mixes the upper-right-hand and lower-left-hand blocks of \eqref{eq:psl22matrix}. In particular, if we define the following alternative notation for the odd entries in the lower-left-hand block of the matrix in \ref{eq:psl22matrix},
\begin{equation}
    (\vartheta^{(1)})_{a}^{\phantom{a}\alpha} = \vartheta_{a}^{\phantom{a}\alpha}~,\qquad
    (\vartheta^{(2)})_{a}^{\phantom{a}\alpha} = \epsilon_{ab}(\tilde{\vartheta})^b_{\phantom{b}\beta}\epsilon^{\beta\alpha}~,
\end{equation}
then the $(\vartheta^{(A)})_a^{\phantom{a}\alpha}$ transform in the standard representation of $\mathrm{SL}(2)_o$ with $A$ being a fundamental index. The last (anti)commutator in \eqref{eq:psl2_commutators} then becomes
\begin{equation}
    \{(\vartheta^{(A)})_a^{\phantom{a}\alpha},(\vartheta^{(B)})_b^{\phantom{a}\beta}\}=\epsilon^{AB}\left(\epsilon^{\alpha\beta}\ell_{ab}+\epsilon_{ab}\scriptr^{\alpha\beta}\right)~.
\end{equation}
It is a simple exercise to check that the defining commutation relations are invariant under these outer automorphism rotations.

\subsection{\texorpdfstring{$V^k(\pslf(2|2))$}{V(k)(psl(2|2))}}

For the current algebra corresponding to the $\pslf(2|2)$ Lie superalgebra, we promote $\ell$, $\scriptr$, and $\vartheta$ to affine currents $L$, $J$, and $\Theta$, with non-vanishing singular OPEs (at level $k$) given by
\begin{equation}\label{eq:psl22_opes_all}
\resizebox{\textwidth}{!}{$%
\begin{aligned}
    L_{ab}(z)L_{cd}(w)&~\sim~\frac{\frac12 k(\epsilon_{bc}\epsilon_{ad}+\epsilon_{ac}\epsilon_{bd})}{(z-w)^2} - \frac{\frac12\left(\epsilon_{ac}L_{bd}(w)+\epsilon_{bc}L_{ad}(w)+\epsilon_{ad}L_{bc}(w)+\epsilon_{bd}L_{ac}(w)\right)}{z-w}~,\\
    J^{\alpha\beta}(z)J^{\gamma\delta}(w)&~\sim~\frac{-\frac12 k(\epsilon^{\beta\gamma}\epsilon^{\alpha\delta}+\epsilon^{\alpha\gamma}\epsilon^{\beta\delta})}{(z-w)^2} - \frac{\frac12\left(\epsilon^{\alpha\gamma}J^{\beta\delta}(w)+\epsilon^{\beta\gamma}J^{\alpha\delta}(w)+\epsilon^{\alpha\delta}J^{\beta\gamma}(w)+\epsilon^{\beta\delta}J^{\alpha\gamma}(w)\right)}{z-w}~,\\
    \Theta^{\alpha}_{Aa} \Theta^{\beta}_{Bb} &~\sim~ -\frac{k\epsilon_{AB}\epsilon^{\alpha\beta}\epsilon_{ab}}{(z-w)^2}+\frac{\epsilon_{AB}\left(\epsilon_{ab}J^{\alpha\beta}(w)+\epsilon^{\alpha\beta}L_{ab}(w)\right)}{z-w}~,\\ 
    L_{ab}(z)\Theta^{\gamma}_{Ac}&~\sim~-\frac{\epsilon_{bc}\Theta^{\gamma}_{Aa}(w)+\epsilon_{ac}\Theta^{\gamma}_{Ab}(w)}{z-w}~,\\
    J^{\alpha\beta}(z)\Theta^{\gamma}_{Aa}&~\sim~-\frac{\epsilon^{\beta\gamma}\Theta^{\alpha}_{Aa}(w)+\epsilon^{\alpha\gamma}\Theta^{\beta}_{Aa}(w)}{z-w}~.
\end{aligned}
$}
\end{equation}
Here the trace parts of the two $\slf_2$ subalgebras are absent, so $L_{12}=L_{21}$ and $J^{12}=J^{21}$, and we recall our conventions that $\epsilon^{12}=\epsilon_{21}=1$.

\section{\label{app:lattice_bosons}Lattice vertex algebras and bosonisations of elementary fields}

In this appendix, we establish notation and recall standard results used in the discussion of bosonisation and lattice vertex algebras used in the analysis of VOAs for Abelian gauge theories in three dimensions.

\subsection{\label{subapp:lattice_VA}Lattice vertex algebras}

Given an Abelian Lie algebra $\mathfrak{h}$ equipped with a bilinear form $(,)$, let $\widetilde{\hf}\cong\mathfrak{h}[t,t^{-1}]$ denote the loop algebra for $\mathfrak{h}$, and let $\widehat{\mathfrak{h}}$ denote its central extension according to the given bilinear form, with underlying vector space $\widetilde{\mathfrak{h}}\oplus\mathbb{C}K$ and Lie bracket
\begin{equation}
    [at^m,bt^n]= mK\delta_{m+n,0}(a,b)~.
\end{equation}
We denote the graded components (with loop grading) of $\widehat{\mathfrak{h}}$ by $\mathfrak{h}_n$, $n\in\mathbb{Z}$,
\begin{equation}
    \mathfrak{h}_{n\neq0}={\rm span}\{at^n~|~a\in\mathfrak{h}\}~,\qquad \mathfrak{h}_0={\rm span}\{K; a~|~a\in\mathfrak{h}\}~.
\end{equation}
and we have the grade decomposition
\begin{equation}
\widehat{\mathfrak{h}}=\bigoplus_{i\in\mathbb{Z}}\mathfrak{h}_n~.
\end{equation}
It proves useful to introduce various (Abelian) subalgebras associated with the semi-infinite structure on $\widehat{\mathfrak{h}}$ (\cf\ \cite{voronov1994semi}),
\begin{equation}\label{eq:semi-infinite}
\begin{split}
    \mathfrak{b}_-&=\bigoplus_{i\leqslant0}\mathfrak{h}_i~,\qquad \mathfrak{b}_+=\bigoplus_{i\geqslant0}\mathfrak{h}_i~,\\
    \mathfrak{n}_-&=\bigoplus_{i<0}\mathfrak{h}_i~,\qquad \mathfrak{n}_+=\bigoplus_{i>0}\mathfrak{h}_i~.
\end{split}
\end{equation}
From $\widehat{\mathfrak{h}}$ one constructs the corresponding Heisenberg vertex operator algebra/Abelian affine current algebra with underlying vector space (vacuum module)
\begin{equation}
    V_{\mathfrak{h}} = \cU(\mathfrak{h})\otimes_{\cU(\mathfrak{b}_+)}\mathbb{C}_{1,0}~,
\end{equation}
where $\mathbb{C}_{1,0}$ is the one-dimensional $\mathfrak{b}_+$-module on which $K$ acts as the identity and all other basis elements act as zero. This vertex algebra is strongly generated by affine currents $\{J_{a_i}(z)\}$, where the $a_i$ form a basis for $\mathfrak{h}$, with singular OPEs taking the form
\begin{equation}
    J_a(z)J_b(w)\sim \frac{(a,b)}{(z-w)^2}~.
\end{equation}
This vacuum module is the simplest example of a Fock module for $\hhat$.
\begin{definition}[Fock module]
    For any $\lambda\in\mathfrak{h}^\ast$, we denote by $\mathcal{F}_\lambda$ the \emph{Fock module of charge $\lambda$}, which is the induced module
    \begin{equation}
        \mathcal{F}_\lambda=\cU(\mathfrak{h})\otimes_{\cU(\mathfrak{b}_+)}\mathbb{C}_{1,\lambda}~,
    \end{equation}
    where the action of $\hf_0$ in the one-dimensional representation $\mathbb{C}_{1,\lambda}$ is given by $\lambda$. We will usually abuse notation and write Fock modules with $\lambda\in\mathfrak{h}$ with the corresponding element of $\mathfrak{h}^\ast$ arising via the bilinear form on $\mathfrak{h}$.
\end{definition}

Certain collections of Fock modules can be given the structure of a vertex operator algebra or a vertex operator superalgebra, extending the vertex algebra structure on $V_\mathfrak{h}$.\footnote{This sketch of a definition is primarily intended to establish notation; for complete details on the construction of the vertex algebra structure on $V_L$, see, \emph{e.g.}, \cite{kac1998vertex}.}

\begin{definition}[Lattice vertex (super)algebra]
    For any integral lattice (with respect to the bilinear form $(,)$) $L\subset V$, define the $V_\mathfrak{h}$ module
    \begin{equation}
        M_L = \bigoplus_{\lambda\in L}\mathcal{F}_\lambda=\bigoplus_{\lambda\in L}V_{\mathfrak{h}}\cdot e^{\lambda}~.
    \end{equation}
    Then $M_L$ can be given the structure of a vertex operator superalgebra, the \emph{lattice vertex (super)algebra} $V_L$. In $V_L$, vectors lying in Fock modules corresponding to points in $L$ with even/odd squared length are assigned even/odd Grassmann parity.
\end{definition}

\subsection{\label{subapp:symp_bosonisation}Bosonisation of the symplectic boson}

The symplectic bosons and free fermions appearing in the gauge theory description of $A$-twisted boundary vertex algebras can both be realised as (subalgebras of) lattice vertex algebras of the type just reviewed. We start with the case of the symplectic bosons vertex algebra following \cite{friedan1986conformal} (see also the recent comprehensive discussion in \cite{allen2022bosonic}).

Consider the integral lattice $L_{\texttt{Sb}} = \left(\mathbb{Z}\rho\oplus\mathbb{Z}\sigma\right)$, where $(\rho,\rho)=1$ and $(\sigma,\sigma)=-1$. Let $\mathfrak{h}_{\texttt{Sb}}=L_{\texttt{Sb}}\otimes_{\mathbb{Z}}\mathbb{C}$; the corresponding Heisenberg vertex algebra $V_{\mathfrak{h}_{\texttt{Sb}}}$ is generated by currents $\{J_\rho, J_\sigma\}$ obeying
\begin{equation}
    J_{\rho}(z)J_{\rho}(w) \sim \frac{1}{(z-w)^2}~,\qquad J_{\sigma}(z)J_{\sigma}(w) \sim \frac{-1}{(z-w)^2}~.
\end{equation}
The corresponding lattice vertex superalgebra has underlying vector space
\begin{equation}\label{eq:lattice_from_Fock}
    V_{L_{\texttt{Sb}}}\coloneqq\bigoplus_{\lambda\in L_{\texttt{Sb}}}\mathcal{F}_{\lambda}~,
\end{equation}
where the vacuum state of a given Fock module is represented by the vertex operator $e^\lambda$.

Next consider the isotropic sublattice of $L_{\texttt{Sb}}$ given by $L_{\texttt{Sb}}^!=\mathbb{Z}(\rho_i-\sigma_i)$. Restricting the Fock modules in \eqref{eq:lattice_from_Fock} to lie in $L_{\texttt{Sb}}^!$ then gives the canonical example of a \emph{half-lattice vertex algebra} \cite{berman2002representations}, which we denote by
\begin{equation}\label{eq:L-ISO}
    V_{L_{\texttt{Sb}}}^!=\bigoplus_{\lambda\in L_{\texttt{Sb}}^!}V_{\mathfrak{h}}\cdot e^{\lambda}~.
\end{equation}

The symplectic boson VOA can be realised within $V_{L_{\texttt{Sb}}}^!$ according to either of two embeddings (\cf\ \cite{allen2022bosonic}),
\begin{equation}\label{eq:bosonisation}
    \begin{pmatrix}
    X\\Y
    \end{pmatrix} ~\substack{\longmapsto\\\mathfrak{i}^+}~
    \begin{pmatrix}
    e^{\rho-\sigma}\\ J_\rho e^{-\rho+\sigma}
    \end{pmatrix}~,\qquad\qquad 
    \begin{pmatrix}
    X\\Y
    \end{pmatrix} ~\substack{\longmapsto\\\mathfrak{i}^-}~
    \begin{pmatrix}
    -J_\rho e^{-\rho + \sigma}\\
    e^{\rho-\sigma}
    \end{pmatrix}~,
\end{equation}
with the internal Heisenberg subalgebra generated by the current
\begin{equation}
    (XY) = \pm J_\sigma~.
\end{equation}
Here the sign on the right hand side is correlated with the embedding choice.

It is instructive to note that this half-lattice vertex algebra is isomorphic to the algebra of global sections of the sheaf of chiral differential operators \cite{MelnikovCDO} on $\mathbb{C}^\times$,
\begin{equation}
    V_{L^!_{\texttt{Sb}}}\cong \mathcal{D}^{ch}(\mathbb{C}^\times)~.
\end{equation}
On the other hand, symplectic bosons themselves are the global sections of CDOs on the affine line,
\begin{equation}
    \texttt{Sb}\cong\mathcal{D}^{ch}(\mathbb{C})~.
\end{equation}
The half lattice vertex algebra can thus be thought of as arising by (chiral) localisation from $\mathbb{C}$ to $\mathbb{C}^\times$, and the free field realisation/bosonisation of the symplectic bosons amounts to corresponding restriction of sections.

For both of the embeddings in \eqref{eq:bosonisation}, we can define the \emph{screening charge} $\mathfrak{s}_{\pm}$ acting on $V_{L_{\texttt{Sb}}^!}$,
\begin{equation}
\begin{split}
    \mathfrak{s}_{\pm}:V_{L^!_{\texttt{Sb}}} &\longrightarrow \bigoplus_{\lambda\in \rho+L_{\texttt{Sb}}^!}\mathcal{F}_\lambda\\
    v&\longmapsto \mathrm{Res}\,e^{\rho(z)}v~.
\end{split}
\end{equation}
Then the image of the inclusions $\mathfrak{i}^\pm$ can be characterised as precisely the kernel of the screening charge \cite{allen2022bosonic},
\begin{equation}
    \texttt{Sb}\cong{\rm Im}\,\mathfrak{i}_\pm = {\rm Ker}\,\mathfrak{s}_\pm~.
\end{equation}

\subsection{\label{subapp:ff_bosonisation}Bosonisation of the free fermion}

Complex free fermions can also famously be bosonised. Take the unit lattice $L_{\texttt{Ff}}=\mathbb{Z}\gamma$ with $(\gamma,\gamma)=1$, the corresponding Heisenberg current $J_\gamma$ obeys
\begin{equation}\label{eq:sfermion}
    J_\gamma(z) J_\gamma(w) \sim \frac{1}{(z-w)^2}~.
\end{equation}
Then the lattice vertex superalgebra $V_{L_{\texttt{Ff}}}$ is isomorphic to the free fermion vertex algebra $\texttt{Ff}$,
\begin{equation}
    V_{L_{\texttt{Ff}}}\cong\texttt{Ff}~,
\end{equation}
with the identification given by determined by either of the following (equivalent) identifications of strong generators,
\begin{equation}\label{eq:bosonisation_of_fermions}
    \begin{pmatrix}
    \chi\\\xi
    \end{pmatrix} ~\substack{\longleftrightarrow\\+}~
    \begin{pmatrix}
    e^{\gamma}\\ e^{-\gamma}
    \end{pmatrix}~,\qquad \mathrm{or}\qquad 
    \begin{pmatrix}
    \chi\\\xi
    \end{pmatrix} ~\substack{\longleftrightarrow\\-}~
    \begin{pmatrix}
    e^{-\gamma}\\
    e^{\gamma}
    \end{pmatrix}~.
\end{equation}

As in the symplectic boson case, the internal Heisenberg current of the free fermion vertex algebra is realised with a relative sign depending on the choice of identification convention,
\begin{equation}
    (\xi\chi) = \pm J_\gamma~.
\end{equation}

\section{\label{app:cob}Change of basis and its inverse}

We derive the inverse of the change of basis first introduced in~\eqref{eq:TSU2-COB} in the special case of $T[\mathrm{SU}(2)]$ and then more generally in~\eqref{eq:gen-COB}. This is given by a matrix
\begin{equation}
     P
    = \left(
        \begin{array}{c:c:c}
        ~~Q~~  &  ~~0~~ & ~~Q~~ \\ \hdashline
        &&\\[\dimexpr-\normalbaselineskip+2pt]
        ~~Q~~  &  ~~-Q~~ & ~~0~~ \\ \hdashline
        &&\\[\dimexpr-\normalbaselineskip+2pt]
        ~~\widetilde{Q}~~  & ~~0~~  & ~~0~~ \\ \hdashline
        &&\\[\dimexpr-\normalbaselineskip+2pt]
        ~~0~~  &  ~~-\widetilde{Q}~~ \ & ~~0~~ \\ \hdashline
        &&\\[\dimexpr-\normalbaselineskip+2pt]
        ~~F~~  &  ~~-F~~ & ~~\mathbf{1}_{N\times N}~~
        \end{array} 
    \right)
\end{equation}
where
\begin{equation}\label{eq:Fdef_app}
    F^T = \begin{pmatrix}
        Q \\ 
        \widetilde{Q}
    \end{pmatrix}^{-1}
    \begin{pmatrix}
        Q \\ 
        0
    \end{pmatrix} 
   ~.
\end{equation}
$Q$ and $\widetilde{Q}$ are orthogonal. Let us introduce
\begin{equation}
    A = \begin{pmatrix}
        Q \\ \widetilde{Q}
    \end{pmatrix}~,~B=\begin{pmatrix}
        Q \\ 
        0
    \end{pmatrix}~,~C=\begin{pmatrix}
        0 \\ 
       \widetilde{Q}
    \end{pmatrix}~,
\end{equation}
so that 
\begin{equation}
    F = A^{-1}B~.
\end{equation}
Using the identities 
\begin{equation}
    BC^T=0~,~A^{-1}B=\mathbf{1}_{N\times N}-A^{-1}C
\end{equation}
it is easy to verify that $F$ is idempotent (in fact, $F^T$ is by definition a projector onto the span of $Q$) and that moreover
\begin{equation}
    FF^T=F~.
\end{equation}
Now swapping the second and third row-blocks in $P$ leads to the matrix
\begin{equation}
    \widetilde{P} =
     \left(
        \begin{array}{c:c:c}
        ~~A~~  &  ~~0~~ & ~~B~~ \\ 
        \hdashline
        &&\\[\dimexpr-\normalbaselineskip+2pt]
        ~~B~~  &  ~~-A~~ \ & ~~0~~ \\ \hdashline
        &&\\[\dimexpr-\normalbaselineskip+2pt]
        ~~(A^{-1}B)^T~~  &  ~~-(A^{-1}B)^T~~ & ~~\mathbf{1}_{N\times N}~~
        \end{array} 
    \right)~.
\end{equation}
This can be thought of as a $2\times 2$ block matrix with the bottom-right block being the identity, and the upper-left block (which is itself a $2 \times 2$ block) being invertible. The inverse can then be computed by repeated use of standard formul\ae for $2\times 2$ block-matrices. The result simplifies thanks to the identities
\begin{equation}
   F^T A^{-1} = FA^{-1} = \left(A^{-1}\right)^{(k)}
\end{equation}
where $\left(A^{-1}\right)^{(k)}$ is a matrix obtained from $A^{-1}$ by setting to zero the last $(N-k)$ columns. We find
\begin{equation}
    \widetilde{P}^{-1} =  \left(
        \begin{array}{c:c:c}
        ~~A^{-1}~~  &  ~~\left(A^{-1}\right)^{(k)}~~ & ~~-F~~ \\ 
        \hdashline
        &&\\[\dimexpr-\normalbaselineskip+2pt]
        ~~\left(A^{-1}\right)^{(k)}~~  &  ~~-A^{-1}+\left(A^{-1}\right)^{(k)}~~ \ & ~~-F~~ \\ \hdashline
        &&\\[\dimexpr-\normalbaselineskip+2pt]
        ~~0~~  &  ~~-\left(A^{-1}\right)^{(k)}~~ & ~~\mathbf{1}_{N\times N}~~
        \end{array} 
    \right)~.
\end{equation}
From here, the inverse $P^{-1}$ is recovered by exchanging columns $\{k,k+1,\ldots,N\}$ with columns $\{N+1,N+2,\ldots,2N-k\}$.

\bibliographystyle{ytphys}
\bibliography{3d_N4_VOA}

\providecommand{\href}[2]{#2}\begingroup\raggedright\begin{thebibliography}{10}

\bibitem{Gaiotto:2016wcv}
D.~Gaiotto, ``{Twisted compactifications of 3d $ \mathcal{N} $ = 4 theories and
  conformal blocks},'' \href{http://dx.doi.org/10.1007/JHEP02(2019)061}{{\em
  JHEP} {\bfseries 02} (2019) 061},
  \href{http://arxiv.org/abs/1611.01528}{{\ttfamily arXiv:1611.01528
  [hep-th]}}.

\bibitem{Gaiotto:2017euk}
D.~Gaiotto and M.~Rap\v{c}\'ak, ``{Vertex Algebras at the Corner},''
  \href{http://dx.doi.org/10.1007/JHEP01(2019)160}{{\em JHEP} {\bfseries 01}
  (2019) 160}, \href{http://arxiv.org/abs/1703.00982}{{\ttfamily
  arXiv:1703.00982 [hep-th]}}.

\bibitem{Costello:2018fnz}
K.~Costello and D.~Gaiotto, ``{Vertex Operator Algebras and 3d $ \mathcal{N} $
  = 4 gauge theories},'' \href{http://dx.doi.org/10.1007/JHEP05(2019)018}{{\em
  JHEP} {\bfseries 05} (2019) 018},
  \href{http://arxiv.org/abs/1804.06460}{{\ttfamily arXiv:1804.06460
  [hep-th]}}.

\bibitem{Costello:2018swh}
K.~Costello, T.~Creutzig, and D.~Gaiotto, ``{Higgs and Coulomb branches from
  vertex operator algebras},''
  \href{http://dx.doi.org/10.1007/JHEP03(2019)066}{{\em JHEP} {\bfseries 03}
  (2019) 066}, \href{http://arxiv.org/abs/1811.03958}{{\ttfamily
  arXiv:1811.03958 [hep-th]}}.

\bibitem{Creutzig:2021ext}
T.~Creutzig, T.~Dimofte, N.~Garner, and N.~Geer, ``{A QFT for non-semisimple
  TQFT},'' \href{http://arxiv.org/abs/2112.01559}{{\ttfamily arXiv:2112.01559
  [hep-th]}}.

\bibitem{Ballin:2022rto}
A.~Ballin and W.~Niu, ``{3d Mirror Symmetry and the $\beta\gamma$ VOA},''
  \href{http://arxiv.org/abs/2202.01223}{{\ttfamily arXiv:2202.01223
  [hep-th]}}.

\bibitem{Garner:2022vds}
N.~Garner, ``{Twisted Formalism for 3d $\mathcal{N}=4$ Theories},''
  \href{http://arxiv.org/abs/2204.02997}{{\ttfamily arXiv:2204.02997
  [hep-th]}}.

\bibitem{Garner:2023pmt}
N.~Garner and W.~Niu, ``{Line Operators in $U(1|1)$ Chern-Simons Theory},''
  \href{http://arxiv.org/abs/2304.05414}{{\ttfamily arXiv:2304.05414
  [hep-th]}}.

\bibitem{Yoshida:2023wyt}
Y.~Yoshida, ``{Fermionic extensions of $W$-algebras via 3d $\mathcal{N}=4$
  gauge theories with a boundary},''
  \href{http://arxiv.org/abs/2304.03270}{{\ttfamily arXiv:2304.03270
  [hep-th]}}.

\bibitem{Beem:2019tfp}
C.~Beem, C.~Meneghelli, and L.~Rastelli, ``{Free Field realisations from the
  Higgs Branch},'' \href{http://dx.doi.org/10.1007/JHEP09(2019)058}{{\em JHEP}
  {\bfseries 09} (2019) 058}, \href{http://arxiv.org/abs/1903.07624}{{\ttfamily
  arXiv:1903.07624 [hep-th]}}.

\bibitem{Beem:2021jnm}
C.~Beem and C.~Meneghelli, ``{Geometric free field realization for the
  genus-two class S theory of type a1},''
  \href{http://dx.doi.org/10.1103/PhysRevD.104.065015}{{\em Phys. Rev. D}
  {\bfseries 104} no.~6, (2021) 065015},
  \href{http://arxiv.org/abs/2104.11668}{{\ttfamily arXiv:2104.11668
  [hep-th]}}.

\bibitem{Beem:2022vfz}
C.~Beem and S.~Nair, ``{Free field realisation and the chiral universal
  centraliser},'' \href{http://arxiv.org/abs/2208.09343}{{\ttfamily
  arXiv:2208.09343 [hep-th]}}.

\bibitem{Arakawa:2011vpo}
T.~Arakawa, T.~Kuwabara, and F.~Malikov, ``{Localization of Affine
  W-Algebras},'' \href{http://dx.doi.org/10.1007/s00220-014-2183-x}{{\em
  Commun. Math. Phys.} {\bfseries 335} no.~1, (2015) 143--182},
  \href{http://arxiv.org/abs/1112.0089}{{\ttfamily arXiv:1112.0089 [math.AG]}}.

\bibitem{Arakawa:2010ni}
T.~Arakawa, ``{Associated varieties of modules over Kac-Moody algebras and
  C(2)-cofiniteness of W-algebras},''
  \href{http://arxiv.org/abs/1004.1554}{{\ttfamily arXiv:1004.1554 [math.QA]}}.

\bibitem{zhu1990vertex}
Y.~Zhu, {\em Vertex operator algebras, elliptic functions and modular forms}.
\newblock Yale University, 1990.

\bibitem{Beem:2017ooy}
C.~Beem and L.~Rastelli, ``{Vertex operator algebras, Higgs branches, and
  modular differential equations},''
  \href{http://dx.doi.org/10.1007/JHEP08(2018)114}{{\em JHEP} {\bfseries 08}
  (2018) 114}, \href{http://arxiv.org/abs/1707.07679}{{\ttfamily
  arXiv:1707.07679 [hep-th]}}.

\bibitem{Arakawa:2016hkg}
T.~Arakawa and K.~Kawasetsu, ``{Quasi-lisse vertex algebras and modular linear
  differential equations},'' \href{http://arxiv.org/abs/1610.05865}{{\ttfamily
  arXiv:1610.05865 [math.QA]}}.

\bibitem{bielawski2000geometry}
R.~Bielawski and A.~S. Dancer, ``The geometry and topology of toric
  hyperk{\"a}hler manifolds,'' {\em Communications in Analysis and Geometry}
  {\bfseries 8} no.~4, (2000) 727--759.

\bibitem{proudfoot2004hyperkahler}
N.~J. Proudfoot, {\em Hyperk{\"a}hler analogues of K{\"a}hler quotients}.
\newblock University of California, Berkeley, 2004.

\bibitem{proudfoot2008survey}
N.~J. Proudfoot, ``A survey of hypertoric geometry and topology,'' {\em
  Contemporary Mathematics} {\bfseries 460} (2008) 323--338.

\bibitem{MelnikovCDO}
V.~{Gorbounov}, F.~{Malikov}, and V.~{Schechtman}, ``{Gerbes of chiral
  differential operators},''
  \href{http://dx.doi.org/10.48550/arXiv.math/9906117}{{\em arXiv Mathematics
  e-prints} (June, 1999) math/9906117},
  \href{http://arxiv.org/abs/math/9906117}{{\ttfamily arXiv:math/9906117
  [math.AG]}}.

\bibitem{kuwabara2021vertex}
T.~Kuwabara, ``{Vertex algebras associated with hypertoric varieties},'' {\em
  International Mathematics Research Notices} {\bfseries 2021} no.~18, (2021)
  14316--14378, \href{http://arxiv.org/abs/1706.02203}{{\ttfamily
  arXiv:1706.02203 [math.QA]}}.

\bibitem{Creutzig:2017uxh}
T.~Creutzig and D.~Gaiotto, ``{Vertex Algebras for S-duality},''
  \href{http://dx.doi.org/10.1007/s00220-020-03870-6}{{\em Commun. Math. Phys.}
  {\bfseries 379} no.~3, (2020) 785--845},
  \href{http://arxiv.org/abs/1708.00875}{{\ttfamily arXiv:1708.00875
  [hep-th]}}.

\bibitem{voronov1994semi}
A.~Voronov, ``Semi--infinite cohomology of lie algebras,'' {\em Algebraic
  groups and their generalizations: quantum and infinite--dimensional methods
  (University Park, PA, 1991)} (1994) 401--415.

\bibitem{Arakawa:2018egx}
T.~Arakawa, ``{Chiral algebras of class $\mathcal{S}$ and Moore-Tachikawa
  symplectic varieties},'' \href{http://arxiv.org/abs/1811.01577}{{\ttfamily
  arXiv:1811.01577 [math.RT]}}.

\bibitem{Creutzig:2016ehb}
T.~Creutzig, S.~Kanade, A.~R. Linshaw, and D.~Ridout, ``{Schur-Weyl Duality for
  Heisenberg Cosets},'' \href{http://dx.doi.org/10.1007/s00031-018-9497-2}{{\em
  Transform. Groups} {\bfseries 24} (2019) 301--354},
  \href{http://arxiv.org/abs/1611.00305}{{\ttfamily arXiv:1611.00305
  [math.QA]}}.

\bibitem{Beem:2013sza}
C.~Beem, M.~Lemos, P.~Liendo, W.~Peelaers, L.~Rastelli, and B.~C. van Rees,
  ``{Infinite Chiral Symmetry in Four Dimensions},''
  \href{http://dx.doi.org/10.1007/s00220-014-2272-x}{{\em Commun. Math. Phys.}
  {\bfseries 336} no.~3, (2015) 1359--1433},
  \href{http://arxiv.org/abs/1312.5344}{{\ttfamily arXiv:1312.5344 [hep-th]}}.

\bibitem{Adamovic:2019age}
D.~Adamovi\'c, P.~M\"oseneder~Frajria, P.~Papi, and O.~Per\v{s}e, ``{Conformal
  embeddings in affine vertex superalgebras},''
  \href{http://dx.doi.org/10.1016/j.aim.2019.106918}{{\em Adv. Math.}
  {\bfseries 360} (2020) 106918},
  \href{http://arxiv.org/abs/1903.03794}{{\ttfamily arXiv:1903.03794
  [math.RT]}}.

\bibitem{Berkovits:1999im}
N.~Berkovits, C.~Vafa, and E.~Witten, ``{Conformal field theory of AdS
  background with Ramond-Ramond flux},''
  \href{http://dx.doi.org/10.1088/1126-6708/1999/03/018}{{\em JHEP} {\bfseries
  03} (1999) 018}, \href{http://arxiv.org/abs/hep-th/9902098}{{\ttfamily
  arXiv:hep-th/9902098}}.

\bibitem{Eberhardt:2018ouy}
L.~Eberhardt, M.~R. Gaberdiel, and R.~Gopakumar, ``{The Worldsheet Dual of the
  Symmetric Product CFT},''
  \href{http://dx.doi.org/10.1007/JHEP04(2019)103}{{\em JHEP} {\bfseries 04}
  (2019) 103}, \href{http://arxiv.org/abs/1812.01007}{{\ttfamily
  arXiv:1812.01007 [hep-th]}}.

\bibitem{Gaberdiel:2022als}
M.~R. Gaberdiel, K.~Naderi, and V.~Sriprachyakul, ``{The free field realisation
  of the BVW string},'' \href{http://dx.doi.org/10.1007/JHEP08(2022)274}{{\em
  JHEP} {\bfseries 08} (2022) 274},
  \href{http://arxiv.org/abs/2202.11392}{{\ttfamily arXiv:2202.11392
  [hep-th]}}.

\bibitem{Adamovic:2004zi}
D.~Adamovic, ``{A Construction of admissible A**(1)(1) modules of level
  -4/3},'' \href{http://arxiv.org/abs/math/0401023}{{\ttfamily
  arXiv:math/0401023}}.

\bibitem{berman2002representations}
S.~Berman, C.~Dong, and S.~Tan, ``Representations of a class of lattice type
  vertex algebras,'' {\em Journal of Pure and Applied Algebra} {\bfseries 176}
  no.~1, (2002) 27--47.

\bibitem{Eckhard:2019jgg}
J.~Eckhard, H.~Kim, S.~Schafer-Nameki, and B.~Willett, ``{Higher-Form
  Symmetries, Bethe Vacua, and the 3d-3d Correspondence},''
  \href{http://dx.doi.org/10.1007/JHEP01(2020)101}{{\em JHEP} {\bfseries 01}
  (2020) 101}, \href{http://arxiv.org/abs/1910.14086}{{\ttfamily
  arXiv:1910.14086 [hep-th]}}.

\bibitem{Gang:2018wek}
D.~Gang and K.~Yonekura, ``{Symmetry enhancement and closing of knots in 3d/3d
  correspondence},'' \href{http://dx.doi.org/10.1007/JHEP07(2018)145}{{\em
  JHEP} {\bfseries 07} (2018) 145},
  \href{http://arxiv.org/abs/1803.04009}{{\ttfamily arXiv:1803.04009
  [hep-th]}}.

\bibitem{Bhardwaj:2023zix}
L.~Bhardwaj, M.~Bullimore, A.~E.~V. Ferrari, and S.~Schafer-Nameki,
  ``{Generalized Symmetries and Anomalies of 3d N=4 SCFTs},''
  \href{http://arxiv.org/abs/2301.02249}{{\ttfamily arXiv:2301.02249
  [hep-th]}}.

\bibitem{Kamnitzer:2022zkv}
J.~Kamnitzer, B.~Webster, A.~Weekes, and O.~Yacobi, ``{Lie algebra actions on
  module categories for truncated shifted Yangians},''
  \href{http://arxiv.org/abs/2203.12429}{{\ttfamily arXiv:2203.12429
  [math.RT]}}.

\bibitem{Braverman:2016wma}
A.~Braverman, M.~Finkelberg, and H.~Nakajima, ``{Towards a mathematical
  definition of Coulomb branches of $3$-dimensional $\mathcal{N} = 4$ gauge
  theories, II},'' \href{http://dx.doi.org/10.4310/ATMP.2018.v22.n5.a1}{{\em
  Adv. Theor. Math. Phys.} {\bfseries 22} (2018) 1071--1147},
  \href{http://arxiv.org/abs/1601.03586}{{\ttfamily arXiv:1601.03586
  [math.RT]}}.

\bibitem{Bullimore:2015lsa}
M.~Bullimore, T.~Dimofte, and D.~Gaiotto, ``{The Coulomb Branch of 3d
  ${\mathcal{N}= 4}$ Theories},''
  \href{http://dx.doi.org/10.1007/s00220-017-2903-0}{{\em Commun. Math. Phys.}
  {\bfseries 354} no.~2, (2017) 671--751},
  \href{http://arxiv.org/abs/1503.04817}{{\ttfamily arXiv:1503.04817
  [hep-th]}}.

\bibitem{Bullimore:2016nji}
M.~Bullimore, T.~Dimofte, D.~Gaiotto, and J.~Hilburn, ``{Boundaries, Mirror
  Symmetry, and Symplectic Duality in 3d $\mathcal{N}=4$ Gauge Theory},''
  \href{http://dx.doi.org/10.1007/JHEP10(2016)108}{{\em JHEP} {\bfseries 10}
  (2016) 108}, \href{http://arxiv.org/abs/1603.08382}{{\ttfamily
  arXiv:1603.08382 [hep-th]}}.

\bibitem{Gaiotto:2008ak}
D.~Gaiotto and E.~Witten, ``{S-Duality of Boundary Conditions In N=4 Super
  Yang-Mills Theory},''
  \href{http://dx.doi.org/10.4310/ATMP.2009.v13.n3.a5}{{\em Adv. Theor. Math.
  Phys.} {\bfseries 13} no.~3, (2009) 721--896},
  \href{http://arxiv.org/abs/0807.3720}{{\ttfamily arXiv:0807.3720 [hep-th]}}.

\bibitem{braden2010gale}
T.~Braden, A.~Licata, N.~Proudfoot, and B.~Webster, ``Gale duality and koszul
  duality,'' {\em Advances in Mathematics} {\bfseries 225} no.~4, (2010)
  2002--2049.

\bibitem{ARAKAWA2017157}
T.~Arakawa and A.~Moreau, ``Sheets and associated varieties of affine vertex
  algebras,''
  \href{https://www.sciencedirect.com/science/article/pii/S0001870817302311}{{\em
  Advances in Mathematics} {\bfseries 320} (2017) 157--209}.

\bibitem{kac1998vertex}
V.~G. Kac, {\em Vertex algebras for beginners}.
\newblock American Mathematical Soc., 1998.

\bibitem{friedan1986conformal}
D.~Friedan, E.~Martinec, and S.~Shenker, ``Conformal invariance, supersymmetry
  and string theory,'' {\em Nuclear Physics B} {\bfseries 271} no.~1, (1986)
  93--165.

\bibitem{allen2022bosonic}
R.~Allen and S.~Wood, ``Bosonic ghostbusting: the bosonic ghost vertex algebra
  admits a logarithmic module category with rigid fusion,'' {\em Communications
  in Mathematical Physics} {\bfseries 390} no.~2, (2022) 959--1015.

\end{thebibliography}\endgroup


\begin{thebibliography}}
\def \eb  {\end{thebibliography}

\end{document}